\newtheorem{theorem}{Theorem}
\newtheorem{definition}{Definition}
\newtheorem{lemma}{Lemma}
\newtheorem{proposition}{Proposition}
\newtheorem{corollary}{Corollary}
\newcommand\unit{\hbox{\rm 1\kern-2.8truept l}}
\newcommand\Lform{{\mathcal{L}}\kern-7.56pt\raise1.0pt\hbox{$-$}}
\newcommand{\initsp}{\mathsf{h}}
\newcommand{\tr}{{\mathsf{tr}}}
\newcommand{\flip}{\mathsf{F}}
\begin{document}


\title{Infinite Dimensional Choi-Jamio\l kowski States and Time Reversed Quantum Markov Semigroups}

\author{Jorge R. Bola$\tilde{\textrm{n}}$os-Serv\'in$^{*}$ and Roberto Quezada$^{\dagger}$ \\ \\ 
Universidad Aut\'onoma Metropolitana, Iztapalapa Campus \\ 
Av. San Rafael Atlixco 186, Col. Vicentina\\ 09340 Iztapalapa D.F., Mexico. \\ 
$^*$E-mail: kajito@gmail.com, $^{\dagger}$E-mail: roqb@xanum.uam.mx }

\maketitle

\begin{abstract} We propose a definition of infinite dimensional Choi-Jamio\l kowski state associated with a completely positive trace preserving map. We introduce the notion of $\Theta$-KMS adjoint of a quantum Markov semigroup, which is identified with the time reversed semigroup. The break down of $\Theta$-KMS symmetry (or $\Theta$-standard quantum detailed balance in the sense of Fagnola-Umanit$\grave{\textrm{a}} $\cite{franco2}) is measured by means of the von Neumann relative entropy of the Choi-Jamio\l kowski states associated with the semigroup and its $\Theta$-KMS adjoint.    
\end{abstract}


\section{Introduction}

Starting with the work of Agarwal \cite{agarwal}, several notions of quantum detailed ba\-lan\-ce for quantum Markov semigroups (QMS) have been proposed. Roughly speaking, all of these conditions are based on a notion of dual or adjoint. Indeed, given a uniformly continuous QMS ${\mathcal T}=({\mathcal T}_{t})_{t\geq 0}$ with a faithful invariant state $\rho$ and $s\in[0, \frac{1}{2}]$, the $s$-adjoint QMS ${\mathcal T}^{(s)} = ({\mathcal T}^{(s)}_{t})_{t\geq 0}$ is defined by the duality relation 
\begin{eqnarray}
\tr\Big(\rho^{s} x \rho^{1-s} {\mathcal T}_{t}(y)\Big) = \tr\Big(\rho^{s} {\mathcal T}^{(s)}_{t}(x) \rho^{1-s} y\Big). 
\end{eqnarray} It has been proved \cite{franco3}, that among all $s\in[0,\frac{1}{2}]$, there are two prototypical  values: $s=0$ and $s=\frac{1}{2}$. The case $s=\frac{1}{2}$ corresponds with the KMS symmetry discussed by Petz \cite{denes},  Goldstein and Lindsay \cite{gol-lind}, and Cipriani \cite{fabio1,fabio2}, i.e., the QMS ${\mathcal T}$ is KMS symmetric if and only if ${\mathcal T}={\mathcal T}^{(\frac{1}{2})}$. 

While Equilibrium steady states $\rho$ are identified with those satisfying  the above symmetry condition, the break down of this symmetry is identified with the existence of a non-equilibrium steady state $\rho$ and the deviation from equilibrium can be measure by means of a numerical index:  the von Neumann relative entropy of ${\mathcal T}$ and its adjoint QMS, identified with the time reversed semigroup. In fact, the relative entropy of ${\mathcal T}$ and ${\mathcal T}^{(s)}$, $s=0, \frac{1}{2}$, give two different indices that measure deviation from the $s=0$ symmetry (${\mathcal T}={\mathcal T}^{(0)}$) and the deviation from KMS symmetry, respectively. 

Since the von Neumann relative entropy is a function defined on states, the pro\-blem of associating a family of states with a QMS naturally arises. Choi-Jamio\l kowski states associated with trace preserving completely positive maps are well defined and understood in finite dimension. Taking advantage of this correspondence, in our previous work \cite{b-q},  we associated with a QMS and its KMS adjoint the corresponding families of their Choi-Jamio\l kowski states to compute explicitly the relative entropy and entropy production rate for circulant QMS. Apart from the recent work of Holevo \cite{holevo}, infinite dimensional Choi-Jamio\l kowski states have not been studied. Unfortunately Holevo's definition is not the suitable notion in our approach. One of the main aims of this work is to provide an appropriate definition of infinite dimensional Choi-Jamio\l  kowski states as functions of a fixed reference state $\rho$, that we call $\rho$-Choi-Jamio\l kowski states and allow us to define the von Neumann relative entropy and entropy production of a QMS and its adjoint. The reference state $\rho$ helps to control the divergences arising in infinite dimension. We remark that in finite dimension our $\rho$-Choi-Jamio\l kowski states reduces to the usual ones when $\rho$ coincides with the normalized unit. 

The condition of quantum detailed balance discussed by Agarwal includes a typical quantum feature, that of parity of observables under a time reversal operation $\Theta$, namely a linear anti-homomorphism acting on observables $x$ such that $\Theta(xy)=\Theta(y)\Theta(x)$ and being a $*$-map ($\Theta(x^*)=\Theta(x)^*$), that satisfies $\Theta^{2}=I$.  An observable $x$ is even if $\Theta(x)=x$ and it is odd if $\Theta(x)=-x$. As we will see in Theorem \ref{change-of-basis-j-states} below, this reversing operation $\Theta$ and, hence, parity of observables, is intrinsic to the quantum mechanical model; it results from the very fact that the arena of quantum mechanical models is a complex Hilbert space. Also a reversing operation naturally appears in the approach of Accardi-Mohari \cite{acc-mo}, to the study of time reflected Markov processes. Therefore it is natural to incorporate it in the above mentioned symmetry conditions. In this direction Fagnola and Umanit$\grave{\textrm a}$ introduced the notion of $\Theta$-Standard Quantum Detailed Balance ($\Theta$-SQDB), as the $\Theta$-symmetry condition ${\mathcal T}^{(\frac{1}{2})} = \Theta\circ {\mathcal T} \circ\Theta$. Of course another $\Theta$-symmetry condition can be defined using the $s=0$ adjoint ${\mathcal T}^{(0)}$, instead of ${\mathcal T}^{(\frac{1}{2})}$, i.e., ${\mathcal T}^{(0)} = \Theta\circ {\mathcal T} \circ\Theta$. Deviation from these $\Theta$-symmetry conditions can be measured by the von Neumann relative entropy of ${\mathcal T}$ and suitable defined $\Theta$-adjoint semigroups. But no notion of $\Theta$-KMS adjoint can be found in the literature. To fulfill this gap we introduce the $\Theta$-KMS adjoint defined as the unique QMS ${\mathcal T}^{\Theta} = ({\mathcal T}^{\Theta}_{t})_{t\geq 0}$ satisfying the following $\Theta$-KMS duality relation 
\begin{eqnarray}\label{theta-kms-adjoint}
\tr\Big(\rho^{\frac{1}{2}}\Theta(x^*)\rho^{\frac{1}{2}}{\mathcal T}_{t}(y)\Big)= \tr\Big(\rho^{\frac{1}{2}}\Theta({\mathcal T}^{\Theta}_{t}(x^*)) \rho^{\frac{1}{2}} y \Big), \;  \forall \;  x,y \in {\mathcal B}(\initsp).  
\end{eqnarray} We shall use the $\Theta$-KMS adjoint to study deviation from $\Theta$-SQDB, that is con\-si\-de\-red the most natural quantum extension of the classical detailed balance condition, by means of the relative entropy of ${\mathcal T}$ and ${\mathcal T}^{\Theta}$. One could also define the $\Theta$-adjoint for the $s=0$ case and the corresponding relative entropy. 

We shall prove some remarkable properties of the von Neumann relative entropy of a uniformly continuous QMS ${\mathcal T}$ and its $\Theta$-KMS adjoint ${\mathcal T}^{\Theta}$, and study the corresponding notion of entropy production. Our approach gives a general scheme that can be applied to study properties of the relative entropy of any pair of QMS, and measure the deviation from any other of the symmetry conditions mentioned above. Our main observation is that the $\Theta$-KMS adjoint is the most natural quantum extension of the classical time reversal semigroup, since its family of $\rho$-Choi-Jamio\l kowski states provide the densities of the backward states associated with $\Theta$-SQDB. Our approach works well in any separable initial Hilbert space $\initsp$ and it reduces to the approach outlined by Fagnola and Rebolledo \cite{fag-reb}, for finite dimensional $\initsp$.  We stress that our approach allows us to prove that the backward state's density is given by the $\rho$-Choi-Jamio\l kowski states $\mathcal J_\rho (\mathcal T^\Theta_{*t})$, $t\geq 0$, associated with the $\Theta$-KMS adjoint semigroup, for any separable Hilbert space $\initsp$.

In addition to the symmetry conditions mentioned above, there are other well known characterizations of equilibrium steady states: Boltzmann-Gibbs prescription, quantum detailed balance in the sense of Kossakowski, Frigerio, Gorini and Verri \cite{kfgv,fg} and the KMS condition, among others. Modified versions of these equilibrium conditions such as non-linear Boltzmann-Gibbs prescription, weighted detailed balance and local KMS condition, respectively, have been discussed recently in \cite{a-f-q} These modifications allow one to include, beside the equilibrium, non-equilibrium steady states associated with quantum currents describing the flow of energy from the environment to the system.

After some preliminaries presented in Section \ref{preliminaries}, in Section \ref{choi-jamiolkowski-states} we define our infinite dimensional $\rho$-Choi-Jamio\l kowski states and study some of its remarkable properties. Later, in Section \ref{entropy-production}, we prove that the von Neumann relative entropy of the $\rho$-Choi-Jamio\l kowski states associated with ${\mathcal T}$ and ${\mathcal T}^{\Theta}$ is independent of the orthonormal basis used to define these states, define our notion of entropy production rate and deduce an explicit formula to compute it. Finally, in Section \ref{example}, we use our formula to compute the entropy production rate for the class of circulant QMS introduced in our previous work \cite{b-q}.

\section{Preliminaries}\label{preliminaries}

By $\initsp$ with denote a separable Hilbert space endowed with an inner product $\langle \cdot, \cdot \rangle$. As usual, the von Neumann algebra of all bounded operators on $\initsp$ will be denoted by $\mathcal B(\initsp)$, while the Banach space of all finite trace operators, endowed with the trace norm $\|\eta\|_{1}=\tr |\eta|$, will be denote by $L_1(\initsp)$.  Along this work we use an anti-unitary operator $\theta$, for the sake of completeness we recall its definition and some of its properties.
\subsection{Anti-unitary Operators}
\begin{definition} A bijective, anti-linear operator $\theta:\initsp \longrightarrow \initsp $ is called anti-unitary if
\begin{align*}\langle \theta u, \theta v \rangle=\langle v,u\rangle, \ \ \ \text{ for all } x,y \in \initsp .\end{align*} 
\end{definition}

It is immediate from the definition that anti-unitary operators are bounded operators. Even more, they are antilineal isometries, and so they send orthonormal bases on orthonormal bases. The most used anti-unitary operators in physics are those satisfying $\theta^2=\unit$, a property that we assume from now on. 

The following properties are straightforward.

\begin{proposition}An anti-unitary operator $\theta$ has the following properties:
\begin{enumerate}
\item[(i)] Its adjoint $\theta^*$ is also antilinear and it is defined by $\langle u, \theta v \rangle = \langle v, \theta^* u \rangle.$ If $\theta^2= \unit$, then $\theta=\theta^*$.
\item[(ii)] $\theta \theta^* =\theta^* \theta =\unit$.
\item[(iii)] $\theta x \theta$ is a linear operator satisfying $(\theta x \theta)^*= \theta^* x^* \theta^*$. If $\theta^2=\unit$, then $(\theta x \theta)^*=\theta x^*\theta $.
\item[(iv)] The composition of two anti-unitary operators is unitary.
\item[(v)]The composition of a unitary and an anti-unitary operator is an anti-unitary operator.
\item[(vi)] Each anti-unitary operator $\theta$ is the composition of an unitary and a conjugation w.r.t.  an  orthonormal basis.
\end{enumerate} \end{proposition}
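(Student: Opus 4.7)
The proof is essentially a matter of bookkeeping: all six items follow from antilinearity of $\theta$, the isometry identity $\langle \theta u,\theta v\rangle=\langle v,u\rangle$, and a single application of the Riesz representation theorem in (i). My plan is to establish (i) carefully, derive (ii) as an immediate consequence, and then let (iii)--(vi) fall out by type-counting and routine verification.

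For (i), I would fix $u\in\initsp$ and observe that the functional $v\mapsto\langle\theta v,u\rangle$ is linear in $v$ (antilinear composed with conjugate-linear). By Riesz there is a unique vector, which I call $\theta^{*}u$, satisfying $\langle\theta^{*}u,v\rangle=\langle\theta v,u\rangle$; taking complex conjugates gives the stated identity $\langle u,\theta v\rangle=\langle v,\theta^{*}u\rangle$. Antilinearity of $u\mapsto\theta^{*}u$ is then read off from the conjugate-linear dependence of $\langle\theta v,u\rangle$ on $u$. To deduce $\theta^{*}=\theta$ under $\theta^{2}=\unit$, I would substitute $\theta v$ for $v$ in the defining identity and use the isometry property, obtaining $\theta\theta^{*}=\unit$ and hence $\theta^{*}=\theta^{-1}=\theta$.

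Item (ii) is then almost automatic: from $\langle\theta^{*}u,v\rangle=\langle\theta v,u\rangle$ I compute $\langle\theta^{*}\theta v,w\rangle=\langle\theta w,\theta v\rangle=\langle v,w\rangle$ for every $w$, giving $\theta^{*}\theta=\unit$; the companion identity follows by a symmetric argument (or by bijectivity of $\theta$). For (iii), I would use parity counting: $\theta x\theta$ is antilinear-linear-antilinear, hence linear, and the adjoint formula $(\theta x\theta)^{*}=\theta^{*}x^{*}\theta^{*}$ falls out by chasing the pairing $\langle u,\theta x\theta v\rangle$ through the definitions twice, applied to $x$ and to $x^{*}$; the simplification under $\theta^{2}=\unit$ is then a direct appeal to (i). Items (iv) and (v) are obtained by the same type-count (antilinear $\circ$ antilinear is linear, unitary $\circ$ antilinear is antilinear) together with the fact that a composition of isometries is an isometry.

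For (vi), I would fix any orthonormal basis $\{e_{n}\}$ of $\initsp$ and let $C$ denote the coordinate conjugation $\sum c_{n}e_{n}\mapsto\sum\overline{c_{n}}e_{n}$, which is itself anti-unitary and satisfies $C^{2}=\unit$. Then $U:=\theta C$ is a composition of two anti-unitary operators, hence unitary by (iv), and $\theta=UC$ is the required decomposition. The only genuinely delicate step in this whole proposition is (i): one must set up the Riesz functional on the correctly conjugated side so that the resulting $\theta^{*}$ is antilinear rather than linear; once this is pinned down, everything else reduces to symbolic manipulation, which is why the authors describe the properties as \emph{straightforward}.
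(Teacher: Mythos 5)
Your proof is correct, and since the paper dispenses with this proposition entirely by declaring the properties ``straightforward,'' your argument simply supplies the standard details the authors had in mind: Riesz representation for the antilinear adjoint in (i), parity counting for (iii)--(v), and the factorization $\theta=(\theta C)C$ through a basis conjugation $C$ with $C^{2}=\unit$ for (vi). No discrepancy with the paper arises because there is no proof in the paper to diverge from.
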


Due to $(vi)$, when dealing with an anti-unitary operator and an orthonormal basis $\{e_i\}_i$, up to a unitary transformation we can identify $\theta$ with the conjugation w.r.t. $\{e_i\}_i$. So that $\theta e_{i} = e_{i}$ and for $u=\sum_{i}u_{i} e_{i}$, $\theta u=\sum_{i} \bar{u}_{i} e_{i}$.

As we have seen, one needs to be careful when operating with anti-unitary operators, since their behaviour can be rather  counter-intuitive.

\begin{proposition}\label{theta-proy}Let $\theta$ be the anti-unitary operator of conjugation w.r.t. the orthonormal basis $\{e_i\}_i$. The following properties hold:
\begin{itemize}  
\item[(i)] $\theta | e_i \rangle \langle e_j | \theta = |\theta e_i \rangle \langle \theta e_j |=| e_i \rangle \langle e_j |$ 
\item[(ii)] $\theta | e_i \rangle \langle e_j |=| e_i \rangle \langle e_j | \theta$ \end{itemize}  \end{proposition}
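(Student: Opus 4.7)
The plan is to reduce both identities to a direct computation on an arbitrary vector $v=\sum_{k}v_{k}e_{k}$, exploiting the two things that characterize $\theta$ in the basis $\{e_{i}\}_{i}$: namely $\theta e_{i}=e_{i}$ and $\theta v=\sum_{k}\bar v_{k} e_{k}$, together with $\theta^{2}=\unit$.

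For (i), I would first establish the more transparent general fact that for any two vectors $u,v\in \initsp$,
\begin{equation*}
\theta\, |u\rangle\langle v|\,\theta \;=\; |\theta u\rangle\langle \theta v|.
\end{equation*}
This I would get by applying the left-hand side to an arbitrary $w$: unfolding gives $\theta(\langle v,\theta w\rangle\, u)=\overline{\langle v,\theta w\rangle}\,\theta u$, and then the defining relation $\langle \theta a,\theta b\rangle=\langle b,a\rangle$ (applied with $a=w$, $b=v$, using $\theta^{2}=\unit$) rewrites $\overline{\langle v,\theta w\rangle}=\langle \theta w,v\rangle=\langle \theta v,w\rangle$. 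Hence the right-hand side equals $\langle \theta v, w\rangle\,\theta u=|\theta u\rangle\langle \theta v|\,w$. Specializing to $u=e_{i}$, $v=e_{j}$ and using $\theta e_{i}=e_{i}$, $\theta e_{j}=e_{j}$ yields the two equalities claimed in (i).

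For (ii), I would simply multiply the identity $\theta |e_{i}\rangle\langle e_{j}|\theta=|e_{i}\rangle\langle e_{j}|$ on the right by $\theta$ and use $\theta^{2}=\unit$, which immediately gives $\theta|e_{i}\rangle\langle e_{j}|=|e_{i}\rangle\langle e_{j}|\theta$.

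The only real subtlety, and the place I would be most careful, is in the very first calculation: because $\theta$ is antilinear, pulling it inside a ket-bra forces one to take the complex conjugate of $\langle v,\theta w\rangle$, and converting this back to a usual bra requires correctly invoking $\langle \theta a,\theta b\rangle=\langle b,a\rangle$ (rather than the more familiar unitary rule). Once this step is executed correctly, the rest is purely formal.
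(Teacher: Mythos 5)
Your proof is correct. The paper states this proposition without proof (treating it as one of the ``straightforward'' consequences of the definition of an anti-unitary operator), and your direct computation --- establishing $\theta|u\rangle\langle v|\theta=|\theta u\rangle\langle\theta v|$ for arbitrary vectors via the identity $\overline{\langle v,\theta w\rangle}=\langle\theta v,w\rangle$, then specializing to $u=e_i$, $v=e_j$ and right-multiplying by $\theta$ for part (ii) --- is exactly the verification the authors leave to the reader, with the antilinearity handled correctly at the one step where it matters.
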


\subsection{Standard quantum detailed balance with a reversing operation}
Associated with an anti-unitary operator $\theta$ is a reversing operation on the ob\-ser\-va\-bles: $\Theta(x)=\theta x^* \theta$. This reversing operation allows us to incorporate in the KMS symmetry, typical quantum notions such as that of parity of observables. From now on the KMS adjoint QMS ${\mathcal T}^{(\frac{1}{2})}$ will be denoted simply by ${\mathcal T}'$.
\begin{definition}\label{bal-det-theta} A uniformly continuous QMS $(\mathcal T)_{t\geq0}$  with a faithful invariant state $\rho$ and a KMS adjoint semigroup $(\mathcal T^\prime)_{t\geq0}$, generated by ${\mathcal L}'$,  satisfies a Standard Quantum Detailed balance condition with respect to the reversing operation $\Theta$ ($\Theta$-SQDB) if
\begin{align*} \mathcal T_t^\prime=\Theta \circ \mathcal T_t \circ \Theta. \end{align*}
\end{definition} 

\subsection{Circulant quantum Markov semigroups}\label{circ-qms}
Circulant QMS where introduced in our previous work \cite{b-q}. We recall here the main properties of this class of semigroups as well as some well known results about circulant matrices.

A circulant matrix $Q=(q_{ij})_{0\leq i,j\leq p-1}$ with $p\in{\mathbb N}$ is a $p\times p$ complex matrix satisfying $q_{ij}=\alpha(j-i \; \textrm{mod} \, p)$, for some vector $\alpha\in{\mathbb C}^{p}$. The primary permutation matrix $J_p$ is defined as $J_p =\sum_{i} |e_{i}\rangle \langle e_{i+1}|$, where all index numbers must be considered as elements in the group ${\mathbb Z}_{p}$. 
 
\begin{definition} \label{circulant-cp-map} 
\begin{itemize}
\item[(i)] Let $J_p$ be the primary permutation matrix. The CP linear map defined on the space of $p\times p$ complex matrices $\mathcal M_p (\mathbb C)$ by \[\Phi_{*} (x)=\sum_{i=0}^{p-1} \alpha(p-i) J^{i}_p x J^{*i}_p,\] for some $\alpha(i)\geq 0$ is called circulant CP map. 

\item[(ii)] Let $J_{p},J_{q}$ be the primary matrices for $p,q\in{\mathbb N}$. The CP linear map defined on $\mathcal M_p (\mathbb C)\otimes \mathcal M_q (\mathbb C)$ by \[\Phi_{*} (x)=\sum_{i=0,j=0}^{p-1,q-1} \alpha(p-i,q-j) (J^{i}_{p}\otimes J^{j}_{q})  x (J^{i}_{p}\otimes J^{j}_{q})^{*},\] for some $\alpha(i,j)\geq 0$ is called block circulant CP map. 
\end{itemize}

We will call simply \textbf{circulant} any one of these CP maps. 
\end{definition}

Consider the discrete time Markov chain on the abelian group $\mathbb Z_p$ (respectively  ${\mathbb Z}_{p}\times {\mathbb Z}_{q}$) associated with a given probability distribution $\alpha:{\mathbb Z}_{p}\mapsto [0,1]$ $\big(\alpha:{\mathbb Z}_{p}\times {\mathbb Z}_{q}\mapsto [0,1]\big)$ with $\sum_{i}\alpha(i)=1$ $(\sum_{i,j}\alpha(i,j)=1)$. If we set $\alpha(0)=0$ $\big(\alpha(0,0)=0\big)$, then the corresponding \textit{bi-stochastic} transition probabilities matrix \begin{align*}&\Pi =\sum_{i}\alpha(i) (J_{p}^{i}),\\ \Big(&\textrm{respectively}, \; \Pi =\sum_{i,j}\alpha(i,j) (J_{p}^{i}\otimes J_{q}^{j})\Big),\end{align*} with $J_{p}$ the primary permutation matrix, can be considered as the transition probability  matrix of the embedded Markov chain of the continuous time Markov chain with infinitesimal generator $Q=\Pi-\unit$, where $\unit$ denotes the identity matrix in ${\mathcal M}_{p}(\mathbb C)$ (respectively ${\mathcal M}_{p}(\mathbb C)\otimes {\mathcal M}_{q}(\mathbb C)$). Clearly, $Q$ is a circulant (respectively, block circulant with circulant blocks) matrix. We shall consider the quantum extensions, in pre-dual representation, 
\begin{align}\label{cycle-rep-phi} 
&\Phi_{*}(x)=\sum_{i\in{\mathbb Z}_p}\alpha(p-i) J_{p}^{i}x J_{p}^{*i}\\ \Big(&\Phi_{*}(x)=\sum_{(i,j)\in{\mathbb Z}_{p}\times{\mathbb Z}_{q}}\alpha(p-i,q-j) (J_{p}^{i}\otimes J_{q}^{j})x(J_{p}^{i}\otimes J_{q}^{j})^{*}\Big)\nonumber , 
\end{align} and 

\begin{eqnarray}\label{cycle-rep-gen}
{\mathcal L}_{*}(x)=\Phi_{*}(x)-x, 
\end{eqnarray} of $\Pi$ and $Q$, respectively, with $x\in {\mathcal M}_{p}(\mathbb C)$ $\Big(x\in{\mathcal M}_{p}(\mathbb C)\otimes {\mathcal M}_{q}(\mathbb C)\Big)$. According to Definition  \ref{circulant-cp-map}, $\Phi_{*}$ is a circulant CP map. We call ${\mathcal L}_{*}$ a circulant GKSL generator and circulant QMS the semigroup generated by ${\mathcal L}_{*}$. 

The set of circulant matrices is an abelian sub-algebra of ${\mathcal M}_{p}(\mathbb C)$, whose e\-le\-ments are simultaneously diagonalized by the discrete Fourier transform $F_p =\frac{1}{\sqrt{p}}\sum_{0\leq k,l\leq p-1}\omega^{kl} |e_k \rangle \langle e_l |$, where $\omega$ is a primitive $p$-root of identity. Indeed we have the following result proven in \cite{b-q} 

\begin{theorem}\label{prop-circ-matrix}
If $Q=\sum_{i,j} \alpha(i,j) J_{p}^{i}\otimes J_{q}^{j},$ then 
\begin{itemize}  
\item[(i)]  \[(F_{p}\otimes F_{q})Q(F_{p}\otimes F_{q})^{*}=\sum_{k,l} \lambda_{k,l} |e_{k}\otimes e_{l}\rangle\langle e_{k}\otimes e_{l}|,\] with $\lambda_{k l}=\sum_{i,j}\alpha(i,j)\overline{\omega}_{p}^{ik}\overline{\omega}_{q}^{jl}$, and 
 
\item[(ii)] \[e^{t Q}= \frac{1}{pq}\sum_{i,j,m,n}\Phi_{m-i, n-j}(t)|e_{i}\otimes e_{j}\rangle \langle e_{m}\otimes e_{n}|,\] with $\Phi_{i,j}(t)=\sum_{k,l}\omega_{p}^{ik}\omega_{q}^{jl} e^{t\lambda_{kl}}$. 

\end{itemize}
\end{theorem}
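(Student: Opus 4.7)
The plan is to reduce everything to a single computation: the Fourier matrix $F_p$ simultaneously diagonalizes every power of the primary permutation matrix $J_p$. Once this is in hand, parts (i) and (ii) both follow from elementary bookkeeping.

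First I would verify that $F_p J_p^i F_p^* = \sum_k \overline{\omega}_p^{ik}\,|e_k\rangle\langle e_k|$. Using $J_p e_m = e_{m-1}$ (indices in $\mathbb Z_p$), so that $J_p^i e_m = e_{m-i}$, I apply both sides to a basis vector: expanding $F_p^* e_m = \frac{1}{\sqrt p}\sum_l \overline{\omega}_p^{ml} e_l$, I get $F_p J_p^i F_p^* e_m = \frac{1}{p}\sum_{k,l} \overline{\omega}_p^{ml}\omega_p^{k(l-i)} e_k$. The inner sum over $l$ reduces by the orthogonality relation $\sum_l \omega_p^{l(k-m)}=p\delta_{km}$ to $\overline{\omega}_p^{mi}\,e_m$, as desired. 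The analogous identity holds for $F_q J_q^j F_q^*$, and passing to the tensor product gives
\begin{align*}
(F_p\otimes F_q)(J_p^i\otimes J_q^j)(F_p\otimes F_q)^* = \sum_{k,l} \overline{\omega}_p^{ik}\overline{\omega}_q^{jl}\,|e_k\otimes e_l\rangle\langle e_k\otimes e_l|.
\end{align*}
Part (i) then follows by linearity after multiplying by $\alpha(i,j)$ and summing over $i,j$, which assembles the claimed eigenvalues $\lambda_{kl}=\sum_{i,j}\alpha(i,j)\overline{\omega}_p^{ik}\overline{\omega}_q^{jl}$.

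For (ii), the spectral form from (i) gives $e^{tQ}=(F_p\otimes F_q)^*\,D(t)\,(F_p\otimes F_q)$, where $D(t)=\sum_{k,l}e^{t\lambda_{kl}}\,|e_k\otimes e_l\rangle\langle e_k\otimes e_l|$. Plugging in the explicit expressions $(F_p\otimes F_q)^*|e_k\otimes e_l\rangle = \frac{1}{\sqrt{pq}}\sum_{i,j}\overline{\omega}_p^{ki}\overline{\omega}_q^{lj}\,|e_i\otimes e_j\rangle$ and the corresponding formula for the bra yields
\begin{align*}
e^{tQ}=\frac{1}{pq}\sum_{i,j,m,n}\Big(\sum_{k,l}\omega_p^{k(m-i)}\omega_q^{l(n-j)}e^{t\lambda_{kl}}\Big)|e_i\otimes e_j\rangle\langle e_m\otimes e_n|,
\end{align*}
and the inner sum is exactly $\Phi_{m-i,n-j}(t)$ by definition.

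There is no real obstacle here: the only thing to watch is the sign convention on the exponents of $\omega_p,\omega_q$ (because $J_p$ acts as the shift $e_m\mapsto e_{m-1}$, the eigenvalues carry conjugated characters), and the bookkeeping of indices across the two tensor factors. Everything else is a standard application of finite Fourier analysis on $\mathbb Z_p\times\mathbb Z_q$.
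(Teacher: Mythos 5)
Your proof is correct: the diagonalization $F_p J_p^i F_p^* = \sum_k \overline{\omega}_p^{ik}|e_k\rangle\langle e_k|$ is verified with the right sign convention (since $J_p e_m = e_{m-1}$), part (i) follows by linearity, and the index bookkeeping in part (ii) correctly recovers $\Phi_{m-i,n-j}(t)$ as the inner sum. Note that this paper does not actually prove the theorem --- it is quoted from the authors' earlier work \cite{b-q} --- but your argument is the standard simultaneous-diagonalization-by-DFT computation one would expect there, and it is complete.
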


The above properties of circulant matrices reflect in corresponding properties of circulant QMS, also proven in \cite{b-q} 

\begin{theorem}\label{prop-circ-qms}
The semigroup $\mathcal T_*$ generated by ${\mathcal L}_{*}(x)=\Phi_{*}(x)-x$, with $\Phi_{*}(x)=\sum_{(i,j)\in{\mathbb Z}_{p}\times{\mathbb Z}_{q}}\alpha(p-i,q-j) (J_{p}^{i}\otimes J_{q}^{j})x(J_{p}^{i}\otimes J_{q}^{j})^{*}$ satisfies the following  properties:
\begin{itemize}

\item[(i)] The explicit action of the semigroup is given by 
\begin{align*}\mathcal T_{t}(x)=\frac{1}{pq}\sum_{m,n} \Phi_{m,n}(t) (J^m_p \otimes J^n_q) x (J^m_p \otimes J^n_q)^*,\end{align*} where $\Phi_{m,n}(t)$ are the matrix elements of $e^{tQ}$ in Theorem \ref{prop-circ-matrix}.  

\item[(ii)] The $\rho$-Choi-Jamio\l kowski state associated with $\mathcal T_{*t}$, the invariant $\rho=\frac{1}{pq} \unit$ and the cannonical basis $\{ e_i \otimes e_j \}_{i,j}$ of $\mathbb C^p \otimes \mathbb C^q$ is given by

\begin{align*}
\mathcal J_\rho(\mathcal T_{*t})= \frac{1}{pq} \sum_{m,n}\Phi_{m,n}(t) |u_{mn}\rangle\langle u_{mn}|,
\end{align*} where $u_{mn}= \frac{1}{\sqrt{pq}} \displaystyle\sum_{ij} (e_{i}\otimes e_{j})\otimes(e_{m+i}\otimes e_{n+j})$.
\end{itemize}
\end{theorem}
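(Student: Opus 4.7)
Both parts hinge on the fact that the conjugation maps $\Psi_{m,n}(x):=(J_p^m\otimes J_q^n)\,x\,(J_p^m\otimes J_q^n)^{*}$ form a commuting family of $*$-automorphisms satisfying $\Psi_{m,n}\Psi_{m',n'}=\Psi_{m+m',n+n'}$; that is, they realize a unitary representation of the finite abelian group $\mathbb Z_p\times\mathbb Z_q$ on $\mathcal B(\mathbb C^p\otimes\mathbb C^q)$. For part (i), the plan is to rewrite $\mathcal L_{*}=\sum_{i,j}\alpha(p-i,q-j)\Psi_{i,j}-\mathrm{id}$ and then Fourier-diagonalise it over the group. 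Define the spectral projectors
\[
\widehat\Psi_{r,s}\;:=\;\frac{1}{pq}\sum_{m,n}\omega_p^{mr}\omega_q^{ns}\,\Psi_{m,n},
\]
and use character orthogonality to verify $\widehat\Psi_{r,s}\widehat\Psi_{r',s'}=\delta_{(r,s),(r',s')}\widehat\Psi_{r,s}$, $\sum_{r,s}\widehat\Psi_{r,s}=\mathrm{id}$, and $\Psi_{i,j}\widehat\Psi_{r,s}=\omega_p^{-ir}\omega_q^{-js}\widehat\Psi_{r,s}$. A direct substitution (together with the reindexing $i\mapsto p-i$, $j\mapsto q-j$) then shows that the eigenvalue of $\mathcal L_{*}$ on $\widehat\Psi_{r,s}$ is precisely the $\lambda_{r,s}$ of Theorem~\ref{prop-circ-matrix}(i). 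Exponentiating and inverting the Fourier transform yields
\[
e^{t\mathcal L_{*}}\;=\;\sum_{r,s}e^{t\lambda_{r,s}}\widehat\Psi_{r,s}\;=\;\frac{1}{pq}\sum_{m,n}\Phi_{m,n}(t)\,\Psi_{m,n},
\]
which is the stated explicit action. This is, in essence, a lifting of Theorem~\ref{prop-circ-matrix}(ii) from the classical circulant matrix $Q$ to its quantum double-sided conjugation realization.

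For part (ii), I would simply substitute the expansion produced in (i) into the definition of the $\rho$-Choi-Jamio\l kowski state from Section~\ref{choi-jamiolkowski-states}. With $\rho=\frac{1}{pq}\unit$, every $\sqrt\rho$-dressing collapses to the scalar $1/\sqrt{pq}$, so the state becomes a fixed linear combination, weighted by $\Phi_{m,n}(t)/(pq)$, of operators of the form
\[
\Bigl(\sum_{i,j}|e_{i}\otimes e_{j}\rangle\otimes(J_p^m\otimes J_q^n)|e_{i}\otimes e_{j}\rangle\Bigr)\Bigl(\sum_{k,l}\langle e_{k}\otimes e_{l}|\otimes\langle e_{k}\otimes e_{l}|(J_p^m\otimes J_q^n)^{*}\Bigr).
\]
Each such operator is manifestly rank one: $|\xi_{m,n}\rangle\langle \xi_{m,n}|$, with $\xi_{m,n}=\sum_{i,j}(e_i\otimes e_j)\otimes (J_p^m\otimes J_q^n)(e_i\otimes e_j)$. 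Using the shift action of the primary permutation matrices on the canonical basis, one identifies $\xi_{m,n}$ with $\sqrt{pq}\,u_{m,n}$ (up to the orientation of the shift, which is fixed by the convention $J_p=\sum_i|e_i\rangle\langle e_{i+1}|$), and the normalisations combine to give exactly the claimed formula for $\mathcal J_\rho(\mathcal T_{*t})$.

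\textbf{Main obstacle.} The real technical content sits in part (i), and is purely a matter of bookkeeping: one must track carefully how the shift $i\mapsto p-i$ in the convolution coefficients $\alpha(p-i,q-j)$ interacts with the characters $\omega_p^{ir}\omega_q^{js}$ of the spectral projectors $\widehat\Psi_{r,s}$, in order to match the eigenvalues produced by the Fourier decomposition with the $\lambda_{r,s}$ of Theorem~\ref{prop-circ-matrix}(i) exactly on the nose. Once this bookkeeping is in place, the rest of (i) is routine functional calculus in a finite-dimensional commutative algebra, and (ii) reduces to a single substitution plus the identification of the shifted maximally entangled vectors $u_{m,n}=(\unit\otimes J_p^{*m}\otimes J_q^{*n})\Omega$ with $\Omega=\frac{1}{\sqrt{pq}}\sum_{ij}(e_i\otimes e_j)\otimes(e_i\otimes e_j)$.
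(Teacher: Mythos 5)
The paper itself gives no proof of Theorem \ref{prop-circ-qms}: it is recalled in the Preliminaries as a result ``also proven in \cite{b-q}'', so there is no in-paper argument to compare against. Your harmonic-analysis strategy is the natural way to prove it and its skeleton is sound: the conjugations $\Psi_{m,n}$ do realize a representation of ${\mathbb Z}_p\times{\mathbb Z}_q$ on ${\mathcal M}_p({\mathbb C})\otimes{\mathcal M}_q({\mathbb C})$, the projectors $\widehat\Psi_{r,s}$ resolve the identity and diagonalise ${\mathcal L}_{*}$, and once (i) is in hand part (ii) really is a one-line substitution, since for $\rho=\frac{1}{pq}\unit$ one has $\omega_\rho=|\Omega\rangle\langle\Omega|$ with $\Omega=\frac{1}{\sqrt{pq}}\sum_{i,j}(e_i\otimes e_j)\otimes(e_i\otimes e_j)$.

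However, the ``bookkeeping'' you defer is not cosmetic, and two of your intermediate claims are off by exactly the orientation that matters, because $\alpha$ is not assumed symmetric. First, combining $\Psi_{i,j}\widehat\Psi_{r,s}=\omega_p^{-ir}\omega_q^{-js}\widehat\Psi_{r,s}$ with the reindexing $i\mapsto p-i$, $j\mapsto q-j$ gives the eigenvalue $\sum_{i,j}\alpha(i,j)\omega_p^{ir}\omega_q^{js}-1=\overline{\lambda_{r,s}}=\lambda_{-r,-s}$, \emph{not} $\lambda_{r,s}$; inverting the Fourier transform then yields
\[
e^{t{\mathcal L}_{*}}=\frac{1}{pq}\sum_{m,n}\Phi_{m,n}(t)\,\Psi_{-m,-n},
\]
i.e.\ Kraus operators $J_p^{*m}\otimes J_q^{*n}$ for the \emph{predual} semigroup, whereas the displayed formula with $J_p^{m}\otimes J_q^{n}$ is its trace-dual ${\mathcal T}_t$ acting on observables. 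Second, your vector $\xi_{m,n}$ built from $J_p^{m}\otimes J_q^{n}$ equals $\sqrt{pq}\,u_{-m,-n}$, not $\sqrt{pq}\,u_{m,n}$; only the version with $J_p^{*m}\otimes J_q^{*n}$ (which you write at the very end) gives $u_{m,n}$ and hence the stated ${\mathcal J}_\rho({\mathcal T}_{*t})=\frac{1}{pq}\sum_{m,n}\Phi_{m,n}(t)|u_{mn}\rangle\langle u_{mn}|$. These two slips compensate each other only if you decide explicitly which of ${\mathcal T}_t$, ${\mathcal T}_{*t}$ formula (i) describes (they differ by $J\leftrightarrow J^{*}$, equivalently by $\alpha(m,n)\leftrightarrow\alpha(p-m,q-n)$) and propagate that choice consistently into (ii). With that single clarification made explicit, your argument closes; as written, applying (i) verbatim to compute $(\unit\otimes{\mathcal T}_{*t})(\omega_\rho)$ would produce $\Phi_{p-m,q-n}(t)$ in place of $\Phi_{m,n}(t)$.
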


\section{Infinite dimensional Choi-Jamio\l kowski states}\label{choi-jamiolkowski-states}
Given a state $\rho$ and a fixed orthornormal basis $\{e_i\}_i$ of $\initsp$, let $\omega_\rho:\textrm{span}\{u\otimes v: u, v \in\initsp \} \subset \initsp \otimes \initsp \longrightarrow \initsp \otimes \initsp$ be the linear, possibly unbounded, operator defined on simple tensors $u\otimes v$, $u, v\in \initsp$ by

\begin{eqnarray}\label{def-omega-rho-simple}
\begin{aligned}
\omega_\rho u\otimes v &=\sum_{i,j} \langle e_j \otimes \rho^\frac{1}{2} e_j , u \otimes v\rangle e_i \otimes \rho^\frac{1}{2} e_i.
\end{aligned}
\end{eqnarray} and extended to $\text{span}\{u \otimes v : u,v\in\initsp \}$ by linearity. Since 

 \begin{align*}||\omega_\rho u \otimes v||^2&=\Big \langle\omega_\rho u \otimes v,\omega_\rho u \otimes v \Big \rangle =\sum_{i,j}|\langle \rho^\frac{1}{2} e_k, e_l \rangle|^2 \langle e_j,e_i \rangle \langle \rho^\frac{1}{2} e_j , \rho^\frac{1}{2}e_i \rangle \\
&=\sum_i |\langle \rho^\frac{1}{2} u, v \rangle|^2 \langle e_i , \rho e_i \rangle =|\langle \rho^\frac{1}{2} u, v \rangle|^2 <\infty,\end{align*}  we see that $\omega_\rho u \otimes v$ is an element of $\initsp \otimes \initsp$ for any $u, v\in \initsp$. Hence $\omega_{\rho}$ is well defined on $\textrm{span}\{u\otimes v : u, v \in\initsp\}$. In particular, its is well defined on $\textrm{span}\{e_{k}\otimes e_{l}\}_{k,l}$. We will prove that $\omega_{\rho}$ is a state.

\begin{lemma} The following statements for $\omega_\rho$ hold true:
\begin{itemize}
\item[(i)]  The operator $\omega_\rho$ is bounded on $\textrm{span}\{e_{k}\otimes e_{l}\}_{k,l}$ and it can be continuously extended to the whole space $\initsp \otimes \initsp$. We use the same symbol $\omega_{\rho}$ to denote this extension.
\item[(ii)]$\omega_\rho$ is a positive operator on $\initsp \otimes \initsp$.
\item[(iii)] $\omega_\rho$ is a state.
\end{itemize}
\end{lemma}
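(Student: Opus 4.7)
The plan is to identify $\omega_{\rho}$ with the rank-one operator $|\xi\rangle\langle\xi|$ on $\initsp\otimes\initsp$, where
\[
\xi:=\sum_{i}e_{i}\otimes \rho^{\frac{1}{2}}e_{i}.
\]
Once this identification is established, all three claims reduce to standard properties of a rank-one projection onto a unit vector.

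The first step is to verify that $\xi$ is a well-defined unit vector in $\initsp\otimes\initsp$. The partial sums $\xi_{N}=\sum_{i=1}^{N} e_{i}\otimes \rho^{\frac{1}{2}}e_{i}$ form a Cauchy sequence, since for $N>M$,
\[
\|\xi_{N}-\xi_{M}\|^{2}=\sum_{i=M+1}^{N}\|e_{i}\otimes \rho^{\frac{1}{2}}e_{i}\|^{2}=\sum_{i=M+1}^{N}\langle e_{i},\rho e_{i}\rangle,
\]
which is a tail of the convergent series $\tr(\rho)=1$; the same computation with $M=0$, $N\to\infty$ also gives $\|\xi\|^{2}=1$. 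Next I would rewrite \eqref{def-omega-rho-simple} by noting that its right-hand side factors as
\[
\omega_{\rho}(u\otimes v)=\Big(\sum_{j}\langle e_{j}\otimes \rho^{\frac{1}{2}}e_{j},u\otimes v\rangle\Big)\Big(\sum_{i}e_{i}\otimes \rho^{\frac{1}{2}}e_{i}\Big)=\langle \xi,u\otimes v\rangle\,\xi,
\]
where the second equality uses continuity of the inner product together with the norm convergence of $\xi_{N}\to\xi$. By linear extension, $\omega_{\rho}$ agrees on $\textrm{span}\{u\otimes v\}$---and in particular on $\textrm{span}\{e_{k}\otimes e_{l}\}_{k,l}$---with the bounded rank-one operator $|\xi\rangle\langle\xi|$ of operator norm $\|\xi\|^{2}=1$. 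This proves (i): $\omega_{\rho}$ admits a unique continuous extension to all of $\initsp\otimes\initsp$, namely $|\xi\rangle\langle\xi|$.

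Items (ii) and (iii) are then immediate from the rank-one representation. For every $\eta\in\initsp\otimes\initsp$, $\langle\eta,\omega_{\rho}\eta\rangle=|\langle\xi,\eta\rangle|^{2}\geq 0$, giving positivity; and $\tr(\omega_{\rho})=\|\xi\|^{2}=1$ combined with positivity shows that $\omega_{\rho}$ is a (pure) state. The only substantive point in the whole argument is the convergence of the formal series defining $\xi$; this is precisely where the trace-class (state) assumption on $\rho$ enters, preventing the construction from diverging in infinite dimension. Once $\xi$ is a bona fide unit vector, everything else reduces to one-dimensional-range bookkeeping.
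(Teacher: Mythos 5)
Your proof is correct, and it takes a cleaner route than the paper's. The paper never names the vector $\xi=\sum_i e_i\otimes\rho^{1/2}e_i$; instead it works entirely with matrix elements on $\textrm{span}\{e_k\otimes e_l\}_{k,l}$, proving boundedness via the estimate $\|\omega_\rho u\|^2=\bigl|\sum_{k,l}u_{kl}\langle\rho^{1/2}e_k,e_l\rangle\bigr|^2\leq\|u\|^2\,\tr\rho$ (Cauchy--Schwarz plus Bessel), positivity by exhibiting $\langle u,\omega_\rho u\rangle$ as a squared modulus, and normalization by a separate trace computation using Parseval. Your identification $\omega_\rho=|\xi\rangle\langle\xi|$ makes explicit the rank-one structure that is only implicit in those computations (indeed $\langle\xi,e_k\otimes e_l\rangle=\langle\rho^{1/2}e_k,e_l\rangle$, so the paper's squared-modulus expressions are exactly $|\langle\xi,u\rangle|^2$), and it buys you more: the exact operator norm $\|\omega_\rho\|=1$ rather than the bound $\tr\rho$, the observation that $\omega_\rho$ is a \emph{pure} state (the standard purification of $\rho$), and a one-line proof of each of (i)--(iii). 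The only points requiring care in your argument are the ones you flag: unconditional convergence of $\sum_i e_i\otimes\rho^{1/2}e_i$ (orthogonal terms with $\sum_i\|e_i\otimes\rho^{1/2}e_i\|^2=\tr\rho<\infty$) and the absolute convergence of the scalar series $\sum_j\langle e_j\otimes\rho^{1/2}e_j,u\otimes v\rangle$ justifying the factorization of the double sum; both are handled. No gaps.
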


\begin{proof}
Observe that if $u\in\text{span}\{e_k\otimes e_l\}$, $u=\sum_{l,k}u_{lk} e_k \otimes e_l$, then 
\begin{align*}\|\omega_\rho u\|^2&=\sum_{k,l,k^\prime,l^\prime}\overline{u}_{kl} u_{k^\prime l^\prime} \langle \omega_\rho e_k \otimes e_l, \omega_{\rho} e_{k^\prime} \otimes e_{l^\prime} \rangle =\sum_{k,l,k^\prime,l^\prime}\overline{u}_{kl} u_{k^\prime l^\prime}\overline{\langle \rho^\frac{1}{2}e_k,e_l\rangle} \langle \rho^\frac{1}{2}e_{k^\prime},e_{l^\prime}\rangle  \\
&=\left|\sum_{k,l} u_{kl} \langle \rho^\frac{1}{2} e_k,e_l \rangle \right|^2 \leq \Big(\sum_{k,l} |u_{kl}|^2\Big) \Big(\sum_{k,l}|\langle \rho^\frac{1}{2} e_k,e_l\rangle |^2\Big)\leq \|u\|^2 \text{tr}\rho.
\end{align*} Therefore $\omega_\rho$ is bounded on $\textrm{span}\{e_{k}\otimes e_{l}\}_{k,l}$ and, by the density of $\textrm{span}\{e_{k}\otimes e_{l}\}_{k,l}$ in $\initsp\otimes\initsp$, it can be continuously extended to the whole space $\initsp \otimes \initsp$. For simplicity of notation, we use the same symbol to denote this extension. This proves $(i)$. 

To prove $(ii)$, take any $u\in \text{span}\{e_k \otimes e_l\}$, then
\begin{align*}\langle u,\omega_\rho u\rangle &= \sum_{k,l,k^\prime,l^\prime} \overline{u}_{kl} u_{k^\prime l^\prime} \langle e_k \otimes e_l , \omega_\rho e_{k^\prime}\otimes e_{l^\prime}  \rangle\\
& =\sum_{\substack{k,l,k^\prime,l^\prime\\i,j}}  \overline{u}_{kl} u_{k^\prime l^\prime}  \delta_{k^\prime j} \delta_{ki}\langle e_{l^\prime},\rho^{\frac{1}{2}} e_j \rangle \langle e_l,\rho^{\frac{1}{2}}e_i\rangle \\
&=\Big| \sum_{k,l} u_{kl} \langle e_l,\rho^{\frac{1}{2}} e_k \rangle \Big|^2 \geq 0.
\end{align*} By a density argument, we get $\langle v ,\omega_\rho v \rangle \geq 0$ for any $v \in \initsp \otimes \initsp$. 

It remains to prove that $\tr({\omega_\rho})=1$. Observe that, 
 \begin{align*} \tr(\omega_\rho)&=\displaystyle\sum_{k,l} \langle e_k \otimes e_l , \omega_\rho e_k \otimes e_l \rangle= \sum_{k,l,i,j}\langle e_k \otimes e_l, |e_i \otimes \rho^{\frac{1}{2}}e_i \rangle \langle e_j \otimes \rho^{\frac{1}{2}} e_j | e_k \otimes e_l \rangle \\
&=\sum_{k,l,i,j} \delta_{jk} \delta_{ki}\langle e_l, \rho^{\frac{1}{2}} e_i \rangle  \langle \rho^{\frac{1}{2}} e_j, e_l \rangle 
=\sum_l \langle\rho^\frac{1}{2}e_l,\rho^\frac{1}{2}e_l \rangle = \tr(\rho)=1, \end{align*}  where Parseval's identity was used in the last equalities. This finishes the proof.
\end{proof}

Let $\Phi_*$ be a bounded CP operator on $L_1(\initsp)$. Denote by ${\mathcal J}_{\rho}(\Phi_{*})$ the sesquilinear form  defined on the span of the orthonormal basis $\{e_k \otimes e_l\}_{kl}$ by means of 
\begin{align*}
{\mathcal J}_{\rho}(\Phi_{*})(u, v)  := \sum_{k,l,k',l'} \overline{u}_{kl} v_{k'l'}  \big\langle e_{l}, \Phi_{*}(|\rho^{\frac{1}{2}} e_k \rangle \langle \rho^{\frac{1}{2}} e_{k'} |)  e_{l'}\big\rangle  \end{align*}  if $u=\sum_{k,l} u_{kl} e_{k}\otimes e_{l}$ and $v=\sum_{k',l'} v_{k'l'} e_{k'}\otimes e_{l'}$.

\begin{lemma}\label{rho-jamio} The associated quadratic form ${\mathcal J}_{\rho}(\Phi_{*})(u,u)$ is bounded and positive on $ \text{span}\{e_k \otimes e_l\}_{kl}$. Hence, the sesquilinear form ${\mathcal J}_{\rho}(\Phi_{*})$ has a bounded extension to the whole $\initsp\otimes \initsp$. 
\end{lemma}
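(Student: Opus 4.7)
The plan is to establish both claims via a Kraus-type computation. Fix $u = \sum_{k,l} u_{kl}\, e_k\otimes e_l$ as a finite combination in $\text{span}\{e_k\otimes e_l\}_{kl}$, and set $a_l := \sum_k \bar u_{kl}\,\rho^{\frac{1}{2}}e_k \in \initsp$. Factoring $\Phi_*$ out of the coefficient sum, and using $\sum_{k,k'}\bar u_{kl}u_{k'l'}|\rho^{\frac12}e_k\rangle\langle\rho^{\frac12}e_{k'}| = |a_l\rangle\langle a_{l'}|$, the definition rewrites as
\begin{equation*}
\mathcal J_\rho(\Phi_*)(u,u) = \sum_{l,l'}\big\langle e_l,\,\Phi_*(|a_l\rangle\langle a_{l'}|)\,e_{l'}\big\rangle .
\end{equation*}
Since $\Phi_*$ is bounded and CP on $L_1(\initsp)$, its Banach-space dual is a normal CP map on $\mathcal B(\initsp)$; by the Kraus--Stinespring theorem there exist bounded operators $\{L_n\}$ with $\Phi_*(x) = \sum_n L_n x L_n^*$ (trace-norm convergent on trace-class $x$) and $\sum_n L_n^* L_n = \Phi(\unit)$ a bounded operator. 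Substituting yields
\begin{equation*}
\mathcal J_\rho(\Phi_*)(u,u) = \sum_n \Big|\sum_l \langle e_l,\, L_n a_l\rangle\Big|^2 \ge 0 ,
\end{equation*}
which settles positivity at once.

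For boundedness I rewrite the inner coefficient as $c_n := \sum_{k,l} \bar u_{kl}\,\langle e_l, L_n\rho^{\frac{1}{2}}e_k\rangle$. Cauchy--Schwarz over the double index $(k,l)$ yields $|c_n|^2 \le \|u\|^2\,\|L_n\rho^{\frac{1}{2}}\|_{HS}^2$. Summing over $n$, and using that $\rho^{\frac{1}{2}}$ is Hilbert--Schmidt (since $\rho$ is trace-class) together with $\sum_n L_n^* L_n = \Phi(\unit)$,
\begin{equation*}
\sum_n \|L_n\rho^{\frac{1}{2}}\|_{HS}^2 = \tr\big(\rho^{\frac{1}{2}}\Phi(\unit)\rho^{\frac{1}{2}}\big) = \tr\big(\rho\,\Phi(\unit)\big) \le \|\Phi(\unit)\|_\infty \le \|\Phi_*\| .
\end{equation*}
Hence $\mathcal J_\rho(\Phi_*)(u,u) \le \|\Phi_*\|\,\|u\|^2$ on the dense subspace, establishing boundedness of the quadratic form. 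The positive sesquilinear form $\mathcal J_\rho(\Phi_*)$ then extends uniquely and continuously to $\initsp\otimes\initsp$, either by polarisation or via the standard inequality $|\mathcal J_\rho(\Phi_*)(u,v)| \le \sqrt{\mathcal J_\rho(\Phi_*)(u,u)\,\mathcal J_\rho(\Phi_*)(v,v)}$ valid for any positive sesquilinear form.

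The delicate ingredient is invoking the Kraus decomposition together with its convergence and summability estimates in the infinite-dimensional setting; this is standard for normal CP maps but must be quoted with care, in particular for the identification $\sum_n L_n^* L_n = \Phi(\unit)$ and the resulting bound $\|\Phi(\unit)\|_\infty \le \|\Phi_*\|$. An alternative positivity argument avoiding Kraus would recognise $\mathcal J_\rho(\Phi_*)(u,u) = \langle\Omega,(\Phi_*\otimes\text{id})(|A\rangle\langle A|)\Omega\rangle$ with $|A\rangle := \sum_l a_l\otimes e_l$ and $\Omega := \sum_l e_l\otimes e_l$, so that complete positivity of $\Phi_*$ would immediately yield the sign; however, $\|\Omega\|^2$ grows with the support of $u$, so extracting a uniform operator bound by this route is awkward, which is why I prefer the Kraus argument for the norm estimate.
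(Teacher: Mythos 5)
Your proof is correct and follows essentially the same route as the paper: a Kraus representation $\Phi_*(x)=\sum_n L_n x L_n^*$ turns the quadratic form into the sum of squares $\sum_n|c_n|^2$, giving positivity, and Cauchy--Schwarz over $(k,l)$ plus the summability $\sum_n\|L_n\rho^{\frac{1}{2}}\|_{HS}^2=\tr\big(\Phi_*(\rho)\big)\le\|\Phi_*\|$ gives the bound $\|u\|^2\|\Phi_*\|$, followed by the standard extension of a bounded positive sesquilinear form. The only cosmetic difference is that you phrase the final estimate through the dual map via $\tr\big(\rho\,\Phi(\unit)\big)$, whereas the paper evaluates the same quantity as $\tr\big(\Phi_*(\sigma)\big)$ with $\sigma=\sum_k\rho^{\frac{1}{2}}|e_k\rangle\langle e_k|\rho^{\frac{1}{2}}$ of unit trace.
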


\begin{proof}
For any element in the span of the orthonormal basis $\{e_k \otimes e_l\}_{kl}$, $u=\sum_{k,l} u_{kl} e_k \otimes e_l$, we have 
\begin{eqnarray}\label{sesq-pos}
\begin{aligned}
{\mathcal J}_{\rho}(\Phi_{*}) (u, u) 
&=\sum_{k,l,k',l'} \overline{u}_{kl} u_{k'l'}  \big\langle e_{l}, \Phi_{*}(|\rho^{\frac{1}{2}} e_k \rangle \langle \rho^{\frac{1}{2}} e_{k'} |)  e_{l'}\big\rangle  \\ 
& = \sum_{n} \sum_{k,l,k',l'} \overline{u}_{kl} u_{k'l'}  \big\langle e_{l}, L_{n} |\rho^{\frac{1}{2}} e_k \rangle\langle \rho^{\frac{1}{2}} e_{k'} | L_{n}^{*}  e_{l'}\big\rangle \\ 
&= \sum_{n} \Big|\sum_{k,l} u_{kl} \langle e_{k}, \rho^{\frac{1}{2}} L_{n}^{*} e_{l}\rangle \Big|^{2}. 
\end{aligned} \end{eqnarray}

This shows the positivity of the quadratic form. Now, an application of Cauchy-Schwartz inequality in the last term of (\ref{sesq-pos}), the complete positivity of $\Phi_{*}$ and the fact that, due to the complete positivity of $\Phi_*$,  $||\Phi_*(\sigma)||_1=\text{tr}\Big( \sqrt{\Phi_*(\sigma)^* \Phi_*(\sigma)}\Big)$ $=\text{tr}\Big( \Phi_*(\sigma)\Big)$ if $\sigma$ is positive, yield 
\begin{eqnarray}
\begin{aligned}
|{\mathcal J}_{\rho}(\Phi_{*}) (u, u)| & \leq \sum_{n} \Big(\sum_{kl}|u_{kl}|^2 \Big) \Big(\sum_{kl}|\langle e_k , \rho^{\frac{1}{2}}L_{n}^{*} e_l \rangle|^{2}\Big)\\
&= \| u\|^{2} \sum_{n} \sum_{kl} |\langle e_{k}, \rho^{\frac{1}{2}} L_{n}^{*} e_{l}\rangle|^2 \\
&= \| u\|^{2} \sum_{kl} \sum_{n} \langle e_{k}, \rho^{\frac{1}{2}} L_{n}^{*} e_{l}\rangle  \langle \rho^{\frac{1}{2}} L_{n}^{*} e_{l}, e_k \rangle  \\ 
&=\|u\|^2  \sum_{kl} \Big\langle e_{l}, \Phi_{*}\big(\rho^{\frac{1}{2}}|e_k \rangle \langle e_k| \rho^{\frac{1}{2}}\big) e_l \Big\rangle\\
& = \|u\|^2 \sum_{l} \Big\langle e_{l}, \Phi_{*}\big(\sum_{k}\rho^\frac{1}{2}|e_k \rangle \langle e_k|\rho^\frac{1}{2} \big) e_l \Big\rangle \\ 
&\leq \|u\|^2 \|\Phi_*\|_{\mathcal B(L_1(\initsp))}\left\|\sum_k \rho^\frac{1}{2} |e_k \rangle \langle e_k| \rho^\frac{1}{2}\right\|_1\leq\|u\|^2 \ \|\Phi_*\|_{\mathcal B(L_{1}(\initsp))}.
\end{aligned}
\end{eqnarray}  We have used that the operator $\sigma= \sum_{k}\rho^\frac{1}{2}|e_k \rangle \langle e_k|\rho^\frac{1}{2}$ is clearly positive and has a finite trace; indeed, an application of Parceval's identity yields \[\tr(\sigma)=\sum_{j,k}|\langle e_{j} ,\rho^ {\frac{1}{2}} e_{k}  \rangle|^{2} = \sum_{k}\|\rho^{\frac{1}{2}} e_k \|^{2} = \sum_{k}\langle e_k , \rho e_{k}\rangle = 1.\] 
This proves that the associated quadratic form and hence, by the Cauchy-Schwartz inequality for sesquilinear forms, the sesquilinear form ${\mathcal J}_{\rho}(\Phi_{*})$ is bounded on the span of the orthonormal basis, i.e., 
 \begin{align}\label{continuity-of-J-01}
 |{\mathcal J}_{\rho}(\Phi_{*})(u,v)|\leq \|u\|\ \|v\| \ \|\Phi_*\|_{\mathcal B(\mathcal L_1)}, \; \forall \; u,v \in \textrm{span}\{e_{k}\otimes e_{l}\}_{kl}. 
 \end{align} By a standard argument of density, ${\mathcal J}_{\rho}(\Phi_{*})(u,v)$ can be continuosly extended to the whole space $\initsp \otimes \initsp$ and it has associated a bounded operator, denoted without confusion by ${\mathcal J}_{\rho}(\Phi_{*})$, such that 
\begin{align*}
{\mathcal J}_{\rho}(\Phi_{*})(u, v) = \langle u, {\mathcal J}_{\rho}(\Phi_{*}) v\rangle, 
\end{align*} for all $u,v$ in $\initsp \otimes \initsp$. This operator is positive, everywhere defined and bounded on $\initsp\otimes\initsp$. This proves the lemma.
\end{proof}

The next property will become useful when dealing with the action of $\mathcal J_\rho(\Phi_*)$ on general elements of the tensor product $\initsp \otimes \initsp$.

\begin{lemma} \label{continuidad-L1}Let $e_k$ be any fixed element of the orthonormal basis $\{e_i\}_i$ and let $\theta$ be the anti-unitary operator of conjugation w.r.t. that basis. Then, for every $k$, the map from $\initsp$ into $L_1(\initsp)$ defined by $u \longmapsto |e_k\rangle \langle \theta u |$ is continuous. \end{lemma}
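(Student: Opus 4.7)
The plan is to exploit the rank-one structure of the operator $|e_k\rangle\langle \theta u|$ together with the isometric properties of $\theta$, and show that the map $F(u):=|e_k\rangle\langle \theta u|$ is actually a bounded linear operator from $\initsp$ into $L_1(\initsp)$ of norm one.

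First I would verify linearity, which is not entirely automatic since $\theta$ is anti-linear. For $\alpha, \beta \in \mathbb{C}$ and $u, v \in \initsp$, one has $\theta(\alpha u + \beta v) = \bar\alpha\, \theta u + \bar\beta\, \theta v$. However, the bra $\langle \theta(\alpha u + \beta v)|$ is the functional $\langle \theta(\alpha u + \beta v), \cdot\rangle$, which is conjugate-linear in its first slot, so the two conjugations cancel: $\langle \theta(\alpha u + \beta v), w\rangle = \alpha\langle \theta u, w\rangle + \beta\langle \theta v, w\rangle$ for all $w\in\initsp$. Hence $F(\alpha u + \beta v) = \alpha F(u) + \beta F(v)$, i.e.\ $F$ is genuinely linear.

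Next I would bound $\|F(u)\|_1$ directly. Since $|e_k\rangle\langle\theta u|$ is rank one, the standard identity $\bigl\||a\rangle\langle b|\bigr\|_1 = \|a\|\,\|b\|$ yields $\|F(u)\|_1 = \|e_k\|\,\|\theta u\|$. Using $\|e_k\|=1$ together with the fact that $\theta$ is an anti-linear isometry (so $\|\theta u\|=\|u\|$), this collapses to $\|F(u)\|_1 = \|u\|$. Therefore $F$ is a linear isometry from $\initsp$ into $L_1(\initsp)$, which in particular gives the desired continuity.

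The only subtlety I anticipate is the bookkeeping between the anti-linearity of $\theta$ and the sesquilinearity of the inner product, which must be tracked carefully to see that $F$ is linear rather than anti-linear; beyond that, the argument reduces to the elementary trace-norm formula for rank-one operators, so no serious obstacle is expected.
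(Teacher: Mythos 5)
Your proof is correct and follows essentially the same route as the paper's: both reduce to computing the trace norm of the rank-one operator $|e_k\rangle\langle\theta(u-v)|$, which equals $\|\theta(u-v)\|=\|u-v\|$, so the map is an isometry. Your explicit check that the two conjugations (anti-linearity of $\theta$ and conjugate-linearity of the bra) cancel to give genuine linearity is a point the paper uses implicitly without comment.
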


\begin{proof} Take any  $u,v\in \initsp$. Simple computations yield

\begin{align*}&\big\| |e_k\rangle \langle \theta u| -|e_k\rangle \langle \theta v| \big\|_1 = \| |e_k\rangle \langle \theta (u-v)| \|_1=\text{tr}\Big(\big| \ |\theta((u-v) \rangle \langle e_k|e_k\rangle \langle \theta(u-v)|\ \big|^\frac{1}{2}\Big)\\
&=\text{tr}\Big( |\theta(u-v) \rangle \langle \theta(u-v) |^\frac{1}{2} \Big) =\text{tr}\Big(\|\theta(u-v)\| \Big | \frac{\theta(u-v)}{\|\theta(u-v)\|} \Big\rangle \Big\langle \frac{\theta(u-v)}{\|\theta(u-v)\|} \Big |  \Big) \\
&=\| \theta(u-v) \|. \end{align*} This proves the result.
\end{proof}
\begin{proposition}\label{jota-rho-simple}For any simple tensor $u\otimes v \in \initsp \otimes\initsp$,  
\begin{eqnarray}\label{identity-J}
\begin{aligned}
\mathcal J_{\rho}(\Phi_{*})u\otimes v&=\sum_i e_i \otimes \Phi_*(\rho^\frac{1}{2}|e_i\rangle \langle \theta u| \rho^\frac{1}{2} )v \\
&=\Big(\sum_{i,j} (\unit \otimes \Phi_*) |e_i\rangle \langle e_j| \otimes \rho^{\frac{1}{2}}|e_i\rangle \langle e_j| \rho^\frac{1}{2}\Big) u\otimes v \\
&=(\unit \otimes \Phi_*) (\omega_\rho) u\otimes v.
\end{aligned}
\end{eqnarray} 	
\end{proposition}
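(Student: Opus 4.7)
The plan is to work from right to left, starting with the Kraus-like expansion of $\omega_\rho$ and finishing with the identification of the sesquilinear form. Throughout I use the orthonormal basis $\{e_i\}_i$ and the conjugation $\theta$ with respect to it, and write $u_j := \langle e_j, u\rangle$.

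First I would establish the third equality by expanding $\omega_\rho$ itself. The earlier lemma effectively shows that $\omega_\rho = |\Omega_\rho\rangle\langle\Omega_\rho|$ where $|\Omega_\rho\rangle = \sum_i e_i \otimes \rho^{1/2} e_i$, because the computation $\|\omega_\rho u\otimes v\|^2 = |\langle\rho^{1/2}u,v\rangle|^2$ together with Proposition \ref{theta-proy} identifies the range as the one-dimensional span of $|\Omega_\rho\rangle$. Expanding the outer product then gives $\omega_\rho = \sum_{i,j}|e_i\rangle\langle e_j|\otimes\rho^{1/2}|e_i\rangle\langle e_j|\rho^{1/2}$ as a weakly convergent series, and applying the $L_1$-bounded map $\unit\otimes\Phi_*$ term by term yields the third equality. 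To bridge from the second expression to the first, I apply the operator $(\unit\otimes\Phi_*)(\omega_\rho)$ to $u\otimes v$: the action of $|e_i\rangle\langle e_j|$ on $u$ produces $u_j\,e_i$, so we obtain $\sum_{i,j} u_j\, e_i \otimes \Phi_*(\rho^{1/2}|e_i\rangle\langle e_j|\rho^{1/2})v$.

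Next I would collapse the $j$-sum. Because $\theta u=\sum_j\overline{u_j}\,e_j$, one has $\sum_j u_j\langle e_j|=\langle\theta u|$, so formally $\sum_j u_j\,\rho^{1/2}|e_i\rangle\langle e_j|\rho^{1/2} = \rho^{1/2}|e_i\rangle\langle\theta u|\rho^{1/2}$. To justify passing this sum through $\Phi_*$, I invoke Lemma \ref{continuidad-L1}, which ensures that the partial sums $\sum_{j\le N} u_j |e_i\rangle\langle e_j|$ converge in $L_1(\initsp)$ to $|e_i\rangle\langle\theta u|$; combined with the boundedness of left and right multiplication by $\rho^{1/2}$ on $L_1(\initsp)$ and the continuity of $\Phi_*$ on $L_1$, this permits the interchange, producing $\sum_i e_i \otimes \Phi_*(\rho^{1/2}|e_i\rangle\langle\theta u|\rho^{1/2})v$.

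Finally I would identify this common vector with $\mathcal J_\rho(\Phi_*)(u\otimes v)$. Pair against an arbitrary basis element $e_k\otimes e_l$: expanding the $b$-pairing in the basis and using $|\rho^{1/2}e_k\rangle\langle\rho^{1/2}e_{k'}|=\rho^{1/2}|e_k\rangle\langle e_{k'}|\rho^{1/2}$, direct computation gives
\[
\big\langle e_k\otimes e_l,\ \sum_i e_i\otimes\Phi_*(\rho^{1/2}|e_i\rangle\langle\theta u|\rho^{1/2})v\big\rangle =\sum_{k'}u_{k'}\langle e_l,\Phi_*(\rho^{1/2}|e_k\rangle\langle e_{k'}|\rho^{1/2})v\rangle,
\]
which matches $\mathcal J_\rho(\Phi_*)(e_k\otimes e_l, u\otimes v)$ read off from the defining formula. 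By sesquilinearity this equality extends to all of $\textrm{span}\{e_k\otimes e_l\}$, and then to $\initsp\otimes\initsp$ by the boundedness of $\mathcal J_\rho(\Phi_*)$ established in Lemma \ref{rho-jamio} and continuity of the right-hand sides in $u,v$.

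The main technical obstacle is the interchange of $\Phi_*$ with the infinite $j$-sum in the second step: the rank-one operators $|e_i\rangle\langle e_j|$ do not assemble absolutely in trace norm, but once contracted against the coefficients $u_j$ they do converge in $L_1$, which is exactly the content of Lemma \ref{continuidad-L1}. Everything else is bookkeeping with the basis-conjugation $\theta$ and density.
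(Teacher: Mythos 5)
Your proof is correct, but it runs in the opposite direction from the paper's and is organized around a different key observation. The paper starts from the sesquilinear form that defines $\mathcal J_\rho(\Phi_*)$ (via Lemma \ref{rho-jamio}), truncates $u,v,u^\prime,v^\prime$ to partial sums, and justifies the passage to the limit through a fairly long dominated-convergence estimate built on the Kraus decomposition $\Phi_*(\sigma)=\sum_n L_n\sigma L_n^*$; only afterwards does it peel off the second and third expressions of (\ref{identity-J}) using Lemma \ref{continuidad-L1}. You instead begin at $(\unit\otimes\Phi_*)(\omega_\rho)$ and exploit the fact --- implicit in (\ref{def-omega-rho-simple}) but never used explicitly in the paper --- that $\omega_\rho=|\Omega_\rho\rangle\langle\Omega_\rho|$ is the rank-one projection onto $\Omega_\rho=\sum_i e_i\otimes\rho^{1/2}e_i$. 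This makes the second and third equalities essentially bookkeeping, and the contraction of the $j$-sum via Lemma \ref{continuidad-L1} replaces the paper's domination argument entirely; your closing identification with the sesquilinear form by pairing against basis vectors $e_k\otimes e_l$ is likewise lighter than the paper's pairing against general simple tensors. Two small points deserve tightening. First, ``applying $\unit\otimes\Phi_*$ term by term to a weakly convergent series'' is not by itself legitimate; the clean justification is that the square partial sums $|\Omega_N\rangle\langle\Omega_N|$, with $\Omega_N=\sum_{i\le N}e_i\otimes\rho^{1/2}e_i$, converge to $\omega_\rho$ in $L_1(\initsp\otimes\initsp)$ (the same rank-one continuity as in Lemma \ref{continuidad-L1}), after which boundedness of $\unit\otimes\Phi_*$ on $L_1$ applies. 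Second, the convergence of the outer $i$-sum in the first line should be noted explicitly; it follows from the mutual orthogonality in the first tensor factor of the vectors $e_i\otimes\Phi_*(\rho^{1/2}|e_i\rangle\langle\theta u|\rho^{1/2})v$ together with the convergence already established for the double sum. With those remarks added, your argument is a valid and shorter alternative to the paper's proof.
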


\begin{proof}Let $u\otimes v, u^\prime \otimes v^\prime$ be two arbitrary simple tensors in $\initsp \otimes \initsp$. Let  $u_n = \sum_{1\leq j\leq n} \langle e_{j}, u\rangle e_{j}$, and $v_m,u^\prime_{n^\prime}, v^\prime_{m^\prime}$ denote the respective partial sums of $v, u'$ and $v'$. Then,
\begin{align*} & \big\langle u \otimes v,\mathcal J_{\rho}(\Phi_{*}) u^\prime \otimes v^\prime \big\rangle=\lim_{\substack{n,m\\n^\prime,m^\prime}} \sum_{\substack{k,l\\ k^\prime,l^\prime}} \overline{u}_{k} \overline{v}_{l}u_{k^\prime} v_{l^\prime} \big\langle e_k \otimes e_l ,\mathcal J_{\rho}(\Phi_{*}) e_{k^\prime} \otimes e_{l^\prime} \big\rangle \\
&= \lim_{\substack{n,m\\n^\prime,m^\prime}} \sum_{i,j}\sum_{\substack{k,l\\ k^\prime,l^\prime}} \overline{u}_{k} \overline{v}_{l}u_{k^\prime} v_{l^\prime} \langle e_j,e_{k^\prime} \rangle \langle e_k, e_i \rangle \big\langle e_l, \Phi_*(\rho^\frac{1}{2}|e_i \rangle \langle e_j| \rho^\frac{1}{2}) e_{l^\prime} \big\rangle \\
&=\lim_{\substack{n,m\\n^\prime,m^\prime}} \sum_{i,j} \langle e_j,u^\prime_{n^\prime} \rangle \big \langle u_n \otimes v_m , e_i \otimes \Phi_*(\rho^\frac{1}{2}|e_i \rangle \langle e_j| \rho^\frac{1}{2})v^\prime_{m^\prime} \big\rangle  \label{conv-dom-2}. \end{align*}

An application of Lebesgue's Theorem on Dominated Convergence permits us to  interchange the limits with the infinite sums. The complete positivity of $\Phi_*$, the Cauchy-Schwartz inequality and the well known inequality for a bounded operator $A$ and trace class operator $B$, $\| AB\|_1 \leq \|A\| \ \|B\|_1$, allow us to prove that the general term is dominated by an integrable function of $i,j$. Indeed, 
\begin{align*} & \ \ \ \ |\langle e_j,u^\prime_{n^\prime} \rangle |^2\ | \langle u_n,e_i \rangle |^2\ \Big | \Big\langle v_m, \Phi_*\big(\rho^\frac{1}{2}|e_i \rangle \langle e_j| \rho^\frac{1}{2}\big)v^\prime_{m^\prime} \Big\rangle \Big|^2\\
& =|\langle e_j,u^\prime_{n^\prime} \rangle |^2\ | \langle u_n,e_i \rangle |^2\ \Big| \Big \langle v_m, \sum_l L_l \big| \rho^\frac{1}{2}e_i \big\rangle \big\langle  \rho^\frac{1}{2}e_j \big| L^*_l v^\prime_{m^\prime} \Big \rangle \Big|^2 \\
&= |\langle e_j,u^\prime_{n^\prime} \rangle |^2\ | \langle u_n,e_i \rangle |^2\ \Big|\sum_l \Big \langle L^*_l v_m,\rho^\frac{1}{2}e_i \Big\rangle \Big \langle  \rho^\frac{1}{2}e_j,L^*_l  v^\prime_{m^\prime} \Big  \rangle \Big|^2 \\
&\leq \|u_n\|^2\ \|u^\prime_{n^\prime}\|^2 \Big(\sum_l \big|\big \langle L^*_l v_m, \rho^\frac{1}{2}e_i \big\rangle \big|^2 \Big)\Big(\sum_l \big|\big \langle  \rho^\frac{1}{2}e_j,L^*_l  v^\prime_{m^\prime} \big \rangle \big|^2 \Big)\\
&=\|u_n\|^2\ \|u^\prime_{n^\prime}\|^2 \Big \langle v_m, \Phi_*\big(\rho^\frac{1}{2} |e_i\rangle \langle e_i | \rho^\frac{1}{2} \big) v_m \Big \rangle \Big \langle v^\prime_{m^\prime}, \Phi_*\big(\rho^\frac{1}{2} |e_j\rangle \langle e_j | \rho^\frac{1}{2} \big) v^\prime_{m^\prime} \Big \rangle \\
&=\|u_n\|^2\ \|u^\prime_{n^\prime}\|^2 \text{ tr}\Big(\Phi_*\big(\rho^\frac{1}{2} |e_i\rangle \langle e_i | \rho^\frac{1}{2} \big) |v_m \rangle \langle v_m|  \Big) \text{ tr}\Big(\Phi_*\big(\rho^\frac{1}{2} |e_j\rangle \langle e_j | \rho^\frac{1}{2} \big) |v^\prime_{m^\prime}\rangle \langle v^\prime_{m^\prime}|  \Big) \\
&= \|u_n\|^2\ \|u^\prime_{n^\prime}\|^2\  \Big\|\Phi_*\big(\rho^\frac{1}{2} |e_i\rangle \langle e_i | \rho^\frac{1}{2} \big)|v_m\rangle \langle v_m|\Big\|_1\ \Big\|\Phi_*\big(\rho^\frac{1}{2} |e_j\rangle \langle e_j | \rho^\frac{1}{2} \big) |v^\prime_{m^\prime}\rangle \langle v^\prime_{m^\prime}|\Big\|_1 \\
&\leq \|u_n\|^2\ \|u^\prime_{n^\prime}\|^2\ \||v_m \rangle \langle v_m| \| \ \| |v^\prime_{m^\prime}\rangle \langle v^\prime_{m^\prime}| \| \ \Big\|\Phi_*\big(\rho^\frac{1}{2} |e_i\rangle \langle e_i | \rho^\frac{1}{2} \big)\Big\|_1\ \Big\|\Phi_*\big(\rho^\frac{1}{2} |e_j\rangle \langle e_j | \rho^\frac{1}{2} \big)|\Big\|_1 \\
&\leq \|u\|^2\  \|u^\prime\|^2\ \|v\|^2\ \|v^\prime\|^2\ \|\Phi_*\|^2_{\mathcal B(L_1(\initsp))}\ \  \big\| \rho^\frac{1}{2} |e_i\rangle \langle e_i | \rho^\frac{1}{2}  \big\|_1\ \  \big\|\rho^\frac{1}{2} |e_j\rangle \langle e_j | \rho^\frac{1}{2} \big\|_1 \\
&= \|u\|^2\  \|u^\prime\|^2\ \|v\|^2\ \|v^\prime\|^2\ \|\Phi_*\|^2_{\mathcal B(L_1(\initsp))}\ \ \big\langle e_i, \rho e_i  \big\rangle \ \ \ \  \big\langle e_j, \rho e_j  \big\rangle,
\end{align*} which is clearly integrable as a function of $i,j$.

Taking the limits inside the sum, and recalling that, if $u=\sum_k u_{k} e_k$, then $\theta u = \sum_k \overline{u}_k e_k $,  the following identity holds  \begin{align*}
\langle u \otimes v , {\mathcal J}_{\rho}(\Phi_{*})u^\prime\otimes v^\prime \rangle= \langle u \otimes v, \sum_i e_i \otimes \Phi_*(\rho^\frac{1}{2}|e_i\rangle \langle \theta u^\prime| \rho^\frac{1}{2} )v^\prime \rangle , \end{align*} for any $u,v,u^\prime,v^\prime\in \initsp$. The first equality in (\ref{identity-J}) is obtained by a density argument.

Using the continuity of the map in Lemma \ref{continuidad-L1}, the remaining identities in (\ref{identity-J}) can be straightforwardly derived. Indeed, 
\begin{align*}\mathcal J_{\rho}(\Phi_{*})u\otimes v&=\sum_i e_i \otimes \Phi_*(\rho^\frac{1}{2}|e_i\rangle \langle \theta u| \rho^\frac{1}{2} )v\\
&=\lim_r \sum_i^r e_i \otimes \Phi_*(\rho^\frac{1}{2}|e_i\rangle \langle \theta u| \rho^\frac{1}{2} )v \\
&=\lim_{r,s} \sum_{i,j}^{r,s} (\unit \otimes \Phi_*) e_i \otimes \rho^\frac{1}{2}\big|e_i\big\rangle \big\langle \langle u,e_j \rangle e_j	\big| \rho^\frac{1}{2} v \\
&=\lim_{r,s} \sum_{i,j}^{r,s} (\unit \otimes \Phi_*) |e_i\rangle \langle e_j|u \otimes \rho^\frac{1}{2}|e_i\rangle \langle  e_j	\big| \rho^\frac{1}{2} v \\
&=\lim_{r,s}\Big(\sum_{i,j}^{r,s} (\unit \otimes \Phi_*) |e_i\rangle \langle e_j| \otimes \rho^{\frac{1}{2}}|e_i\rangle \langle e_j| \rho^\frac{1}{2}\Big) u\otimes v \\
&=(\unit \otimes \Phi_*)\Big(\sum_{i,j}  |e_i\rangle \langle e_j| \otimes \rho^{\frac{1}{2}}|e_i\rangle \langle e_j| \rho^\frac{1}{2}\Big) u\otimes v\\
&=(\unit \otimes \Phi_*) (\omega_\rho) u\otimes v. \end{align*}

\end{proof}

\begin{definition} \label{def-jamio}
Let $\rho$ be a state in $\mathcal B (\mathsf h)$, $\{e_i\}_i$ an orthonormal basis of $\mathsf h$ and take $\Phi_* \in  \mathcal{CP}\big(L_1(\initsp)\big)$, the space of all bounded CP maps on $L_{1}(\initsp)$. The $\rho$-Choi-Jamio\l kowski operator of $\Phi_*$ is defined by means of 
\begin{align*} \mathcal J_\rho(\Phi_*)=(\unit \otimes \Phi_*) (\omega_\rho). \end{align*}
\end{definition}

\begin{proposition} If $\text{Im}( \rho^{\frac{1}{2}}) = \initsp$ then the $\rho$-Choi-Jamio\l kowski map is injective.\end{proposition}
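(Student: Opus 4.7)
The map $\Phi_* \mapsto \mathcal{J}_\rho(\Phi_*)$ is linear in $\Phi_*$, so injectivity reduces to showing that $\mathcal{J}_\rho(\Phi_*) = 0$ forces $\Phi_* = 0$, applied to the difference of two CP maps with the same Choi-Jamio\l kowski operator. The difference is a bounded linear (though generally not CP) map on $L_1(\initsp)$, and the argument below uses only its continuity in trace norm.

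The first step is to read off the matrix elements of $\mathcal{J}_\rho(\Phi_*)$ in the product basis $\{e_k \otimes e_m\}_{k,m}$ using the representation
$$\mathcal{J}_\rho(\Phi_*) = \sum_{i,j} |e_i\rangle\langle e_j| \otimes \Phi_*\!\left(\rho^{\frac{1}{2}}|e_i\rangle\langle e_j|\rho^{\frac{1}{2}}\right)$$
furnished by Proposition \ref{jota-rho-simple}. A direct computation gives
$$\left\langle e_k \otimes e_m, \mathcal{J}_\rho(\Phi_*)(e_l \otimes e_n)\right\rangle = \left\langle e_m, \Phi_*\!\left(\rho^{\frac{1}{2}}|e_k\rangle\langle e_l|\rho^{\frac{1}{2}}\right) e_n\right\rangle,$$
so $\mathcal{J}_\rho(\Phi_*) = 0$ is equivalent to $\Phi_*\!\left(\rho^{\frac{1}{2}}|e_k\rangle\langle e_l|\rho^{\frac{1}{2}}\right) = 0$ for every pair $k, l$.

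Now I would invoke the hypothesis $\mathrm{Im}(\rho^{\frac{1}{2}}) = \initsp$. For arbitrary $u, v \in \initsp$, pick $x, y \in \initsp$ with $\rho^{\frac{1}{2}} x = u$ and $\rho^{\frac{1}{2}} y = v$, and approximate $x, y$ by finite partial sums $x_r, y_s$ in the ONB $\{e_k\}_k$. Using the elementary trace-norm bound $\bigl\| |u\rangle\langle v| - |u'\rangle\langle v'| \bigr\|_1 \leq \|u\|\,\|v-v'\| + \|u-u'\|\,\|v'\|$ together with the continuity of $\rho^{\frac{1}{2}}$, the rank-one operator $|u\rangle\langle v| = \rho^{\frac{1}{2}}|x\rangle\langle y|\rho^{\frac{1}{2}}$ is a trace-norm limit of finite linear combinations of the operators $\rho^{\frac{1}{2}}|e_k\rangle\langle e_l|\rho^{\frac{1}{2}}$. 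Continuity of $\Phi_*$ on $L_1(\initsp)$ then yields $\Phi_*(|u\rangle\langle v|) = 0$ for every $u, v \in \initsp$.

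Finally, the finite-rank operators are trace-norm dense in $L_1(\initsp)$ and each is a finite linear combination of rank-one operators, so a second continuity step gives $\Phi_* \equiv 0$ on $L_1(\initsp)$. The main (minor) obstacle I anticipate is purely bookkeeping around these infinite-dimensional density steps: the hypothesis $\mathrm{Im}(\rho^{\frac{1}{2}}) = \initsp$ must be used directly to produce the preimages $x, y$ without appealing to any inverse of $\rho^{\frac{1}{2}}$, which may be unbounded or ill-defined; working only with images and invoking boundedness of $\rho^{\frac{1}{2}}$ and $\Phi_*$ sidesteps this.
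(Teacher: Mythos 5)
Your proof is correct and follows essentially the same route as the paper: both arguments show that $\mathcal J_\rho$ determines the values $\Phi_*\big(\rho^{\frac12}|e_k\rangle\langle e_l|\rho^{\frac12}\big)$ (the paper via the quadratic form $\langle \theta u\otimes v,\mathcal J_\rho(\Phi_*)\,\theta u\otimes v\rangle=\langle v,\Phi_*(|\rho^{\frac12}u\rangle\langle\rho^{\frac12}u|)v\rangle$, you via matrix elements in the product basis), and then use $\mathrm{Im}(\rho^{\frac12})=\initsp$ together with trace-norm density of finite-rank operators and $L_1$-continuity of the difference map. Your explicit two-step approximation actually spells out the density argument that the paper compresses into one sentence.
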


\begin{proof}Let $\Phi_*$ and $\Psi_*$ two bounded CP maps on $\mathcal L_1(\mathsf h)$ such that $\mathcal J_\rho(\Phi_*)=\mathcal J_\rho(\Psi_*)$. Let $u\otimes v \in \initsp \otimes \initsp$. Denote the expansion of $u$ w.r.t. the basis $\{e_i\}_i$ by $u=\sum_l u_l e_l$. Then,
\begin{align*}&\big\langle u \otimes v , \mathcal J_\rho(\Phi_*)u \otimes v \big \rangle = \sum_{i,j} \left \langle u \otimes v , |e_i\rangle \langle e_j| \otimes \Phi_*\left(\rho^{\frac{1}{2}} |e_i\rangle \langle e_j| \rho^{\frac{1}{2}}\right) u \otimes v \right\rangle \\
&= \sum_{i,j} \langle u,e_i \rangle \langle e_j ,u \rangle \left\langle v, \Phi\left(\rho^{\frac{1}{2}}| e_i\rangle \langle e_j | \rho^{\frac{1}{2}}\right) v \right\rangle \\
&=\sum_{i,j,l,k} \overline{u}_l u_k \delta_{il} \delta_{kj} \left\langle v, \Phi\left(\rho^{\frac{1}{2}}| e_i\rangle \langle e_j | \rho^{\frac{1}{2}}\right) v \right\rangle \\
&=\sum_{i,j} \left \langle v, \Phi_*\left(\rho^{\frac{1}{2}} | \overline{u}_i e_i \rangle \langle \overline{u}_j e_j|\rho^{\frac{1}{2}} \right) v \right \rangle=\left \langle v, \Phi_*\left( \rho^{\frac{1}{2}}|\theta u \rangle \langle \theta u | \rho^{\frac{1}{2}} \right) v \right \rangle.
\end{align*}
Without loss of generality we can take $\theta u$ instead of $u$, since $\{\theta u : u\in \initsp \}$ remains dense in $\initsp$. Therefore,

\noindent$0=\Big \langle \theta u \otimes v, \big(\mathcal J_\rho(\Phi_*)-\mathcal J_\rho(\Psi_*)\big) \theta u\otimes v \Big\rangle=\Big \langle v, \Big(\Phi_*\big(|\rho^\frac{1}{2}u \rangle \langle \rho^\frac{1}{2}u| \big)- \Psi_*\big(|\rho^\frac{1}{2}u \rangle \langle \rho^\frac{1}{2}u|\big)\Big)v \Big\rangle $. By hypothesis, $Im(\rho^\frac{1}{2}) =\initsp$, thus the set $\{|\rho^\frac{1}{2}u \rangle \langle \rho^\frac{1}{2}u|: u\in \mathsf h \}$ is dense in $\mathcal L_1(\mathsf h)$. So, $\Phi_*=\Psi_*$ coincide on a dense subset of $\mathcal L_1(\initsp)$. Hence we can conclude they are equal.

\end{proof}

\begin{lemma}\label{lemma-change-basis-1} If $\omega'_{\rho}$ is as in Definition \ref{def-jamio} with the orthonormal basis $\{e'_{i}\}_i$ instead $\{e_{i}\}_{i}$, and $U$ is the unitary operator that relates the orthonormal bases $\{e'_{i}\}_i$ and $\{e_{i}\}_i$, i.e.  $Ue_i=e_i^\prime $, then the following relation is satisfied
 \[\omega_\rho^\prime=  (U\theta U^* \theta \otimes \unit ) \omega_\rho (U\theta U^* \theta \otimes \unit )^*, \] where $\omega_{\rho}$ is the state associated with the orthonormal basis $\{e_{i}\}_i$ and $\theta$ is the antiunitary map of conjugation with respect to this basis.
\end{lemma}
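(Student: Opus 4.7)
My plan is to exploit that $\omega_\rho$ is a rank-one projection and reduce the lemma to a vector-level identity. From \eqref{def-omega-rho-simple} one reads off immediately that
\[
\omega_\rho = |\psi\rangle\langle\psi|, \qquad \psi:=\sum_i e_i\otimes \rho^{1/2}e_i \in \initsp\otimes\initsp,
\]
the series being convergent since $\|\psi\|^2=\sum_i\langle e_i,\rho e_i\rangle=\tr\rho=1$; analogously $\omega_\rho'=|\psi'\rangle\langle\psi'|$ with $\psi':=\sum_i e_i'\otimes\rho^{1/2}e_i'$. Writing $P:=U\theta U^*\theta$ (which is \emph{linear}, as a product of an even number of antilinear factors), the claim $\omega_\rho'=(P\otimes\unit)\omega_\rho(P\otimes\unit)^*$ reduces to the single vector identity $(P\otimes\unit)\psi=\psi'$.

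To compute $P$ on the basis $\{e_i\}$, recall that $\theta e_i=e_i$ and that $\theta$ conjugates the coefficients of a vector in this basis. Setting $U_{ki}:=\langle e_k,Ue_i\rangle$, one has $U^*e_i=\sum_k\overline{U_{ik}}\,e_k$, so $\theta U^*\theta e_i=\sum_k U_{ik}e_k$, and applying $U$ together with $Ue_k=e_k'$ gives
\[
Pe_i=\sum_k U_{ik}\,e_k'.
\]
Substituting and exchanging the two summations, which is legitimate because $\{Pe_i\otimes\rho^{1/2}e_i\}_i$ is an orthogonal family of total squared norm $\tr\rho=1$, one obtains
\[
(P\otimes\unit)\psi=\sum_i Pe_i\otimes \rho^{1/2}e_i=\sum_k e_k'\otimes\rho^{1/2}\Big(\sum_i U_{ik}e_i\Big)=\sum_k e_k'\otimes\rho^{1/2}e_k'=\psi',
\]
where the third equality uses $\sum_i U_{ik}e_i=Ue_k=e_k'$. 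Taking outer products on both sides gives the lemma.

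The only conceptual subtlety, and the step I expect to be the main source of confusion, is handling the antilinear sandwich $\theta U^*\theta$: in the basis $\{e_i\}$ it is simply the ordinary transpose of $U$, so $P=UU^T$ is in particular unitary, consistent with the fact that $(P\otimes\unit)\psi$ must again be a unit vector. Once this has been observed, the remainder is routine bookkeeping with matrix coefficients.
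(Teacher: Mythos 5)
Your proof is correct, and it takes a genuinely cleaner route than the paper's. The key observation you make --- and which the paper never states explicitly --- is that $\omega_\rho$ is the rank-one projection $|\psi\rangle\langle\psi|$ onto the unit vector $\psi=\sum_i e_i\otimes\rho^{1/2}e_i$, so that the operator identity collapses to the single vector identity $(U\theta U^*\theta\otimes\unit)\psi=\psi'$, which you then verify by a short coefficient computation using $\theta e_i=e_i$ and $\sum_i\langle e_i,Ue_k\rangle e_i=Ue_k=e_k'$. The paper instead proves the statement by evaluating the sesquilinear form $\langle u\otimes v,\omega'_\rho\,u'\otimes v'\rangle$ on arbitrary simple tensors and threading the factor $U\theta U^*\theta$ through a chain of eight equalities before extending by linearity and density; the content is the same (their intermediate step $\langle u,U\theta U^*\rho^{1/2}v\rangle\langle U\theta U^*\rho^{1/2}v',u'\rangle$ is precisely the matrix element of $|(P\otimes\unit)\psi\rangle\langle(P\otimes\unit)\psi|$ in your notation), but your reduction makes the unitarity of $P=U\theta U^*\theta=UU^{T}$ and the purity of $\omega_\rho$ transparent, and it dispenses with the final density argument since the identity is proved directly at the level of bounded operators. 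The only point worth tightening is the interchange of the sums over $i$ and $k$: rather than invoking orthogonality of the family $\{Pe_i\otimes\rho^{1/2}e_i\}_i$, it is slightly cleaner to note that $P\otimes\unit$ is bounded, hence passes through the norm-convergent series defining $\psi$, and then to identify $\sum_i Pe_i\otimes\rho^{1/2}e_i$ with $\psi'$ by pairing against the total set $\{e_k'\otimes f_l\}_{k,l}$. This is routine and does not affect the validity of the argument.
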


\begin{proof} From (\ref{def-omega-rho-simple}), for any simple tensor we have

\begin{align*}
&\Big\langle u\otimes v, \omega_\rho^\prime  u^\prime \otimes v^\prime \Big\rangle 
=\sum_{ij} \langle u\otimes v, U e_i \otimes \rho^\frac{1}{2} U e_i \rangle \langle U e_j \otimes \rho^\frac{1}{2} U e_j, u^\prime \otimes v^\prime \rangle \\
&=\sum_{ij} \langle u, U\theta U^{*}Ue_i \rangle \langle v, \rho^\frac{1}{2} Ue_i \rangle \langle U\theta U^{*} Ue_j, u^\prime \rangle \langle \rho^\frac{1}{2} Ue_j, v^\prime \rangle \\
&=\Big\langle u, U\theta U^{*}\sum_{i}\langle U e_{i}, \rho^{\frac{1}{2}} v\rangle Ue_{i}\Big\rangle \Big\langle U\theta U^{*}\sum_{j}\langle Ue_{j}, \rho^{\frac{1}{2}} v'\rangle Ue_{j}\Big\rangle \\ & = \langle u, U\theta U^{*} \rho^{\frac{1}{2}}v\rangle \langle U\theta U^{*}  \rho^{\frac{1}{2}}v', u'\rangle \\ 
&= \sum_{i,j} \Big\langle u, U\theta U^{*}\theta \langle \rho^{\frac{1}{2}} v, e_{i}\rangle e_{i}\Big\rangle \Big\langle U\theta U^{*}\theta \langle \rho^{\frac{1}{2}} v', e_{j}\rangle e_{j}, u'\Big\rangle \\ 
&= \sum_{i,j} \langle u, U\theta U^{*}\theta e_{i}\rangle \langle v, \rho^{\frac{1}{2}}e_{i}\rangle \langle U\theta U^{*}\theta e_{j}, u' \rangle \langle\rho^{\frac{1}{2}}e_{j}, v' \rangle \\ 
&= \sum_{i,j} \Big\langle u\otimes v, U\theta U^{*} \theta e_{i}\otimes \rho^{\frac{1}{2}} e_{i}\Big\rangle \Big\langle U\theta U^{*} \theta e_{j}\otimes \rho^{\frac{1}{2}} e_{j}, u'\otimes v' \Big\rangle \\ 
&=  \Big\langle u\otimes v, (U\theta U^{*} \theta \otimes \unit) \sum_{i,j} \big\langle  e_{j}\otimes \rho^{\frac{1}{2}} e_{j},(U\theta U^{*} \theta\otimes \unit)^* u'\otimes v' \big\rangle e_{i}\otimes \rho^{\frac{1}{2}} e_{i}\Big\rangle \\ 
&= \Big\langle u\otimes v, (U\theta U^{*}\theta\otimes \unit) \omega_{\rho}(U\theta U^{*}\theta \otimes \unit)^{*} u'\otimes v'\Big\rangle.
	\end{align*}

This identity extends to arbitrary elements of $ \mathsf h\otimes \mathsf h$ by linearity and density. 
\end{proof}

\begin{theorem}\label{change-of-basis-j-states} Let $\{e_{i}\}_i$ and $\{e'_{i}\}_i$ be any two orthonormal bases, $\rho$ a fixed state and $\Phi_*$ a bounded CP map on ${\mathcal L}_{1}(\initsp)$. Then ${\mathcal J}_{\rho}(\Phi_*)$ and ${\mathcal J}'_{\rho}(\Phi_*)$ are related as follows: 
\begin{align}
{\mathcal J}'_{\rho}(\Phi_*) = (U\theta U^{*} \theta\otimes \unit){\mathcal J}_{\rho}(\Phi_*)(U\theta U^{*} \theta \otimes \unit)^{*},
\end{align} where $U$ is the unitary operator which satisfies $e'_{i}=U e_{i}$.
\end{theorem}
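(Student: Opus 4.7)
The plan is to combine the transformation rule for $\omega_{\rho}$ under a change of basis given in Lemma \ref{lemma-change-basis-1} with the identity $\mathcal J_\rho(\Phi_*) = (\unit \otimes \Phi_*)(\omega_\rho)$ that follows from Proposition \ref{jota-rho-simple} and Definition \ref{def-jamio}. Set $A := U\theta U^{*}\theta$; a short check using that $U$ is unitary and $\theta$ is antiunitary with $\theta^{2}=\unit$ gives $\|Au\|=\|U\theta U^{*}\theta u\| = \|\theta u\|=\|u\|$, so $A$ is itself a unitary operator on $\initsp$ and the conjugation $(A\otimes \unit)\cdot(A\otimes \unit)^{*}$ is the expected one. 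Applying Definition \ref{def-jamio} in the primed basis and substituting Lemma \ref{lemma-change-basis-1}, one obtains
\[
\mathcal J'_\rho(\Phi_*) \;=\; (\unit \otimes \Phi_*)(\omega'_\rho) \;=\; (\unit \otimes \Phi_*)\bigl((A \otimes \unit)\,\omega_\rho\,(A \otimes \unit)^{*}\bigr).
\]
The theorem therefore reduces to pulling the conjugation by $A \otimes \unit$ past the ampliation $\unit \otimes \Phi_*$.

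The key auxiliary fact I would isolate is the following commutation: for any bounded operator $B$ on $\initsp$ and any $T \in \mathcal L_{1}(\initsp\otimes \initsp)$,
\[
(\unit \otimes \Phi_*)\bigl((B \otimes \unit)\,T\,(B \otimes \unit)^{*}\bigr) \;=\; (B \otimes \unit)\,(\unit \otimes \Phi_*)(T)\,(B \otimes \unit)^{*}.
\]
To prove it I first verify the identity on elementary tensors $T = X \otimes Y$ with $X, Y \in \mathcal L_{1}(\initsp)$: both sides collapse algebraically to $BXB^{*} \otimes \Phi_*(Y)$, because $\Phi_*$ acts on the second factor only while conjugation by $B \otimes \unit$ acts on the first factor only. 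Since the algebraic tensor product $\mathcal L_{1}(\initsp)\otimes \mathcal L_{1}(\initsp)$ is trace-norm dense in $\mathcal L_{1}(\initsp\otimes\initsp)$, and both sides of the displayed identity are continuous in $T$ for the trace norm (the ampliation $\unit \otimes \Phi_*$ by complete positivity of $\Phi_*$, and the conjugation trivially), the identity extends to all $T$ by linearity and continuity. Applying it with $T = \omega_\rho$ and $B = A$, and invoking $\mathcal J_\rho(\Phi_*) = (\unit \otimes \Phi_*)(\omega_\rho)$ once more on the right-hand side, produces exactly $(A\otimes \unit)\,\mathcal J_\rho(\Phi_*)\,(A\otimes \unit)^{*}$, which is the claim.

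The main obstacle I anticipate is mild and expository rather than mathematical: justifying rigorously that $\unit \otimes \Phi_*$ is well defined as a bounded map on the completed trace ideal $\mathcal L_{1}(\initsp\otimes \initsp)$, and that it indeed commutes with left and right multiplication by elements of $\mathcal B(\initsp) \otimes \unit$ as asserted above. These are standard facts about CP ampliations on trace class operators, but they must be stated carefully in infinite dimension because $\omega_\rho$ itself is not a finite linear combination of simple tensors. Once this setup is in place, the theorem drops out of Lemma \ref{lemma-change-basis-1} without any further computation.
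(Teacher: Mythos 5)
Your proposal is correct and follows essentially the same route as the paper: both start from $\mathcal J_\rho(\Phi_*)=(\unit\otimes\Phi_*)(\omega_\rho)$, substitute the transformation rule for $\omega_\rho$ from Lemma \ref{lemma-change-basis-1}, and then commute the conjugation by $U\theta U^*\theta\otimes\unit$ past the ampliation $\unit\otimes\Phi_*$. The only difference is packaging: you isolate that commutation as a general lemma proved by trace-norm density of elementary tensors in $L_1(\initsp\otimes\initsp)$, whereas the paper verifies it directly on the explicit partial sums defining $\omega_\rho$ and passes to the limit on simple tensor vectors.
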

\begin{proof}Consider any simple tensor $u\otimes v\in \initsp \otimes \initsp$. By Proposition \ref{jota-rho-simple}, Lemma (\ref{lemma-change-basis-1}) and some computations we get,

\begin{align*} 
{\mathcal J_\rho}^\prime({\Phi_{*}})u\otimes v & =(\unit \otimes \Phi_*) (\omega^{\prime}_\rho) u\otimes v\\
&=(\unit \otimes \Phi_*) ((U\theta U^* \theta \otimes \unit)\omega_\rho (U\theta U^*\theta \otimes \unit)^*) u\otimes v \\ 
&=\lim_{r,s}\Big(\sum_{i,j}^{r,s} (U\theta U^* \theta \otimes \unit)(\unit \otimes \Phi_*) |e_i\rangle \langle e_j| \otimes \rho^{\frac{1}{2}}|e_i\rangle \langle e_j| \rho^\frac{1}{2}(U\theta U^* \theta \otimes \unit)^*\Big) u\otimes v\\
&=(U\theta U^* \theta \otimes \unit)\lim_{r,s}\Big(\sum_{i,j}^{r,s}(\unit \otimes \Phi_*) |e_i\rangle \langle e_j| \otimes \rho^{\frac{1}{2}}|e_i\rangle \langle e_j| \rho^\frac{1}{2}\Big)       (U\theta U^* \theta \otimes \unit)^* u\otimes v \\
& = \big( U\theta U^{*}\theta \otimes \unit\big){\mathcal J}_{\rho}({\Phi_{*}})\big( U\theta U^{*}\theta \otimes \unit\big)^{*} u\otimes v . 
\end{align*} 

Extending this identity to arbitrary elements of $\initsp \otimes \initsp$ by linearity and density the conclusion follows.
\end{proof}

\begin{theorem}\label{J-prop}Let $\Phi_*$ be a CP map. Some properties of $\mathcal J_\rho$ are:
\begin{itemize}

\item[i)] $\mathcal J_\rho(\Phi_*)$ is a positive operator.

\item[ii)] $\mathcal J_\rho(\Phi_*)$ has the following ``\textbf{isometric}" property,
\begin{align}\label{trace-property} 
\|\mathcal J_\rho(\Phi_*)\|_{1}= \tr{\mathcal J_\rho(\Phi_*)}=\tr{\Phi_*(\rho)}= \|\Phi_*(\rho)\|_{1}.
\end{align} Consequently, 
$\mathcal J_\rho(\Phi_*)\text{ is a state} \Leftrightarrow \Phi_*\text{ is a CP trace preserving map.} $ 

\item[iii)] The Choi-Jamio\l kowski transform  $\mathcal J_\rho$ is continuos as a map from the subset $\mathcal{CP}(L_1(\initsp))$, of all completely positive operators on the Banach space $L_1 (\initsp)$, into $L_{1}(\initsp\otimes\initsp)$. That is, the map ${\mathcal J}_{\rho}: {\mathcal C}{\mathcal P}(L_{1}(\initsp)) \rightarrow L_{1}(\initsp\otimes\initsp)$ is continuous, where ${\mathcal C}{\mathcal P}(L_{1}(\initsp))$ is provided with the norm topology of ${\mathcal B}(L_{1}(\initsp))$.

\item[iv)] For every $u, v, z,w \in \mathsf h$ we have 
\begin{align*}\Big\langle u\otimes v, {\mathcal J}_{\rho}(\Phi_{*}) u^\prime\otimes v^\prime\Big\rangle = \left\langle v, \Phi_{*}\big(\rho^{\frac{1}{2}}\theta (|u^\prime\rangle\langle u|)^{*}\theta\rho^{\frac{1}{2}}\big)v^\prime\right\rangle.\end{align*}
\end{itemize}
 \end{theorem}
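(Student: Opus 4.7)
The plan is to address the four claims in the order (i), (ii), (iv), (iii), since (iii) will depend on the preceding three.

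Part (i) is immediate from the proof of Lemma \ref{rho-jamio}, where the quadratic form ${\mathcal J}_{\rho}(\Phi_{*})(u,u)$ is displayed as a sum of squared moduli in equation (\ref{sesq-pos}). Equivalently, the defining formula $\mathcal J_\rho(\Phi_*)=(\unit\otimes\Phi_*)(\omega_\rho)$ of Definition \ref{def-jamio} gives positivity directly from the positivity of $\omega_\rho$ and the complete positivity of $\Phi_*$.

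For part (ii), I evaluate $\tr\mathcal J_\rho(\Phi_*)$ in the basis $\{e_k\otimes e_l\}_{k,l}$. By Proposition \ref{jota-rho-simple}, using $\theta e_k=e_k$ in the basis of conjugation, the diagonal entries equal $\langle e_l,\Phi_*(\rho^{\frac{1}{2}}|e_k\rangle\langle e_k|\rho^{\frac{1}{2}})e_l\rangle$. Summing over $l$ yields $\tr\Phi_*(\rho^{\frac{1}{2}}|e_k\rangle\langle e_k|\rho^{\frac{1}{2}})$, and then summing over $k$ while interchanging the sum with $\Phi_*$ (justified by boundedness of $\Phi_*$ on $L_1(\initsp)$ and the trace-norm convergence of $\sum_k\rho^{\frac{1}{2}}|e_k\rangle\langle e_k|\rho^{\frac{1}{2}}$ to $\rho$) produces $\tr\Phi_*(\rho)$. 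Since $\mathcal J_\rho(\Phi_*)\geq 0$ by (i) and $\Phi_*(\rho)\geq 0$ by complete positivity, the trace-norms coincide with the traces, completing the chain of equalities. The equivalence with trace preservation then follows because $\tr\rho=1$.

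For part (iv), Proposition \ref{jota-rho-simple} gives $\mathcal J_\rho(\Phi_*)(u'\otimes v')=\sum_i e_i\otimes\Phi_*(\rho^{\frac{1}{2}}|e_i\rangle\langle\theta u'|\rho^{\frac{1}{2}})v'$. Pairing with $u\otimes v$ and using $\sum_i\langle u,e_i\rangle e_i=\theta u$ yields $\langle v,\Phi_*(\rho^{\frac{1}{2}}|\theta u\rangle\langle\theta u'|\rho^{\frac{1}{2}})v'\rangle$. A short computation exploiting antilinearity of $\theta$ and $\theta^2=\unit$ verifies $\theta(|u'\rangle\langle u|)^*\theta=\theta|u\rangle\langle u'|\theta=|\theta u\rangle\langle\theta u'|$, which rewrites the argument of $\Phi_*$ in the stated form.

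The hard part is (iii). From (ii), the estimate $\|\mathcal J_\rho(\Phi_*)\|_1=\|\Phi_*(\rho)\|_1\leq\|\Phi_*\|_{\mathcal B(L_1(\initsp))}$ on the CP cone is immediate, but on its own this does not give continuity, because the difference of two CP maps is generally not CP and so no Lipschitz bound can be extracted from (ii) by linearity. I would instead argue indirectly in three steps for a sequence $\Phi^{(n)}_*\to\Phi_*$ in $\mathcal B(L_1(\initsp))$ with all maps CP. First, by (iv), for each fixed $u,v,u',v'\in\initsp$ the matrix element equals $\langle v,\Phi^{(n)}_*(R)v'\rangle$ with the fixed element $R=\rho^{\frac{1}{2}}|\theta u\rangle\langle\theta u'|\rho^{\frac{1}{2}}\in L_1(\initsp)$, so $\Phi^{(n)}_*(R)\to\Phi_*(R)$ in trace norm gives $\mathcal J_\rho(\Phi^{(n)}_*)\to\mathcal J_\rho(\Phi_*)$ in the weak-$*$ topology of $L_1(\initsp\otimes\initsp)$. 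Second, by (ii), the traces converge: $\tr\mathcal J_\rho(\Phi^{(n)}_*)=\tr\Phi^{(n)}_*(\rho)\to\tr\Phi_*(\rho)=\tr\mathcal J_\rho(\Phi_*)$. Third, since by (i) all operators involved are positive, the classical Gr\"umm-type theorem that weak-$*$ convergence together with convergence of traces implies trace-norm convergence closes the argument.
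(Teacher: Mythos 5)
Your proposal is correct, and parts (i), (ii) and (iv) follow essentially the paper's own route: (i) is positivity on $\textrm{span}\{e_k\otimes e_l\}$ from Lemma \ref{rho-jamio} plus density, (iv) is the same computation from Proposition \ref{jota-rho-simple} using $\sum_i\langle u,e_i\rangle e_i=\theta u$ and $|\theta u\rangle\langle\theta u'|=\theta(|u'\rangle\langle u|)^{*}\theta$. For (ii) you differ slightly but harmlessly: the paper first invokes Theorem \ref{change-of-basis-j-states} and unitary invariance of the trace to reduce to the eigenbasis of $\rho$, where $\rho^{\frac{1}{2}}|e_k\rangle\langle e_k|\rho^{\frac{1}{2}}=\rho_k|e_k\rangle\langle e_k|$, whereas you work in an arbitrary basis and justify $\sum_k\Phi_*(\rho^{\frac{1}{2}}|e_k\rangle\langle e_k|\rho^{\frac{1}{2}})=\Phi_*(\rho)$ by the trace-norm convergence $\rho^{\frac{1}{2}}P_N\rho^{\frac{1}{2}}\to\rho$; both are valid, and yours makes (ii) independent of the change-of-basis theorem. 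The genuine divergence is (iii). The paper's entire proof of (iii) is the inequality $\|\mathcal J_\rho(\Phi_*)\|_1=\|\Phi_*(\rho)\|_1\leq\|\Phi_*\|_{\mathcal B(L_1(\initsp))}$, and you are right that this alone does not give continuity: the estimate lives only on the CP cone, the difference of two CP maps is generally not CP, and one cannot bound $\|\mathcal J_\rho(\Phi_*)-\mathcal J_\rho(\Psi_*)\|_1$ by $\|\Phi_*-\Psi_*\|$ this way (in finite dimensions the corresponding Lipschitz constant of the unweighted Choi map grows with the dimension). Your three-step replacement --- matrix-element convergence on simple tensors via (iv), convergence of traces via (ii), and the Gr\"umm-type theorem that for positive trace-class operators weak-$*$ convergence together with convergence of traces implies trace-norm convergence --- is a correct and genuinely stronger argument that actually establishes what the paper asserts. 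The only point worth making explicit in your step 1 is that passing from convergence of matrix elements on the dense span of simple tensors to weak-$*$ convergence in $L_1(\initsp\otimes\initsp)$ uses the uniform bound $\sup_n\|\mathcal J_\rho(\Phi^{(n)}_*)\|_1\leq\sup_n\|\Phi^{(n)}_*\|<\infty$ supplied by (ii).
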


\begin{proof}\begin{enumerate}

\item[\textit{i)}] Let $u$ be any element in $\initsp \otimes \initsp$ and $\{u_n\}_n$ a sequence in $\text{span}\{e_k \otimes e_l\}\subset \initsp \otimes \initsp$ such that $u_n \to u$.

Since $\langle u , \mathcal J_\rho(\Phi_*)u \rangle =\displaystyle \lim_{n,m} \langle u_n ,\mathcal J_\rho(\Phi_*)u_m \rangle$, it is enough to show that there exists a subsequence of $\{ \langle u_n, \mathcal J_\rho(\Phi_*) u_m \rangle \}_{nm}$ consisting of non-negative elements with limit $\langle u ,\mathcal J_\rho(\Phi_*) u \rangle$. The diagonal subsequence $\{\langle u_n, \mathcal J_\rho(\Phi_*) u_n \rangle\}_n$ converges to $\langle u ,\mathcal J_\rho(\Phi_*) u \rangle$ and, by Lemma $\ref{rho-jamio}$, $\mathcal J(\Phi_*)$ is positive on $\text{span}\{e_k \otimes e_l\}$. This proves $(i)$.

\item[\textit{ii)}]  Due to the above Theorem and the invariance of traces with respect to unitary conjugation, it suffices to prove that (\ref{trace-property}) holds in the case when ${e_i}$ is the basis of $\rho$, i.e., $\rho=\sum_{i}\rho_{i} |e_i \rangle \langle e_i|$. We have that 
\begin{align*}\tr\big({\mathcal J_\rho(\Phi_*)}\big)&=\sum_{i,j} \big \langle e_i \otimes e_j , \mathcal J_\rho(\Phi_*) e_i \otimes e_j \big \rangle =\sum_{i,j} \left\langle e_j , \Phi_*\left(\rho^{\frac{1}{2}} |e_i \rangle \langle e_i | \rho^{\frac{1}{2}} \right)e_j \right\rangle \\
&=\sum_i \tr\big({\Phi_*(\rho^\frac{1}{2} |e_i \rangle \langle e_i | \rho^\frac{1}{2})}\big)= \tr\big({\Phi_*(\sum_{i}\rho_{i} |e_i \rangle \langle e_i |)}\big) \\  &=\tr\big({\Phi_*(\rho)}\big)=\|\Phi_{*}(\rho)\|_{1}.  \end{align*} Since $\Phi_{*}(\rho)$ is positive. 

\item[\textit{iii)}] Since ${\mathcal J}_\rho(\Phi_*)$ is positive, as a direct consequence of $(ii)$, we get 
\begin{align*}\| \mathcal J_\rho(\Phi_*)\|_1=\| \Phi_*(\rho)\|_1 \leq \|\Phi_* \|_{\mathcal B(L_{1}(\initsp))}.
\end{align*} This proves $(iii)$.

\item[\textit{iv)}] By direct computation, for every $u , v, u^\prime,v^\prime  \in \mathsf h$, we have that 
\begin{eqnarray*}
\begin{aligned}
&\Big\langle u  \otimes v  , {\mathcal J}_{\rho}(\Phi_{*}) u^\prime\otimes v^\prime \Big\rangle =\sum_i\left\langle u \otimes v, e_i \otimes \Phi_*(\rho^\frac{1}{2}|e_i \rangle \langle \theta u^\prime| \rho^\frac{1}{2} ) v^\prime \right\rangle \\
&=\sum_i \Big \langle v, \Phi_{*}\Big( \rho^\frac{1}{2}|\langle u , e_{i}\rangle e_{i} \rangle \langle \theta u^\prime|\rho^\frac{1}{2}\Big) v^\prime  \Big \rangle = \Big\langle v, \Phi_{*}\big(\rho^{\frac{1}{2}}\theta(|u^\prime\rangle \langle u |)^{*} \theta\rho^{\frac{1}{2}}\big) v^\prime \Big\rangle.
\end{aligned}
\end{eqnarray*}

\end{enumerate}

 \end{proof}

\section{Quantum Entropy Production Rate}\label{entropy-production}

\subsection{Von Neumann Relative Entropy}

\begin{definition}
The von Neumann relative entropy of two states $\eta$ and $\rho$ is defined as 
\[S(\eta,\sigma)=tr\Big(\eta \log \eta- \eta \log \sigma )\Big)\]
if $\ker(\sigma)\subset \ker(\eta)$ and $\infty$ otherwise.
\end{definition}

\begin{theorem}\label{Nonnega-RE}\textbf{(Non-negativity of the relative entropy)}
\[S(\eta,\rho)\geq 0, \] for all $\eta$, $\rho$. Moreover,  
$S(\eta,\rho)=0$ if and only if $\eta=\rho$.
\end{theorem}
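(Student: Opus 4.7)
The plan is to reduce the operator statement to the classical Gibbs inequality via a spectral-decomposition argument, treating the infinite-dimensional case by the usual kernel/support convention so all sums that appear make sense. Write the spectral decompositions $\eta=\sum_i \lambda_i |f_i\rangle\langle f_i|$ and $\sigma=\sum_j \mu_j |g_j\rangle\langle g_j|$, where $(\lambda_i),(\mu_j)$ are the eigenvalues (with $\sum_i\lambda_i=\sum_j\mu_j=1$) and $\{f_i\},\{g_j\}$ are orthonormal bases of $\initsp$ (restricted to the supports if necessary). Then
\begin{align*}
S(\eta,\sigma)
&= \sum_i \lambda_i \log\lambda_i - \sum_{i,j} \lambda_i\, p_{ij}\log\mu_j,
\end{align*}
where $p_{ij}=|\langle f_i,g_j\rangle|^2$. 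Since $\{f_i\}$ and $\{g_j\}$ are orthonormal bases one has $\sum_j p_{ij}=1$ and $\sum_i p_{ij}=1$, so $(p_{ij})$ is doubly stochastic. Under the hypothesis $\ker\sigma\subset\ker\eta$, any $j$ with $\mu_j=0$ contributes only through indices $i$ with $\lambda_i=0$, so the sum is unambiguously defined in $[0,\infty]$.

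Next I would apply Jensen's inequality in its discrete form, using the concavity of $\log$: for each fixed $i$ with $\lambda_i>0$,
\begin{align*}
\sum_j p_{ij}\log\mu_j \;\leq\; \log\Bigl(\sum_j p_{ij}\mu_j\Bigr) \;=:\; \log \tilde\mu_i .
\end{align*}
Double stochasticity of $(p_{ij})$ gives $\sum_i \tilde\mu_i = \sum_{i,j} p_{ij}\mu_j = \sum_j \mu_j = 1$, so $(\tilde\mu_i)$ is a probability distribution. Substituting back yields
\begin{align*}
S(\eta,\sigma) \;\geq\; \sum_i \lambda_i \log\frac{\lambda_i}{\tilde\mu_i},
\end{align*}
and this right-hand side is the classical Kullback--Leibler divergence of the two distributions $(\lambda_i)$ and $(\tilde\mu_i)$ on the index set, which is nonnegative by the classical Gibbs inequality (itself a one-line consequence of $\log x\leq x-1$ applied with $x=\tilde\mu_i/\lambda_i$ and the normalisation $\sum_i\lambda_i=\sum_i\tilde\mu_i=1$).

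For the equality case I would trace the two inequalities in reverse. Classical Gibbs equality forces $\lambda_i=\tilde\mu_i$ for every $i$ with $\lambda_i>0$; Jensen equality (strict concavity of $\log$) forces $\mu_j$ to be constant in $j$ on $\{j : p_{ij}>0\}$ for each such $i$, hence $\sum_j p_{ij}\log\mu_j=\log\tilde\mu_i=\log\lambda_i$ and the two spectral distributions must coincide after a relabelling; a short argument using that $\eta$ and $\sigma$ are simultaneously diagonalised by the common spectral projections then gives $\eta=\sigma$. Conversely $S(\eta,\eta)=0$ is immediate.

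The main obstacle is bookkeeping in infinite dimension: justifying interchange of the double sum in the definition of $\tr(\eta\log\sigma)$, and handling the possibility of zero eigenvalues of $\sigma$ (where $\log\mu_j=-\infty$). Both are resolved by the kernel convention $\ker\sigma\subset\ker\eta$ built into the definition, which lets one restrict the spectral sums to the supports of the two states and apply Fubini/Tonelli to the nonnegative summand $\lambda_i p_{ij}|\log\mu_j|$ where needed; no difficulty beyond that appears.
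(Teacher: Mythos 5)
The paper offers no proof of this theorem at all: it is recorded as a classical fact (Klein's inequality for the von Neumann relative entropy), so there is nothing internal to compare your argument against. Your proof is the standard one — spectral decompositions of the two states, the doubly stochastic matrix $p_{ij}=|\langle f_i,g_j\rangle|^2$, Jensen's inequality for $\log$, and the classical Gibbs inequality — and the outline, including the equality case via strict concavity and the observation that each $f_i$ then lies in the $\lambda_i$-eigenspace of $\sigma$, is sound. The one point you should tighten is the very first display: writing $S(\eta,\sigma)$ as the difference of the two separate sums $\sum_i\lambda_i\log\lambda_i$ (which lies in $[-\infty,0]$ and can genuinely equal $-\infty$ for a trace-class state in infinite dimension) and $\sum_{i,j}\lambda_i p_{ij}\log\mu_j$ (which lies in $[-\infty,0]$ as well, so its negative contributes $+\infty$ potentially) risks an undefined $-\infty+\infty$; the kernel convention alone does not exclude this. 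The standard repair, due to Lindblad, is to use double stochasticity to absorb the vanishing linear correction $\sum_{i,j}p_{ij}(\mu_j-\lambda_i)=0$ and regroup termwise as
\begin{equation*}
S(\eta,\sigma)=\sum_{i,j} p_{ij}\bigl(\lambda_i\log\lambda_i-\lambda_i\log\mu_j+\mu_j-\lambda_i\bigr),
\end{equation*}
each summand being nonnegative by the elementary inequality $x\log x - x\log y + y - x\ge 0$; this makes the sum unambiguous in $[0,\infty]$, gives nonnegativity immediately, and your Jensen/Gibbs analysis of the equality case then goes through unchanged. With that regrouping your argument is a complete and correct proof of the statement the paper leaves unproved.
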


\subsection{The $\Theta$-KMS adjoint QMS}

$\Theta$-SQDB seems to be the most appropriate extension of detailed balance to the non-commutative case;  surprisingly, up to now, a notion of adjoint associated with $\Theta$-SQDB condition has not been discussed. To fulfill this gap we define the $\Theta$-KMS adjoint (or dual) of a given QMS as follows.

\begin{definition}\label{theta-kms-adjoint-def}
Given a reversing operation $\Theta$ and a uniformly continuous QMS ${\mathcal T}=({\mathcal T}_{t})_{t\geq 0}$ on ${\mathcal B}(\initsp)$ with a faithful invariant state $\rho$, we say that ${\mathcal T}$ admits a $\Theta$-KMS adjoint (or dual) QMS with respect to the state $\rho$ if there exists a QMS ${\mathcal T}^{\Theta}=({\mathcal T}^{\Theta}_{t})_{t\geq 0}$ satisfying the $\Theta$-KMS duality relation  
\begin{eqnarray}\label{theta-kms-adjoint}
\tr\Big(\rho^{\frac{1}{2}}\Theta(x^*)\rho^{\frac{1}{2}}{\mathcal T}_{t}(y)\Big)= \tr\Big(\rho^{\frac{1}{2}}\Theta({\mathcal T}^{\Theta}_{t}(x^*)) \rho^{\frac{1}{2}} y \Big), \;  \forall \;  x,y \in {\mathcal B}(\initsp). 
\end{eqnarray} 
\end{definition}

The notion of weighted detailed balance was introduced in \cite{a-f-q}. It is a natural generalization of the quantum detailed balance condition of Frigerio, Kossakowski, Gorini and Verri, \cite{kfgv}, that singles out an interesting class of semigroups that includes, beside QMS with equilibrium states, QMS with non-equilibrium steady states. The generators of these semigroups admit a special GKSL representation in the sense of Parthasarathy, of the form  

\begin{eqnarray}\label{gksl-rep}
{\mathcal L}(x)= i[H, x] - \frac{1}{2} \sum_{k\geq 1} 
\left( L_{k}^{*} L_{k} x - 2 L_{k}^{*} x L_{k} 
+ x L_{k}^{*} L_k \right),
\end{eqnarray} 
where $H, \; L_k \in {\mathcal B}({\initsp})$ with 
$H=H^{*}$ and the series $\sum_{k\geq 1} L_{k}^{*} L_k $ 
is convergent in norm, satisfying the conditions of Theorem 30.16 in \cite{partha}  

\begin{definition}\label{weighted-db}
A uniformly continuous QMS $({\mathcal T}_{t})_{t\geq 0}$, with GKSL generator ${\mathcal L}$ and a faithful invariant normal state $\rho$, is said to satisfy a weighted detailed balance  
condition if ${\mathcal L}$ admits a special GKSL representation and  
there exists a sequence of positive weights $q:=(q_{k})_{k}$ and bounded operators $ {K}, {L}_{k}^{'}$ of 
a (possibly another) special representation of ${\mathcal L}$ 
such that the difference ${\mathcal L}' - {\mathcal L}$ 
has the structure  
\begin{eqnarray}\label{def-weighted-db}
{\mathcal L}' - {\mathcal L} = -2i [K, \cdot] + \Pi, 
\end{eqnarray} where ${\mathcal L}'$ is the KMS adjoint of ${\mathcal L}$, $K=K^{*}$ is bounded and 
\begin{eqnarray}\label{Pi-weighted-db}
\Pi(x)= \sum_{k} (q_{k} -1) {L}_{k}^{'*}x {L}_{k}^{'}. 
\end{eqnarray}
\end{definition}

\begin{proposition} If ${\mathcal T}=({\mathcal T}_{t})_{t\geq 0}$ on ${\mathcal B}(\initsp)$ is a uniformly continuos QMS  with a faithful invariant state $\rho$, whose generator satisfies a weighted detailed balance condition with bounded $K$ and $L_{k}$, then the $\Theta$-KMS adjoint semigroup ${\mathcal T}^{\Theta}$ exists and satisfies  
\begin{eqnarray}\label{theta-kms-duality}
{\mathcal T}^{\Theta} = \Theta \circ{\mathcal T}'\circ\Theta,
\end{eqnarray} with ${\mathcal T}'$ the KMS adjoint QMS. Consequently, ${\mathcal T}^{\Theta}$ is uniformly continuous QMS. 
\end{proposition}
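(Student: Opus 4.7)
My plan is to take $\mathcal{T}^{\Theta}_{t}:=\Theta\circ\mathcal{T}'_{t}\circ\Theta$ as the candidate and verify that (i) it is a uniformly continuous QMS, (ii) it satisfies the defining $\Theta$-KMS duality (\ref{theta-kms-adjoint}), and (iii) it is the unique QMS doing so. Uniqueness is immediate: if two QMS both satisfy (\ref{theta-kms-adjoint}), their difference $C_{t}(x)$ yields $\tr\bigl(\rho^{1/2}\Theta(C_{t}(x))\rho^{1/2}y\bigr)=0$ for every $y\in\mathcal{B}(\initsp)$; nondegeneracy of the trace pairing and faithfulness of $\rho$ (so that $\rho^{1/2}$ has dense range), together with injectivity of $\Theta$, force $C_{t}\equiv 0$.

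For (ii) the verification is almost formal: substituting the bounded operator $z:=\Theta(x^{*})$ in place of $x^{*}$ in the KMS duality defining $\mathcal{T}'$, and then using $\Theta^{2}=I$ to rewrite the right-hand side, gives
\begin{align*}
\tr\bigl(\rho^{1/2}\Theta(x^{*})\rho^{1/2}\mathcal{T}_{t}(y)\bigr)
&=\tr\bigl(\rho^{1/2}\mathcal{T}'_{t}(\Theta(x^{*}))\rho^{1/2}y\bigr)\\
&=\tr\bigl(\rho^{1/2}\Theta\bigl((\Theta\,\mathcal{T}'_{t}\,\Theta)(x^{*})\bigr)\rho^{1/2}y\bigr),
\end{align*}
which is exactly (\ref{theta-kms-adjoint}) for the candidate $\mathcal{T}^{\Theta}_{t}=\Theta\,\mathcal{T}'_{t}\,\Theta$.

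The substantive step is (i), and this is where weighted detailed balance enters. Substituting (\ref{def-weighted-db})--(\ref{Pi-weighted-db}) into the special GKSL form of $\mathcal{L}$ rewrites the KMS-adjoint generator as
\[\mathcal{L}'(x)=i[H'-2K,x]+\sum_{k}q_{k}L_{k}^{'*}xL_{k}^{'}-\tfrac{1}{2}\bigl\{\textstyle\sum_{k}L_{k}^{'*}L_{k}^{'},\,x\bigr\};\]
the conservativity $\mathcal{L}'(1)=0$, forced by the KMS duality at $x=1$ together with $\mathcal{T}$-invariance of $\rho$, enforces $\sum_{k}(q_{k}-1)L_{k}^{'*}L_{k}^{'}=0$, so that $\mathcal{L}'$ is a bounded Lindblad generator with CP dissipative part $\Psi(x)=\sum_{k}q_{k}L_{k}^{'*}xL_{k}^{'}$ (for which $q_{k}\geq 0$ is essential). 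Hence $\mathcal{T}'$ is a uniformly continuous QMS, and $\mathcal{T}^{\Theta}_{t}$ inherits the semigroup property and uniform continuity from $\Theta^{2}=I$, unitality from $\Theta(1)=1$, and complete positivity from any Kraus decomposition $\mathcal{T}'_{t}(\cdot)=\sum_{j}V_{j,t}^{*}\cdot V_{j,t}$: using that $\Theta$ is a linear, $*$-preserving anti-homomorphism, one obtains $\mathcal{T}^{\Theta}_{t}(x)=\sum_{j}\Theta(V_{j,t})\,x\,\Theta(V_{j,t})^{*}$, which is manifestly CP. The principal obstacle I anticipate is precisely this algebraic reassembly showing $\mathcal{L}'$ is a bona fide Lindblad generator; once that is in hand, the rest of the argument consists of short, formal manipulations.
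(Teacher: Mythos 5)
Your proposal is correct, and its central computation --- substituting $\Theta(x^{*})$ into the KMS duality and using $\Theta^{2}=I$ to read off $\mathcal{T}^{\Theta}=\Theta\circ\mathcal{T}'\circ\Theta$ --- is exactly the paper's argument. Where you diverge is in justifying that $\mathcal{T}'$ (hence $\mathcal{T}^{\Theta}$) is a uniformly continuous QMS: the paper simply cites Fagnola--Umanit\`a for the fact that the KMS adjoint of a QMS is again a QMS, and then observes that $\mathcal{L}'=\mathcal{L}-2i[K,\cdot]+\Pi$ is bounded because $\mathcal{L}$, $K$ and the $L_{k}$ are, so uniform continuity follows. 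You instead reassemble $\mathcal{L}'$ into an explicit GKSL form, using unitality of $\mathcal{T}'$ (forced by the duality at $x=\unit$ and the invariance of $\rho$) to get $\sum_{k}(q_{k}-1)L_{k}'^{*}L_{k}'=0$ and hence a matching CP dissipative part with Kraus operators $\sqrt{q_{k}}\,L_{k}'$; this is a correct and more self-contained route that proves, rather than cites, the QMS property of $\mathcal{T}'$, at the cost of leaning harder on the structure of weighted detailed balance. You also supply two items the paper leaves implicit: uniqueness of the $\Theta$-KMS adjoint (via faithfulness of $\rho$ and injectivity of $\Theta$) and the explicit Kraus decomposition $\mathcal{T}^{\Theta}_{t}(x)=\sum_{j}\Theta(V_{j,t})\,x\,\Theta(V_{j,t})^{*}$ showing complete positivity survives conjugation by the anti-homomorphism $\Theta$. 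Both additions are sound, and nothing in your argument fails; the only caveat worth noting is that your derivation of $\mathcal{L}'(\unit)=0$ presupposes, as the paper's Definition \ref{weighted-db} does, that the KMS adjoint semigroup generated by $\mathcal{L}'$ already exists and satisfies the duality, so it is a consistency constraint within the given hypotheses rather than something you need to establish independently.
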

\begin{proof} We have that, 
\begin{eqnarray}
\begin{aligned}
&\tr\Big(\rho^{\frac{1}{2}}{\mathcal T}'\big(\Theta(x^*)\big)\rho^{\frac{1}{2}}y\Big) =\tr\Big(\rho^{\frac{1}{2}}\Theta(x^*)\rho^{\frac{1}{2}}{\mathcal T}_{t}(y)\Big)= \tr\Big(\rho^{\frac{1}{2}}\Theta({\mathcal T}^{\Theta}_{t}(x^*)) \rho^{\frac{1}{2}} y \Big), \\ \;  &\forall \;  x,y \in {\mathcal B}(\initsp).
\end{aligned}
\end{eqnarray} Hence ${\mathcal T}^{\Theta}= \Theta\circ{\mathcal T}' \circ\Theta$. It is well known that ${\mathcal T}'$ is a QMS, \cite{franco3}. Clearly ${\mathcal T}^{\Theta}$ is a uniformly continuous QMS whenever the KMS adjoint QMS ${\mathcal T}'$ is. If the GKSL generator of ${\mathcal T}$ satisfies a weighted detailed balance condition then the GKSL generator of ${\mathcal T}'$ has the structure \[{\mathcal L}'(\cdot) = {\mathcal L}(\cdot) + 2i[K, \cdot] + \sum_{k}(q_{k}-1) L_{k}^{*} \cdot L_{k}.\] With $q_{k}>0$, $K$ and all $L_{k}$ bounded operators. The uniform continuity of ${\mathcal T}$ implies that ${\mathcal L}$ is a bounded operator on ${\mathcal B}(\initsp)$, hence ${\mathcal L}'$ is also bounded as a map from ${\mathcal B}(\initsp)$ into itself. Therefore ${\mathcal T}'$ and, hence, ${\mathcal T}^{\Theta}$ is a uniformly continuous QMS. This finishes the proof.
\end{proof}

\subsection{Deviation from equilibrium}

From now on we consider the class of uniformly continuous QMS ${\mathcal T}$ that admit a uniformly continuous $\Theta$-KMS adjoint QMS ${\mathcal T}^{\Theta}_{t}$. This class include those semigroups satisfying a weighted detailed balance condition. We denote by ${\mathcal T}_{*t}$ and ${\mathcal T}^{\Theta}_{*t}$  the corresponding pre-dual semigroups. 

The relative entropy $S(\mathcal J_\rho(\mathcal T_{*t}),\mathcal J_\rho( \mathcal T^{\Theta}_{*t}))$ is a measure of the deviation from $\Theta$-SQDB of the semigroup $\mathcal T$. We assume that the condition $\textrm{ker}(\mathcal J_\rho( \mathcal T^{\Theta}_{*t}))\subset \textrm{ker}(\mathcal J_\rho(\mathcal T_{*t}))$ holds true for all $t\geq 0$, so that the above relative entropy is finite. Moreover one can define the rate of change of relative entropy as follows.

\begin{definition} 
The Quantum Entropy Production Rate of the uniformly con\-ti\-nuos QMS $\mathcal T_*$, with respect to the invariant state $\rho$, is defined as
\begin{align}\label{entropy-production-rate}
e_p(\mathcal T_*,\rho)= \left.\frac{d}{dt}S(\mathcal J_\rho(\mathcal T_{*t}),\mathcal J_\rho( \mathcal T^{\Theta}_{*t}))\right|_{t=0}. \end{align}
\end{definition}

Notice that in the last definition there is no reference to the orthonormal basis used to compute the $\rho$-Choi-Jamio\l kowski states of $\mathcal T_*$ and $\mathcal T^{\Theta}_*$, this is justified by the following theorem.

\begin{theorem}\label{invariance-entropy}
Let $\{e_{i}\}_i$ be an orthonormal basis of $\mathsf h$, $\Phi_{*}$, $\Psi_{*}$ two CP  trace preserving maps acting on $L_{1}(\initsp)$, and ${\mathcal J}_{\rho}({\Phi_{*}})$, ${\mathcal J}_{\rho}({\Psi_{*}})$ the $\rho$-Choi-Jamio\l kowski states on ${\mathcal B}(\mathsf h \otimes \mathsf h)$, associated with $\Phi_{*}$ and $\Psi_{*}$, respectively. The relative entropy $S\big({\mathcal J}_{\rho}({\Phi_{*}}), {\mathcal J}_{\rho}({\Psi_{*}})\big) $ does not depend on the orthonormal basis $\{e_{i}\}_i$. 
\end{theorem}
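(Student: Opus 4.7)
The plan is to reduce the claim to the well known unitary invariance of the von Neumann relative entropy. Theorem \ref{change-of-basis-j-states} already tells us exactly how the $\rho$-Choi-Jamio\l kowski operator transforms under a change of orthonormal basis: if $U$ is the unitary with $U e_i = e'_i$ and $\theta$ is the conjugation with respect to $\{e_i\}_i$, then
\begin{align*}
{\mathcal J}'_{\rho}(\Phi_{*}) = (U\theta U^{*} \theta \otimes \unit)\, {\mathcal J}_{\rho}(\Phi_{*})\, (U\theta U^{*} \theta \otimes \unit)^{*},
\end{align*}
and the same identity holds with $\Psi_*$ in place of $\Phi_*$, with the \emph{same} intertwiner.

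First I would set $V := U\theta U^{*}\theta$ and verify that $V$ is unitary. This is a short composition argument using the Proposition on anti-unitary operators in the preliminaries: $U\theta$ is anti-unitary by property $(v)$, so $U\theta U^{*}$ is anti-unitary by $(v)$ again, and finally $V = (U\theta U^{*})\theta$ is a composition of two anti-unitary operators, hence unitary by $(iv)$. Consequently $W := V\otimes \unit$ is a unitary operator on $\initsp\otimes \initsp$, and Theorem \ref{change-of-basis-j-states} reads
\begin{align*}
{\mathcal J}'_{\rho}(\Phi_{*}) = W\, {\mathcal J}_{\rho}(\Phi_{*})\, W^{*}, \qquad {\mathcal J}'_{\rho}(\Psi_{*}) = W\, {\mathcal J}_{\rho}(\Psi_{*})\, W^{*}.
\end{align*}

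Second, I would invoke the unitary invariance of the von Neumann relative entropy: for any unitary $W$ and states $\eta,\sigma$ on $\initsp\otimes\initsp$ one has $W\eta W^{*}\log(W\eta W^{*}) = W(\eta\log\eta)W^{*}$ (by the spectral/functional calculus) and likewise for $\log\sigma$, so
\begin{align*}
S(W\eta W^{*}, W\sigma W^{*}) = \tr\!\big(W\eta W^{*}\log(W\eta W^{*}) - W\eta W^{*}\log(W\sigma W^{*})\big) = S(\eta,\sigma),
\end{align*}
using cyclicity of the trace. The kernel condition is also preserved: $\ker(W\sigma W^{*}) = W\ker(\sigma)$, so $\ker(\sigma)\subset \ker(\eta)$ if and only if $\ker(W\sigma W^{*})\subset \ker(W\eta W^{*})$, and the infinite case is covered automatically.

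Applying this to $\eta = {\mathcal J}_{\rho}(\Phi_{*})$ and $\sigma = {\mathcal J}_{\rho}(\Psi_{*})$, which are states by Theorem \ref{J-prop}(ii), we conclude
\begin{align*}
S\!\big({\mathcal J}'_{\rho}(\Phi_{*}),{\mathcal J}'_{\rho}(\Psi_{*})\big) = S\!\big({\mathcal J}_{\rho}(\Phi_{*}),{\mathcal J}_{\rho}(\Psi_{*})\big),
\end{align*}
which is exactly the asserted basis independence. There is no real obstacle here; the only subtle point is the anti-linearity bookkeeping needed to see that $U\theta U^{*}\theta$ is unitary (and not merely bounded), which is why I would pause to spell out that composition carefully before appealing to the standard unitary invariance of $S(\cdot,\cdot)$.
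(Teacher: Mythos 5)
Your proposal is correct and follows essentially the same route as the paper: both rely on Theorem \ref{change-of-basis-j-states}, the observation that $U\theta U^{*}\theta\otimes\unit$ is unitary (which you usefully spell out via the composition rules for anti-unitary operators), and the unitary invariance of the relative entropy. The only cosmetic difference is that you justify that invariance directly by functional calculus and cyclicity of the trace, whereas the paper cites it as a consequence of monotonicity under CP maps (Petz--Uhlmann).
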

\begin{proof}
It suffices to prove that if $\{e'_{i}\}_i$ is another orthonormal basis of $\mathsf h$ and ${\mathcal J}'_{\rho}({\Phi_{*}})$, ${\mathcal J}'_{\rho}({\Psi_{*}})$ are the corresponding states associated with $\Phi_{*}$ and $\Psi_{*}$, then 
\begin{eqnarray}\label{entropy-unitary}
S\big({\mathcal J}'_{\rho}({\Phi_{*}}), {\mathcal J}'_{\rho}({\Psi_{*}})\big)=S\big({\mathcal J}_{\rho}({\Phi_{*}}), {\mathcal J}_{\rho}({\Psi_{*}})\big).
\end{eqnarray} 

Using the properties of the antiunitary operator $\theta$, it follows that $U\theta U^{*}\theta \otimes \unit$ is an unitary operator. Identity (\ref{entropy-unitary}) follows from an application of the well known in\-va\-rian\-ce of relative entropy with respect to unitary conjugations, which is a consequence of its monotonicity with respect to CP maps (Petz-Uhlmann Theorem), and Proposition \ref{change-of-basis-j-states}. 
\end{proof}

As a consequence of the last theorem from now on, in all computations, we can use the orthonormal basis that diagonalizes $\rho$. Hence we can assume that the antiunitary map $\theta$ and the state $\rho$ commute. Indeed, \[\theta\rho u = \theta \sum_{i} \rho_{i}\langle e_{i}, u\rangle e_{i} = \sum_{i} \rho_{i}\langle u, e_{i} \rangle e_{i} = \sum_{i} \rho_{i} \langle e_{i}, \theta u\rangle e_{i} = \rho \theta u,\] for all $u\in\mathsf h$. 

\begin{theorem}\label{deviation-equil-criterion}
Let $({\mathcal T}_{t})_{t\geq 0}$ be a QMS with a faithful invariant state $\rho$ such that $Im(\rho^{\frac{1}{2}})=\initsp$ and $\Theta$-KMS adjoint ${\mathcal T}^{\Theta}_{t}$, the following are equivalent:

\begin{itemize}
\item[(i)] $({\mathcal T}_{t})_{t\geq 0}$ satisfies a $\Theta$-SQDB condition. 

\item[(iii)] The von Neumann relative entropy $S\Big({\mathcal J}_{\rho}({\mathcal T}_{t}), {\mathcal J}_{\rho}({\mathcal T}^{\Theta}_{t})\Big)=0,$ for all $t\geq 0$. 
\end{itemize}
Consequently, the $\Theta$-SQDB condition implies that $e_{p}({\mathcal T}_{*}, \rho)=0$. 
\end{theorem}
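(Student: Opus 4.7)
The strategy is to reduce the claimed equivalence to a chain of more elementary ones using results already established in the paper. First, by Theorem \ref{J-prop}(ii) both $\mathcal{J}_\rho(\mathcal{T}_{*t})$ and $\mathcal{J}_\rho(\mathcal{T}^\Theta_{*t})$ are genuine states on $\mathcal{B}(\initsp\otimes\initsp)$, so Theorem \ref{Nonnega-RE} applies and yields
\[
S\!\left(\mathcal{J}_\rho(\mathcal{T}_{*t}),\,\mathcal{J}_\rho(\mathcal{T}^\Theta_{*t})\right)=0 \quad \Longleftrightarrow \quad \mathcal{J}_\rho(\mathcal{T}_{*t})=\mathcal{J}_\rho(\mathcal{T}^\Theta_{*t}).
\]
The hypothesis $\mathrm{Im}(\rho^{1/2})=\initsp$ then triggers the injectivity of $\mathcal{J}_\rho$ proved earlier, reducing the equality of Choi-Jamio\l kowski states to $\mathcal{T}_{*t}=\mathcal{T}^\Theta_{*t}$, and equivalently to $\mathcal{T}_t=\mathcal{T}^\Theta_t$ by duality.

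Next, I would show that $\mathcal{T}_t=\mathcal{T}^\Theta_t$ for every $t\geq 0$ is the same as the $\Theta$-SQDB condition $\mathcal{T}'_t=\Theta\circ\mathcal{T}_t\circ\Theta$. Starting from the standard KMS duality $\tr(\rho^{1/2}a\rho^{1/2}\mathcal{T}_t(y))=\tr(\rho^{1/2}\mathcal{T}'_t(a)\rho^{1/2}y)$ that characterises $\mathcal{T}'_t$ uniquely, set $a=\Theta(x^*)$ and compare with the defining relation (\ref{theta-kms-adjoint}) of $\mathcal{T}^\Theta_t$. This yields
\[
\tr\!\left(\rho^{1/2}\mathcal{T}'_t(\Theta(x^*))\rho^{1/2}y\right)=\tr\!\left(\rho^{1/2}\Theta(\mathcal{T}^\Theta_t(x^*))\rho^{1/2}y\right)
\]
for every $x,y\in\mathcal{B}(\initsp)$; faithfulness of $\rho$ then forces $\mathcal{T}'_t\circ\Theta=\Theta\circ\mathcal{T}^\Theta_t$ on $\mathcal{B}(\initsp)$, and composing on the left with $\Theta$ (and using $\Theta^2=I$) gives the identity $\mathcal{T}^\Theta_t=\Theta\circ\mathcal{T}'_t\circ\Theta$. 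Consequently $\mathcal{T}_t=\mathcal{T}^\Theta_t$ is equivalent to $\mathcal{T}'_t=\Theta\circ\mathcal{T}_t\circ\Theta$, which is precisely the $\Theta$-SQDB condition. Chaining the three equivalences closes (i)$\Leftrightarrow$(ii).

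The final assertion $e_p(\mathcal{T}_*,\rho)=0$ under $\Theta$-SQDB is then immediate: the function $t\mapsto S(\mathcal{J}_\rho(\mathcal{T}_{*t}),\mathcal{J}_\rho(\mathcal{T}^\Theta_{*t}))$ is identically zero, hence so is its derivative at $t=0$. The only real subtlety throughout the argument is bookkeeping around the antilinearity of $\Theta$: one must keep in mind that $x\mapsto\Theta(x^*)$ is linear (while $x\mapsto\Theta(x)$ is antilinear), so the substitution $a=\Theta(x^*)$ into the KMS duality is legitimate and covers all of $\mathcal{B}(\initsp)$ as $x$ varies. Once this point is handled, no further analytic input beyond the results of the preceding sections is needed.
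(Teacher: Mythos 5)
Your proposal is correct and follows essentially the same route as the paper, which likewise reduces the statement to the equality case of Theorem \ref{Nonnega-RE} together with the injectivity of $\mathcal J_\rho$ under $\mathrm{Im}(\rho^{1/2})=\initsp$; your intermediate identity $\mathcal T^\Theta_t=\Theta\circ\mathcal T'_t\circ\Theta$ is exactly the content of the paper's earlier proposition on the existence of the $\Theta$-KMS adjoint, so you have merely made explicit a step the paper leaves implicit.
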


\begin{proof} The equivalence follows from Theorem \ref{Nonnega-RE} and the injectiveness of the $\rho$-Choi-Jamio\l kowski map.
\end{proof}

As a consequence of the above theorem, we call \textbf{non-equilibrium steady state} to any invariant state $\rho$ of ${\mathcal T}$ for which $e_p(\mathcal T_*,\rho)\neq 0$.
\begin{definition}
Denote by $\overrightarrow{\omega}_{\rho} (\Phi_*)$ and $\overleftarrow{\omega}_{\rho} (\Phi_*)$ the states (positive functionals) on ${\mathcal B}(\mathsf h \otimes \mathsf h)$ associated with the $\rho$-Jamio\l kowski states ${\mathcal J}_{\rho}(\Phi_*)$ and ${\mathcal J}_{\rho}(\Phi^{\Theta}_*)$ respectively, i.e., for every $x\in{\mathcal B}(\mathsf h \otimes \mathsf h)$ 
\begin{align}
\overrightarrow{\omega}_{\rho}(\Phi_*)(x )  =\tr\Big(\mathcal J_{\rho}(\Phi_*) x\Big), \; \; \textrm{and} \;  \; 
\overleftarrow{\omega}_{\rho}(\Phi_*)(x)  = \tr\Big(\mathcal J_{\rho}(\Phi^{\Theta}_*) x\Big).
\end{align}$\Phi^{\Theta}_*$ denotes the $\Theta$-KMS adjoint of $\Phi_*$ given by (\ref{theta-kms-adjoint}) with $\Phi$ instead ${\mathcal T}_{t}$.
\end{definition}

It is particularly interesting to consider the pair of states in the above definition associated with the QMS $(\mathcal T_t)_{t \geq0}$ and its $\Theta$-KMS adjoint $({\mathcal T}^{\Theta}_{t})_{t \geq0}$ w.r.t. an invariant state $\rho$, \begin{align}
\overrightarrow{\omega}_{\rho}(t)(x )  =\tr\Big({\mathcal J_{\rho}({\mathcal T}_{*t}) x}\Big), \; \; \textrm{and} \;  \; 
\overleftarrow{\omega}_{\rho}(t)(x)  = \tr\Big({\mathcal J_{\rho}({\mathcal T}^{\Theta}_{*t}}) x\Big), 
\end{align}  that we call the forward and the backward state, respectively. As we shall see after, in finite dimension, our forward and backward states as well as its densities ${\mathcal J}_{\rho}(\mathcal T_t)$, and ${\mathcal J}_{\rho}({\mathcal T}^{\Theta}_{t}), \; t\geq0$, respectively, reduce to that introduced by F. Fagnola and R. Rebolledo \cite{fag-reb,fag-reb2}. 

It is worth to stress that the states $\overrightarrow{\omega}_{\rho}(\Phi_*),\overleftarrow{\omega}_{\rho}(\Phi_*)$ are defined on the whole space $\mathcal B(\initsp \otimes \initsp)$. 

\begin{theorem}\label{densities-teo}
For every pair of operators $a, b\in {\mathcal B}(\mathsf h)$ we have that 

\begin{align}\label{state-density-matrix} &\overrightarrow{\omega}_{\rho} (\Phi_*)(a\otimes b) = \tr \big(\rho^{\frac{1}{2}}\theta a^{*} \theta \rho^{\frac{1}{2}}\Phi(b) \big)\\
&\label{reverse-state-density-matrix}\overleftarrow{\omega}_{\rho} (\Phi_*)(a\otimes b) = \tr \big(\rho^{\frac{1}{2}}\theta a^{*} \theta \rho^{\frac{1}{2}}\Phi^{\Theta}(b) \big)\end{align}
\end{theorem}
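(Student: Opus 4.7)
The plan is to use the explicit representation of $\mathcal J_{\rho}(\Phi_*)$ obtained in Proposition \ref{jota-rho-simple}, namely
\[
\mathcal J_{\rho}(\Phi_*)=\sum_{i,j}|e_i\rangle\langle e_j|\otimes \Phi_*\big(\rho^{\frac12}|e_i\rangle\langle e_j|\rho^{\frac12}\big),
\]
together with the sesquilinear formula in Theorem \ref{J-prop}(iv). First, I would expand the trace in the orthonormal basis $\{e_k\otimes e_l\}_{k,l}$, writing
\[
\overrightarrow{\omega}_{\rho}(\Phi_*)(a\otimes b)=\tr\big(\mathcal J_{\rho}(\Phi_*)(a\otimes b)\big)=\sum_{k,l}\big\langle e_k\otimes e_l,\,\mathcal J_{\rho}(\Phi_*)(ae_k\otimes be_l)\big\rangle,
\]
and apply Theorem \ref{J-prop}(iv) with $u=e_k$, $v=e_l$, $u'=ae_k$, $v'=be_l$, which reduces the matrix element to $\langle e_l, \Phi_*(\rho^{\frac12}\theta(|ae_k\rangle\langle e_k|)^{*}\theta\rho^{\frac12})be_l\rangle$.

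The next step is algebraic simplification of the argument of $\Phi_*$. Using $(|ae_k\rangle\langle e_k|)^{*}=|e_k\rangle\langle e_k|a^{*}$, the identity $\theta XY\theta=(\theta X\theta)(\theta Y\theta)$ (valid since $\theta^{2}=\unit$), Proposition \ref{theta-proy}(i), which gives $\theta|e_k\rangle\langle e_k|\theta=|e_k\rangle\langle e_k|$, and the definition $\Theta(a)=\theta a^{*}\theta$, the argument collapses to $\rho^{\frac12}|e_k\rangle\langle e_k|\Theta(a)\rho^{\frac12}$. After this simplification the only $k$-dependence is concentrated in the projector $|e_k\rangle\langle e_k|$, so summing first over $k$ yields, using that $\Phi_*$ is bounded on $L_1(\initsp)$ and the partial sums $\sum_{k\le N}|e_k\rangle\langle e_k|$ converge $\sigma$-strongly to $\unit$ (multiplication on the right by the trace-class operator $\Theta(a)\rho^{\frac12}$ giving convergence in trace norm),
\[
\sum_k\Phi_*\big(\rho^{\frac12}|e_k\rangle\langle e_k|\Theta(a)\rho^{\frac12}\big)=\Phi_*\big(\rho^{\frac12}\Theta(a)\rho^{\frac12}\big).
\]
Summing the remaining $l$ yields $\tr\big(\Phi_*(\rho^{\frac12}\Theta(a)\rho^{\frac12})\,b\big)$, and the usual trace duality between $\Phi_*$ and its predual $\Phi$ gives $\tr\big(\rho^{\frac12}\Theta(a)\rho^{\frac12}\Phi(b)\big)=\tr\big(\rho^{\frac12}\theta a^{*}\theta\rho^{\frac12}\Phi(b)\big)$, which is (\ref{state-density-matrix}).

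For (\ref{reverse-state-density-matrix}) I would simply replay the same computation with $\Phi^{\Theta}_{*}$ in place of $\Phi_{*}$, since by definition $\overleftarrow{\omega}_{\rho}(\Phi_*)(x)=\tr(\mathcal J_{\rho}(\Phi^{\Theta}_*)x)$ and the derivation above never used anything specific to $\Phi_*$ beyond being a bounded CP map.

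The main technical subtlety I expect is the justification of the interchange of $\Phi_*$ with the infinite sum $\sum_k|e_k\rangle\langle e_k|$; this is where one must exploit boundedness of $\Phi_*$ on $L_1(\initsp)$ together with the fact that $\Theta(a)\rho^{\frac12}$ and $\rho^{\frac12}\Theta(a)\rho^{\frac12}$ are trace class (or Hilbert--Schmidt) so that the partial sums converge in the appropriate norm. Everything else is straightforward algebraic manipulation relying on the rules for $\theta$ collected in Proposition \ref{theta-proy} and on the identity (iv) of Theorem \ref{J-prop}.
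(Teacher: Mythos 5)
Your proposal is correct and follows essentially the same route as the paper: both arguments rest on item (iv) of Theorem \ref{J-prop} followed by the trace duality between $\Phi_*$ and $\Phi$, with the backward identity obtained by substituting $\Phi_*^{\Theta}$ for $\Phi_*$. The only difference is organizational: the paper proves the identity for rank-one $a,b$ and extends by linearity and density, whereas you expand the trace over the basis $\{e_k\otimes e_l\}_{k,l}$ for general bounded $a,b$ and justify the resummation via the trace-norm convergence of $\sum_{k\le N}\rho^{\frac{1}{2}}|e_k\rangle\langle e_k|\,\theta a^{*}\theta\,\rho^{\frac{1}{2}}$ to $\rho^{\frac{1}{2}}\theta a^{*}\theta\rho^{\frac{1}{2}}$ --- both are valid.
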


\begin{proof}
Using \textit{iv)} of Theorem \ref{J-prop} we have for every $u , v , u^\prime , v^\prime  \in \mathsf h$, that  
\begin{eqnarray*}
\begin{aligned}
\overrightarrow{\omega}_{\rho}(\Phi_*)\big(|u^\prime \rangle \langle u |\otimes |v^\prime \rangle \langle v | \big)  &= \tr\big({\mathcal J}_{\rho}(\Phi_*) |u^\prime \otimes v^\prime \rangle \langle u \otimes v |\big) \\ 
& = \langle u \otimes v , {\mathcal J}_{\rho}(\Phi_*)u^\prime \otimes v^\prime \rangle \\
&= \left\langle v , \Phi_* \big( \rho^{\frac{1}{2}}\theta(|u^\prime \rangle \langle u |)\theta\rho^{\frac{1}{2}}\big) v^\prime \right\rangle \\ 
&= \tr\Big( \rho^{\frac{1}{2}}\theta(|u^\prime \rangle \langle u |)^{*}\theta\rho^{\frac{1}{2}}\Phi\big(|v^\prime \rangle \langle v |\big)\Big).
\end{aligned} 
\end{eqnarray*} This identity can be extended to every pair of elements $a, b\in{\mathcal B}(\mathsf h)$, by linearity and density. The proof for $\overleftarrow{\omega}_\rho$ is similar. 
\end{proof}

\begin{proposition}The forward and backward states satisfy the following relation \begin{align} \label{states-form}\overleftarrow{\omega}_{\rho} (\Phi_*)(x)=\overrightarrow{\omega}_\rho(\Phi_*)(\flip x \flip ),\end{align} where $\flip$ is the flip operator on $\mathcal B(\initsp \otimes \initsp)$, defined by $\flip u\otimes v= v \otimes u$.\end{proposition}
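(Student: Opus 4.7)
The plan is to verify the identity first on simple tensors $x = a \otimes b$ with $a,b \in \mathcal B(\initsp)$ and then extend to all of $\mathcal B(\initsp \otimes \initsp)$ by linearity and ultraweak density, exploiting the fact that both sides are normal linear functionals (they come from $\rho$-Jamio\l kowski trace-class densities). Since $\flip(a \otimes b)\flip = b \otimes a$, Theorem \ref{densities-teo} reduces the claim on simple tensors to the scalar identity
\begin{align*}
\tr\big(\rho^{\frac{1}{2}}\Theta(a)\rho^{\frac{1}{2}}\Phi^{\Theta}(b)\big) = \tr\big(\rho^{\frac{1}{2}}\Theta(b)\rho^{\frac{1}{2}}\Phi(a)\big).
\end{align*}

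To establish this, I would invoke Theorem \ref{invariance-entropy} to work in an orthonormal basis $\{e_i\}_i$ that diagonalizes $\rho$; as noted after that theorem, this basis is fixed by the antiunitary $\theta$ and $\rho^{\frac{1}{2}}$ commutes with $\theta$. In this basis the matrix elements of $\Theta(x) = \theta x^{*} \theta$ reduce to $(\Theta(x))_{ij} = x_{ji}$, so a short matrix-element computation gives the symmetry of the bilinear pairing
\begin{align*}
\tr\big(\rho^{\frac{1}{2}}\Theta(y)\rho^{\frac{1}{2}} z\big) = \sum_{i,j}\rho_i^{\frac{1}{2}}\rho_j^{\frac{1}{2}} y_{ji} z_{ji} = \tr\big(\rho^{\frac{1}{2}}\Theta(z)\rho^{\frac{1}{2}} y\big).
\end{align*}
Combining this symmetry with the $\Theta$-KMS duality relation (\ref{theta-kms-adjoint}) applied with $x^{*}=b$ and $y=a$, one obtains the displayed equality. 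The left-hand side of (\ref{states-form}) equals $\tr(\rho^{\frac{1}{2}}\Theta(a)\rho^{\frac{1}{2}}\Phi^{\Theta}(b))$ by Theorem \ref{densities-teo}, and by the symmetry above this coincides with $\tr(\rho^{\frac{1}{2}}\Theta(\Phi^{\Theta}(b))\rho^{\frac{1}{2}} a)$, which by the $\Theta$-KMS duality is $\tr(\rho^{\frac{1}{2}}\Theta(b)\rho^{\frac{1}{2}}\Phi(a)) = \overrightarrow{\omega}_{\rho}(\Phi_{*})(b\otimes a)$.

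I expect the only subtle point to be the careful bookkeeping with the antilinear operator $\theta$; this is precisely what is tamed by passing to a basis that diagonalizes $\rho$ (so that $\Theta$ becomes the transpose in that basis). Once the identity holds on the linear span of simple tensors, the extension to $x \in \mathcal B(\initsp \otimes \initsp)$ is routine since both $\overleftarrow{\omega}_{\rho}(\Phi_{*})$ and $x \mapsto \overrightarrow{\omega}_{\rho}(\Phi_{*})(\flip x \flip)$ are normal functionals defined by trace-class operators, and the simple tensors are ultraweakly total in $\mathcal B(\initsp \otimes \initsp)$.
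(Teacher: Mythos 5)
Your proof is correct and follows essentially the same route as the paper: both arguments verify the identity on simple tensors $a\otimes b$ via the density formulas of Theorem \ref{densities-teo} and then extend by linearity and density. The only difference is organizational --- the paper's trace manipulation passes through the ordinary KMS adjoint $\Phi'$ together with the identity $\Phi^{\Theta}=\Theta\circ\Phi'\circ\Theta$, whereas you fold those two steps into an explicit symmetry of the pairing $\tr\big(\rho^{\frac{1}{2}}\Theta(\cdot)\rho^{\frac{1}{2}}\,\cdot\,\big)$ (computed in the diagonalizing basis of $\rho$) followed by a direct appeal to the defining $\Theta$-KMS duality relation, which amounts to the same computation; your extension step via normality and ultraweak totality of simple tensors is, if anything, stated more carefully than the paper's.
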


\begin{proof}Let $a\otimes b$ any simple tensor on $\mathcal B (\initsp \otimes \initsp)$. Notice that composition of the flip operator with simply tensors acts as follows: $\flip (a\otimes b) \flip=b\otimes a$. Therefore, the definition of $\Phi^\Theta$ together with (\ref{reverse-state-density-matrix}) imply

\begin{align*}&\overleftarrow{\omega}_{\rho} (\Phi_*)(a\otimes b)=\tr{\big(\rho^\frac{1}{2} \theta a^* \theta \rho^\frac{1}{2} \Phi^\Theta (b)\big)}=\tr{\big(( \rho^\frac{1}{2} a^*  \rho^\frac{1}{2}\theta \Phi^\Theta (b)\theta)^*\big)}=\tr{\big(\rho^\frac{1}{2} a \rho^\frac{1}{2} \theta \Phi^\Theta (b)^*\theta \big)}\\
&=\tr{\big(\rho^\frac{1}{2} a \rho^\frac{1}{2}  \Phi^\prime(\theta b^* \theta) \big)}=\tr{\big(\rho^\frac{1}{2}\Phi(a) \rho^\frac{1}{2}  \theta b^* \theta \big)}=\overrightarrow{\omega}_{\rho} (\Phi_*)(b\otimes a)=\overrightarrow{\omega}_{\rho} (\Phi_*)(\flip (a\otimes b) \flip)\end{align*}
The extension to the whole $\mathcal B (\initsp \otimes \initsp)$ is immediate by density and linearity, indeed, 

\begin{align*}&\overleftarrow{\omega}_{\rho} (\Phi_*)(x)=\sum_{i,j,k,l}x_{iljk} \overleftarrow{\omega}_\rho(\Phi_*)(|v_i\otimes v_l \rangle \langle v_j \otimes v_k | )\\
&=\sum_{i,j,k,l}x_{iljk} \overleftarrow{\omega}_\rho(\Phi_*)(|v_i \rangle \langle v_j| \otimes |v_l \rangle \langle v_k|)\\ 
& =\sum_{i,j,k,l}x_{iljk} \overrightarrow{\omega}_\rho(\Phi_*)( \flip |v_i\otimes v_l \rangle \langle v_j \otimes v_k |  \flip)\\
&=\overrightarrow{\omega}_\rho(\Phi_*)(\flip x\flip ). \end{align*}

 \end{proof}

 \begin{lemma}\label{lemma-flip}
 
 \begin{enumerate}
 
 \item The operator $\log ( \flip x \flip)$ is well defined for any strictly positive operator $x$ on $L_{1}(\initsp \otimes \initsp)$ and satisfies
 \begin{align}\log ( \flip x\flip)=\label{log-flip} \flip (\log x)\flip.\end{align}
 
 \item If $\Phi_*$ is a CP map on $L_1 (\initsp)$, then
 \begin{align}  \mathcal J_\rho (\Phi_*^\Theta)=\flip \mathcal J_\rho (\Phi_*) \flip, \label{flip1}\end{align}
 \end{enumerate}
 \end{lemma}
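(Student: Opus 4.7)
My plan is to handle the two items in order, relying on the fact that $\flip$ is a self-adjoint unitary on $\initsp\otimes\initsp$ (this follows from $\flip^{2}=\unit$ together with $\langle \flip(u\otimes v), w\otimes z\rangle=\langle v,w\rangle\langle u,z\rangle=\langle u\otimes v,\flip(w\otimes z)\rangle$ on simple tensors, extended by linearity and density).

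For part 1, the key observation is that conjugation by $\flip$, namely the map $x\mapsto \flip x\flip$, is a $*$-automorphism of $\mathcal B(\initsp\otimes\initsp)$ because $\flip=\flip^{*}=\flip^{-1}$. Since $x$ is strictly positive, its spectrum is contained in $(0,\infty)$ and $\log x$ is well defined through the Borel functional calculus; writing the spectral resolution $x=\int \lambda\,dE(\lambda)$, conjugation gives $\flip x\flip=\int\lambda\,d\bigl(\flip E(\lambda)\flip\bigr)$, where $\lambda\mapsto \flip E(\lambda)\flip$ is again a projection-valued measure with the same spectral support. Therefore $\log(\flip x\flip)=\int\log\lambda\, d(\flip E(\lambda)\flip)=\flip\bigl(\int\log\lambda\,dE(\lambda)\bigr)\flip=\flip(\log x)\flip$, which is (\ref{log-flip}). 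Equivalently, this is an instance of the general principle that any $*$-automorphism intertwines the continuous functional calculus.

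For part 2, the strategy is to compare the two sides through the trace pairing with arbitrary $x\in\mathcal B(\initsp\otimes\initsp)$. By the definition of $\overleftarrow{\omega}_{\rho}(\Phi_{*})$ and the preceding proposition (\ref{states-form}),
\begin{align*}
\tr\bigl(\mathcal J_{\rho}(\Phi_{*}^{\Theta})\,x\bigr)=\overleftarrow{\omega}_{\rho}(\Phi_{*})(x)=\overrightarrow{\omega}_{\rho}(\Phi_{*})(\flip x\flip)=\tr\bigl(\mathcal J_{\rho}(\Phi_{*})\,\flip x\flip\bigr).
\end{align*}
Using cyclicity of the trace together with $\flip^{2}=\unit$, the right-hand side equals $\tr\bigl(\flip\,\mathcal J_{\rho}(\Phi_{*})\,\flip\,x\bigr)$. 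Since this identity holds for \emph{every} $x\in\mathcal B(\initsp\otimes\initsp)$ and the duality pairing between $L_{1}(\initsp\otimes\initsp)$ and $\mathcal B(\initsp\otimes\initsp)$ is non-degenerate, I conclude $\mathcal J_{\rho}(\Phi_{*}^{\Theta})=\flip\,\mathcal J_{\rho}(\Phi_{*})\,\flip$, which is (\ref{flip1}).

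There is no real obstacle here; both items reduce to routine bookkeeping. The only places requiring a small amount of care are (a) verifying that $\flip$ is a self-adjoint unitary, so that the functional-calculus argument in part 1 and the trace-cyclicity step in part 2 are both legitimate, and (b) confirming that the identity of states in part 2 can be upgraded to an operator identity, for which the non-degeneracy of the $L_{1}$--$\mathcal B$ pairing (and the fact that $\mathcal J_{\rho}(\Phi_{*}^{\Theta})$ and $\flip\mathcal J_{\rho}(\Phi_{*})\flip$ both lie in $L_{1}(\initsp\otimes\initsp)$ by Theorem \ref{J-prop}$\,(ii)$) is exactly what is needed.
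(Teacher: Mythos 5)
Your proof is correct, and both items follow the same overall strategy as the paper's, but each is executed a little differently. For part 1 the paper writes the spectral decomposition of $x$ explicitly as $\sum_{i,j} x_{ij}\,|u_i\otimes u_j\rangle\langle u_i\otimes u_j|$ --- i.e., with eigenvectors that are simple tensors, which is not automatic for a general positive trace-class operator on $\initsp\otimes\initsp$ --- and then flips each rank-one projection by hand; your argument via the projection-valued measure and the fact that conjugation by the self-adjoint unitary $\flip$ intertwines the Borel functional calculus is more general and sidesteps that unjustified assumption (it also handles cleanly the fact that $\log x$ is unbounded when $\dim\initsp=\infty$). For part 2 the paper re-runs the trace computation on simple tensors $a\otimes b$, unwinding the definitions of $\Phi^{\Theta}$ and the KMS adjoint in essentially the same chain of equalities used to prove (\ref{states-form}); you instead cite (\ref{states-form}) directly, apply cyclicity of the trace, and invoke the non-degeneracy of the $L_1$--$\mathcal B$ duality to upgrade the identity of functionals to an identity of operators. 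The two routes are logically equivalent --- (\ref{states-form}) together with the definitions of $\overrightarrow{\omega}_\rho(\Phi_*)$ and $\overleftarrow{\omega}_\rho(\Phi_*)$ as the trace functionals of ${\mathcal J}_\rho(\Phi_*)$ and ${\mathcal J}_\rho(\Phi_*^{\Theta})$ is precisely the statement that the two operators have the same pairing against every $x$ --- so your version buys brevity and avoids duplicating a computation already carried out, while the paper's keeps the lemma self-contained.
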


 \begin{proof}\begin{enumerate} \item Being $x$ a compact self-adjoint operator, we can consider its spectral decomposition $x=\sum_{i,j} x_{ij} |u_i \otimes u_j \rangle \langle u_i \otimes u_j |$. By direct computation, \begin{align*} \flip(\log x) \flip &=\sum_{ij}\log x_{ij} \flip  |u_i \otimes u_j \rangle \langle u_i \otimes u_j | \flip\\ 
 &=\sum_{ij}\log x_{ij}  |u_j \otimes u_i \rangle \langle u_j \otimes u_i | \\
 &=\log ( \flip x \flip). \end{align*}
 
\item Take any simple tensor $a\otimes b\in \mathcal B(\initsp \otimes \initsp)$. Since
we have \begin{align*}& \tr{\big(\flip \mathcal J_\rho ( \Phi_*^\Theta)\flip (a\otimes b) \big)}=\overleftarrow{\omega}_\rho (\Phi_*)(b\otimes a)= \tr{ \big( \rho^\frac{1}{2} \theta b^* \theta \rho^\frac{1}{2} \Phi^\Theta(a) \big)}=\tr{\big(\rho^\frac{1}{2} b \rho^\frac{1}{2} \theta \Phi^\Theta(a)^* \theta  \big)}\\
&=\tr{\big( \rho^\frac{1}{2} \Phi^\prime (\theta a^* \theta)\rho^\frac{1}{2} b   \big)}=\tr{\big(  \rho^\frac{1}{2} \theta a^* \theta\rho^\frac{1}{2}  \Phi(b) \big)} =\overrightarrow{\omega}_\rho (\Phi_*)(a \otimes b)=\tr{\big( \mathcal J_\rho (\Phi_*) a\otimes b \big)}, \end{align*} by density and linearity (\ref{flip1}) follows.  
\end{enumerate} 

  \end{proof}

\begin{theorem} The Quantum Relative Entropy of a QMS $(\mathcal T_t)_{t\geq 0}$ with respect to an invariant state $\rho$ satisfies the explicit symmetric formula \begin{align*}S(\mathcal T_{*t},\mathcal T^\Theta_{*t})=\frac{1}{2}\tr\Big(\big(\mathcal J_\rho(\mathcal T_{*t})-\mathcal J_\rho(\mathcal T^\Theta_{*t})\big)\big(\log\mathcal J_\rho(\mathcal T_{*t})-\log\mathcal J_\rho(\mathcal T^\Theta_{*t})\big)\Big).\end{align*}
\end{theorem}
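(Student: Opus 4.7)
The plan is to exploit the flip symmetry established in Lemma \ref{lemma-flip}. Write $A := \mathcal J_\rho(\mathcal T_{*t})$ and $B := \mathcal J_\rho(\mathcal T^\Theta_{*t})$ for brevity, and recall that $S(A,B) = \tr(A\log A) - \tr(A \log B)$. Expanding the right-hand side of the claimed formula produces
\begin{align*}
\tfrac{1}{2}\tr\big((A-B)(\log A - \log B)\big) = \tfrac{1}{2}\big[S(A,B) + S(B,A)\big],
\end{align*}
so the theorem reduces to establishing the symmetry $S(A,B) = S(B,A)$ for this particular pair of states.

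To prove the symmetry, I invoke Lemma \ref{lemma-flip}(2), which gives $B = \flip A \flip$, together with Lemma \ref{lemma-flip}(1), which yields $\log B = \flip (\log A)\flip$. Now $\flip$ is self-adjoint and satisfies $\flip^{2} = \unit$, so it is a unitary involution. Cyclicity of the trace then gives the two identities
\begin{align*}
\tr(B\log B) &= \tr\big(\flip A \flip \cdot \flip(\log A)\flip \big) = \tr(A\log A), \\
\tr(A\log B) &= \tr\big(A \cdot \flip(\log A)\flip\big) = \tr\big(\flip A \flip \cdot \log A\big) = \tr(B\log A).
\end{align*}
Subtracting the second from the first yields $S(B,A) = \tr(B\log B) - \tr(B\log A) = \tr(A\log A) - \tr(A\log B) = S(A,B)$, and substituting back into the expansion above produces the desired symmetric formula.

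The only subtlety is that Lemma \ref{lemma-flip}(1) is stated for strictly positive operators, whereas a priori $A$ and $B$ are only positive. This is handled by the standing assumption introduced just before the definition of the entropy production rate, namely $\ker \mathcal J_\rho(\mathcal T^\Theta_{*t}) \subset \ker \mathcal J_\rho(\mathcal T_{*t})$, so that both logarithms may be interpreted consistently on the common support (where the operators are strictly positive) and vanish on the orthogonal complement; the flip intertwines the supports by part (2) of the same lemma, so the formal calculation above is valid on the support and the traces of the remaining pieces vanish. This support bookkeeping is the only non-routine point; the rest is cyclicity of the trace combined with $\flip^{2}=\unit$.
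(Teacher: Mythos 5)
Your proposal is correct and follows essentially the same route as the paper: both reduce the claim to the symmetry $S(\mathcal J_\rho(\mathcal T_{*t}),\mathcal J_\rho(\mathcal T^\Theta_{*t}))=S(\mathcal J_\rho(\mathcal T^\Theta_{*t}),\mathcal J_\rho(\mathcal T_{*t}))$ via the identity $S(A,B)+S(B,A)=\tr\big((A-B)(\log A-\log B)\big)$, and both derive that symmetry from Lemma \ref{lemma-flip} (the relations $\mathcal J_\rho(\Phi_*^\Theta)=\flip\,\mathcal J_\rho(\Phi_*)\,\flip$ and $\log(\flip x\flip)=\flip(\log x)\flip$), the paper phrasing the computation through the states $\overrightarrow{\omega}_\rho$, $\overleftarrow{\omega}_\rho$ and you phrasing it through cyclicity of the trace with $\flip^2=\unit$. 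Your explicit attention to the support/kernel bookkeeping is a point the paper leaves implicit, but it does not change the argument.
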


\begin{proof} By (\ref{states-form}) and (\ref{log-flip}) the following equalities hold 
\begin{align*}&S(\mathcal T^\Theta_{*t},\mathcal T_{*t})=\overleftarrow{\omega}_\rho (t)\Big(\log\mathcal J_\rho (\mathcal T^\Theta_{*t})-\log\mathcal J_\rho(\mathcal T_{*t})\Big)\\
&=\overrightarrow{\omega}_\rho(t)\Big(\flip \big(\log\mathcal J_\rho (\mathcal T^\Theta_{*t})-\log\mathcal J_\rho(\mathcal T_{*t})\big)\flip\Big)\\
&=\overrightarrow{\omega}_\rho(t)\Big(\log(\flip \mathcal J_\rho (\mathcal T^\Theta_{*t})\flip)-\log(\flip (\mathcal J_\rho (\mathcal T_{*t})\flip)\Big)\\
&=S(\mathcal T_{*t},\mathcal T^\Theta_{*t})+\overrightarrow{\omega}_\rho(t) \Big(\log(\flip (\mathcal J_\rho (\mathcal T^\Theta_{*t})\flip)-\log\mathcal J_\rho(\mathcal T_{*t})\\
&	\ \ \ +\log\mathcal J_\rho(\mathcal T^\Theta_{*t})-\log(\flip \mathcal J_\rho (\mathcal T_{*t})\flip) \Big)\\
&=S(\mathcal T_{*t},\mathcal T^\Theta_{*t}), \end{align*} where we have used (\ref{flip1}).

The proof is complete recalling that \begin{align*}S(\mathcal T_{*t},\mathcal T^\Theta_{*t})+S(\mathcal T^\Theta_{*t},\mathcal T_{*t})=\tr\Big(\big(\mathcal J_\rho(\mathcal T_{*t})-\mathcal J_\rho(\mathcal T^\Theta_{*t})\big)\big(\log\mathcal J_\rho(\mathcal T_{*t})-\log\mathcal J_\rho(\mathcal T^\Theta_{*t})\big)\Big).\end{align*}
\end{proof}

As an immediate consequence of the above theorem we get an explicit formula for the entropy production rate. 

\begin{corollary}
The quantum entropy production rate is given by 
\begin{align}\label{epr-formula} 
e_p(\mathcal T_{*}, \rho)&=\frac{1}{2}\tr\Big(\big(\mathcal J_\rho(\mathcal L_{*})-\mathcal J_\rho(\mathcal L^\Theta_{*})\big)\lim_{t\to 0^+} \big( \log \mathcal J_\rho (\mathcal T_{*t})-\log \mathcal J_\rho ( {\mathcal T}^\Theta_{*t}) \big)\Big), 
\end{align} where $\mathcal J_\rho(\mathcal L_*)$ denotes $\displaystyle \lim_{t\to 0^+} \frac{\mathcal J_\rho(\mathcal T_{*t})-\mathcal J_\rho(\unit)}{t}$, whenever the limit exists. 
\end{corollary}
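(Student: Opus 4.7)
The plan is to start from the symmetric formula for the relative entropy proved in the preceding theorem and differentiate at $t=0$. Writing
\[
A(t):=\mathcal J_\rho(\mathcal T_{*t})-\mathcal J_\rho(\mathcal T^\Theta_{*t}),\qquad B(t):=\log\mathcal J_\rho(\mathcal T_{*t})-\log\mathcal J_\rho(\mathcal T^\Theta_{*t}),
\]
the preceding theorem gives $S(\mathcal T_{*t},\mathcal T^\Theta_{*t})=\tfrac12\tr(A(t)B(t))$. Since $\mathcal T_{*0}=\mathcal T^\Theta_{*0}=\unit$ on $L_1(\initsp)$, we have $A(0)=\mathcal J_\rho(\unit)-\mathcal J_\rho(\unit)=0$, and the entropy production rate becomes the forward derivative of $\tfrac12\tr(A(t)B(t))$ at $t=0$. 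Since $S(\mathcal T_{*0},\mathcal T^\Theta_{*0})=0$, I would rewrite
\[
e_p(\mathcal T_*,\rho)=\lim_{t\to 0^+}\frac{1}{2}\tr\!\Big(\frac{A(t)}{t}\,B(t)\Big).
\]

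Next I would identify the limit of the first factor. The uniform continuity of $\mathcal T$ and $\mathcal T^\Theta$ yields the expansions $\mathcal T_{*t}=\unit+t\mathcal L_*+o(t)$ and $\mathcal T^\Theta_{*t}=\unit+t\mathcal L^\Theta_*+o(t)$ in the operator norm on $L_1(\initsp)$. By the continuity and linearity of $\mathcal J_\rho\colon\mathcal{CP}(L_1(\initsp))\to L_1(\initsp\otimes\initsp)$ established in Theorem \ref{J-prop}(iii), one has
\[
\frac{A(t)}{t}=\frac{\mathcal J_\rho(\mathcal T_{*t})-\mathcal J_\rho(\unit)}{t}-\frac{\mathcal J_\rho(\mathcal T^\Theta_{*t})-\mathcal J_\rho(\unit)}{t}\xrightarrow[t\to 0^+]{\|\cdot\|_1}\mathcal J_\rho(\mathcal L_*)-\mathcal J_\rho(\mathcal L^\Theta_*),
\]
which is precisely the quantity denoted $\mathcal J_\rho(\mathcal L_*)-\mathcal J_\rho(\mathcal L^\Theta_*)$ in the statement. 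Passing this trace-norm limit through the trace pairing against $B(t)$ (under the standing assumption, encoded in the qualifier ``whenever the limit exists'', that the pointwise limit $\lim_{t\to 0^+}B(t)$ defines a sesquilinear form that pairs continuously against the trace-class operator $\mathcal J_\rho(\mathcal L_*)-\mathcal J_\rho(\mathcal L^\Theta_*)$) yields
\[
e_p(\mathcal T_*,\rho)=\tfrac12\tr\!\Big((\mathcal J_\rho(\mathcal L_*)-\mathcal J_\rho(\mathcal L^\Theta_*))\,\lim_{t\to 0^+}\big(\log\mathcal J_\rho(\mathcal T_{*t})-\log\mathcal J_\rho(\mathcal T^\Theta_{*t})\big)\Big),
\]
which is the desired formula.

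The main technical subtlety to be addressed carefully is the divergence of each log separately: since $\mathcal J_\rho(\unit)=\omega_\rho$ is a rank-one state (it is the projection onto the vector $\sum_i e_i\otimes\rho^{1/2}e_i$, as follows from the definition of $\omega_\rho$), both $\log\mathcal J_\rho(\mathcal T_{*t})$ and $\log\mathcal J_\rho(\mathcal T^\Theta_{*t})$ blow up on the orthogonal complement of that vector as $t\to 0^+$. The formula is only meaningful because these divergences cancel in the difference $B(t)$. This cancellation is built into the hypothesis of the corollary, so rigorously the proof consists of (i) the Leibniz-rule argument above that isolates $A'(0)$ with $B(0^+)$ as the surviving term (the other Leibniz term vanishes because $A(0)=0$), and (ii) invoking the existence hypothesis to legitimize the passage to the limit inside the trace.
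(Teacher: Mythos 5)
Your proof is correct and follows the same route the paper intends: the paper states this corollary as an immediate consequence of the preceding symmetric formula, and your argument (using $S(0)=0$ and $A(0)=0$ to reduce the derivative to $\lim_{t\to 0^+}\tfrac12\tr\big(\tfrac{A(t)}{t}B(t)\big)$, then splitting the limits under the stated existence hypotheses) is exactly the differentiation that makes it immediate. Your additional remarks on the rank-one nature of $\omega_\rho$ and the cancellation of the logarithmic divergences are accurate and supply care that the paper leaves implicit.
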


\begin{definition} A CP operator $\Phi$ is called parity preserving, with respect to the antiunitary operator $\theta$, if  it commutes with the reversing operation $\Theta(x)=\theta x^* \theta$ $x\in \mathcal B(\initsp)$, i.e., $\Theta(\Phi(a))=\Phi (\Theta(a))$ for all $a\in \mathcal B(\initsp)$. 
 
A QMS $({\mathcal T}_{t})_{t\geq 0}$ is parity preserving if and only if  $\mathcal T_{t}$ is parity preserving for every $t\geq 0$.
\end{definition}
 
\begin{corollary}\label{cor-parity-pres}
If the QMS is parity preserving, the Quantum Relative Entropy satisfies
\begin{align*} S(\mathcal T_{*t},\mathcal T^\Theta_{*t})=S(\mathcal T_{*t},\mathcal T^\prime_{*t}). \end{align*} In other words, the QEPR can be computed using either the usual KMS adjoint or the $\Theta$-KMS adjoint introduced in (\ref{theta-kms-adjoint-def}).
\end{corollary}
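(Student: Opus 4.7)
The plan is to reduce the corollary to the stronger claim that, whenever $\mathcal T$ is parity preserving, the $\Theta$-KMS adjoint coincides with the ordinary KMS adjoint: $\mathcal T^\Theta_t=\mathcal T'_t$ for every $t\ge 0$. Once this identity is in hand the pre-dual semigroups $\mathcal T^\Theta_{*t}$ and $\mathcal T'_{*t}$ are equal, hence so are their $\rho$-Choi-Jamio\l kowski states, and the two relative entropies match.

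First I invoke the proposition just preceding the corollary to write $\mathcal T^\Theta_t=\Theta\circ\mathcal T'_t\circ\Theta$; because $\Theta^{2}=\mathrm{id}$, the needed identity $\mathcal T^\Theta_t=\mathcal T'_t$ is equivalent to $\mathcal T'$ being parity preserving. Next I use Theorem~\ref{invariance-entropy} to work in the orthonormal basis that diagonalizes $\rho$; in this basis $\theta\rho=\rho\theta$, which gives $\Theta(\rho^{1/2})=\rho^{1/2}$ and, by the anti-homomorphism property, $\Theta(\rho^{1/2}x\rho^{1/2})=\rho^{1/2}\Theta(x)\rho^{1/2}$.

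The core of the argument is a double application of the elementary identity $\tr(AB)=\tr(\Theta(B)\Theta(A))$, which follows from $\tr\circ\Theta=\tr$ together with $\Theta(AB)=\Theta(B)\Theta(A)$. Starting from KMS duality
\[
\tr\bigl(\rho^{1/2}x\,\rho^{1/2}\mathcal T_t(y)\bigr)=\tr\bigl(\rho^{1/2}\mathcal T'_t(x)\rho^{1/2}y\bigr),
\]
one pass of the trace identity together with the parity-preserving hypothesis $\Theta\circ\mathcal T_t=\mathcal T_t\circ\Theta$ turns the left-hand side into $\tr\bigl(\rho^{1/2}\Theta(x)\rho^{1/2}\mathcal T_t(\Theta(y))\bigr)$; applying KMS duality once more and then the trace identity a second time yields
\[
\tr\bigl(\rho^{1/2}\mathcal T'_t(x)\rho^{1/2}y\bigr)=\tr\bigl(\rho^{1/2}\Theta(\mathcal T'_t(\Theta(x)))\rho^{1/2}y\bigr).
\]
Since this equality holds for every $x,y\in\mathcal B(\initsp)$ and $\rho^{1/2}$ has dense range, one concludes $\mathcal T'_t=\Theta\circ\mathcal T'_t\circ\Theta$, i.e.\ $\mathcal T'$ is parity preserving, closing the reduction.

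The delicate part is the bookkeeping under $\Theta$: since $\Theta$ reverses products, one must track the order of factors through each application of the trace identity, and the absorption of $\Theta$ on the $\rho^{1/2}$ factors is exactly where the reduction to a $\rho$-diagonal basis is used. Beyond this, the argument is purely algebraic and relies only on the two duality relations together with the involutive anti-homomorphism structure of $\Theta$.
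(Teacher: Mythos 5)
Your proposal is correct and follows the same route as the paper: both reduce the corollary to the identity $\mathcal T^\Theta_t=\mathcal T'_t$, obtained from $\mathcal T^\Theta_t=\Theta\circ\mathcal T'_t\circ\Theta$ together with $\Theta^2=\mathrm{id}$. The paper declares this step ``immediate''; your double application of the trace identity $\tr(AB)=\tr(\Theta(B)\Theta(A))$ to the KMS duality relation is exactly the verification (that $\mathcal T'$ inherits parity preservation from $\mathcal T$, using faithfulness of $\rho$ and $\Theta(\rho^{1/2})=\rho^{1/2}$ in the diagonalizing basis) which the paper leaves implicit, and you carry it out correctly.
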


\begin{proof}Recall that $\theta^2=\unit$, it is then immediate that $\mathcal T^\prime_{t}=\mathcal T_t^\Theta$ when the QMS is parity preserving. \end{proof}

\subsection{The finite dimensional case}

In finite dimension, F. Fagnola and R. Rebolledo have given a Quantum Entropy Production Rate scheme based on suitable defined forward and backward two-point states motivated by the classical case. These states are defined as follows.

\begin{definition}
Let $\{e_i\}_i$ be the diagonalizing basis of an invariant state $\rho$ of a QMS $\mathcal T$. The forward two-point state is defined on the von Neumann tensor product (i.e., as von Neumann algebras) $\mathcal B(\initsp) \otimes \mathcal B(\initsp)$ by 
\begin{align*} \overrightarrow{\Omega}_t(a\otimes b)=\tr\Big({\rho^{\frac{1}{2}} \theta a^* \theta \rho^{\frac{1}{2}} \mathcal T_t(b) }\Big), \ \ \ \ a,b\in \mathcal B(\initsp);\end{align*}
while the backward two-point state  is
 \begin{align*}\overleftarrow{\Omega}_t(a\otimes b)=\tr\Big({\rho^\frac{1}{2}\theta \mathcal T_t (a^*)\theta \rho^\frac{1}{2} b }\Big), \ \ \ \ a,b,\in\mathcal B(\initsp) ,\end{align*} where $\theta$ is the antiunitary operator of conjugation with respect to the basis $\{e_i\}_i$.
\end{definition}

\begin{theorem}(Fagnola-Rebolledo \cite{fag-reb})
The densities of the forward two-point state $\overrightarrow{\Omega}_t$ are given by $\overrightarrow{D}_t=(\unit \otimes \mathcal T_{*t}) (\omega_\rho)$ and, \textbf{if $\initsp$ is finite dimensional}, the density of $\overleftarrow{\Omega}_t$ is $\overleftarrow{D}_t=(\mathcal T_{*t} \otimes \unit) (\omega_\rho)$, i.e., $\overrightarrow{\Omega}_t(a\otimes b) = \tr \big(\overrightarrow{D}_t a\otimes b\big)$ and $\overleftarrow{\Omega}_t(a\otimes b)= \tr \big(\overleftarrow{D}_t a\otimes b\big)$.
\end{theorem}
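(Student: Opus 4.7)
The plan is to reduce the theorem to already-proven facts, namely Theorem~\ref{densities-teo} and the trace duality between $L_1(\initsp)$ and $\mathcal B(\initsp)$, and to pin down the single place where finite dimensionality plays a real role.

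For the forward density, the first step is simply to note that $\overrightarrow{D}_{t}=(\unit\otimes\mathcal T_{*t})(\omega_{\rho})$ is, by Definition~\ref{def-jamio}, identical to the $\rho$-Choi-Jamio\l kowski operator $\mathcal J_{\rho}(\mathcal T_{*t})$. Theorem~\ref{densities-teo} applied with $\Phi_{*}=\mathcal T_{*t}$ then gives immediately
\[
\tr\big(\overrightarrow{D}_{t}(a\otimes b)\big)=\tr\big(\rho^{\frac{1}{2}}\theta a^{*}\theta \rho^{\frac{1}{2}}\mathcal T_{t}(b)\big)=\overrightarrow{\Omega}_{t}(a\otimes b),
\]
for every simple tensor, extended by bilinearity and density; this step uses no finite dimensionality.

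For the backward density, the idea is to transfer $\mathcal T_{*t}$ from $\omega_{\rho}$ onto $a\otimes b$ using trace duality, obtaining $\tr\big((\mathcal T_{*t}\otimes\unit)(\omega_{\rho})(a\otimes b)\big)=\tr\big(\omega_{\rho}(\mathcal T_{t}(a)\otimes b)\big)$. Then I would apply Theorem~\ref{densities-teo} in the degenerate case $\Phi_{*}=\mathrm{Id}_{*}$ (so that $\mathcal J_{\rho}(\mathrm{Id}_{*})=\omega_{\rho}$) with the simple tensor $\mathcal T_{t}(a)\otimes b$; together with the $*$-preserving property $\mathcal T_{t}(a)^{*}=\mathcal T_{t}(a^{*})$ this yields
\[
\tr\big(\omega_{\rho}(\mathcal T_{t}(a)\otimes b)\big)=\tr\big(\rho^{\frac{1}{2}}\theta\mathcal T_{t}(a^{*})\theta\rho^{\frac{1}{2}}b\big)=\overleftarrow{\Omega}_{t}(a\otimes b).
\]

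The hardest step, and the one that forces the finite-dimensional hypothesis, is the trace-duality identity $\tr\big((\mathcal T_{*t}\otimes\unit)(\omega_{\rho})(a\otimes b)\big)=\tr\big(\omega_{\rho}(\mathcal T_{t}\otimes\unit)(a\otimes b)\big)$. In finite dimension, $\mathcal T_{*t}\otimes\unit$ is automatically bounded on $L_{1}(\initsp\otimes\initsp)$ with Banach adjoint $\mathcal T_{t}\otimes\unit$ on $\mathcal B(\initsp\otimes\initsp)$, and $\omega_{\rho}$ is of finite rank, so the identity is unconditional. In infinite dimension one would need to justify a normal extension of $\mathcal T_{*t}\otimes\mathrm{id}$ to $L_{1}(\initsp\otimes\initsp)$, control the action of that extension term by term on the series representation of $\omega_{\rho}$ (by a dominated convergence argument analogous to that of Proposition~\ref{jota-rho-simple}), and handle the discrepancy between the von Neumann tensor product $\mathcal B(\initsp)\,\overline{\otimes}\,\mathcal B(\initsp)=\mathcal B(\initsp\otimes\initsp)$ and the algebraic tensor product $\mathcal B(\initsp)\otimes\mathcal B(\initsp)$ on which $\overleftarrow{\Omega}_{t}$ is defined. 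The restriction to $\dim\initsp<\infty$ sidesteps all of these issues.
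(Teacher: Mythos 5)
Your argument is correct, but it follows a genuinely different route from the one the paper takes. The paper does not actually prove this statement directly: it quotes it from Fagnola--Rebolledo, observes in the remark following it that items (\ref{state-density-matrix}) and (\ref{reverse-state-density-matrix}) of Theorem \ref{densities-teo} identify $\overrightarrow{\Omega}_t$ and $\overleftarrow{\Omega}_t$ with $\overrightarrow{\omega}_\rho(t)$ and $\overleftarrow{\omega}_\rho(t)$ on simple tensors, and then, in Theorem \ref{Dt-equals-Jt}, establishes the backward identity by a direct basis expansion of $(\mathcal T_{*t}\otimes\unit)(\omega_\rho)$, where finite dimensionality is invoked through the rank-one resummation $\sum_{l}|\theta\rho^{\frac{1}{2}}e_{l}\rangle \langle\theta\rho^{\frac{1}{2}}b e_{l}|= \theta\rho^{\frac{1}{2}} b^{*} \rho^{\frac{1}{2}}\theta$ and the flip relation (\ref{flip1}). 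Your proof of the forward part coincides with the paper's (Definition \ref{def-jamio} plus Theorem \ref{densities-teo}), but for the backward part you avoid the basis computation and the flip operator entirely: you move $\mathcal T_{*t}$ off the first leg by trace duality, specialize Theorem \ref{densities-teo} to $\Phi_*=\mathrm{Id}$ so that $\mathcal J_\rho(\mathrm{Id})=\omega_\rho$, and use $\mathcal T_t(a)^*=\mathcal T_t(a^*)$. This is cleaner and makes the role of the finite-dimensional hypothesis more transparent; what it does not buy you is the further identification $\overleftarrow{D}_t=\mathcal J_\rho(\mathcal T^\Theta_{*t})$, which is the point of the paper's flip-based computation and the reason the authors route the argument through $\flip(\unit\otimes\Phi_*)(\omega_\rho)\flip$. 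One minor remark: your closing paragraph slightly overstates the infinite-dimensional obstructions --- since $\omega_\rho$ is the rank-one projection onto $\sum_i e_i\otimes\rho^{1/2}e_i$, the normal extension of $\mathcal T_{*t}\otimes\mathrm{id}$ and the duality step survive in infinite dimension by standard arguments, which is consistent with the paper's own later observation that the backward density can be identified with $\mathcal J_\rho(\mathcal T^\Theta_{*t})$ for any separable $\initsp$; but within the theorem as stated your localization of the hypothesis is perfectly acceptable.
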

Their Entropy Production Rate is defined in terms of these two densities. 

\begin{definition}

\begin{itemize}
\item[(i)] Fagnola-Rebolledo's relative entropy is defined as
\begin{align*}S\Big(\overrightarrow{\Omega}_t,\overleftarrow{\Omega}_t\Big)=\tr\Big(\overrightarrow{D}_t(\log \overrightarrow{D}_t - \log \overleftarrow{D}_t)\Big).\end{align*}

\item[(ii)] The corresponding quantum entropy production rate is
\begin{align}\label{f-r-entropy-pro}
\lim_{t\to 0^+} \frac{S\Big(\overrightarrow{\Omega}_t,\overleftarrow{\Omega}_t\Big)}{t} \end{align}
\end{itemize}
\end{definition}

Items (\ref{state-density-matrix}) and (\ref{reverse-state-density-matrix}) of Theorem \ref{densities-teo} imply that Fagnola-Rebolledo's forward and backward two-point states $\overrightarrow{\Omega}_{t}$ and $\overleftarrow{\Omega}_{t}$ coincide with $\overrightarrow{\omega}_\rho (t)$ and $\overleftarrow{\omega}_\rho (t)$ respectively, on simple tensors.
 The above identities are extended for every element in $\mathcal B(\initsp)\otimes \mathcal B(\initsp) = \mathcal B(\initsp \otimes \initsp)$ using the density of the span of simple tensors in the strong topology.

\begin{theorem}\label{Dt-equals-Jt}
If $\initsp$ is finite dimensional, then the forward and backward densities $\overrightarrow{D}_t$, $\overleftarrow{D}_t$ coincide with our densities $\mathcal J_\rho (\mathcal T_{*t})$, $\mathcal J_\rho (\mathcal T^\Theta_{*t})$, respectively.
\end{theorem}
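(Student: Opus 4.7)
The forward identity $\overrightarrow{D}_t = \mathcal J_\rho(\mathcal T_{*t})$ is essentially tautological: by Definition \ref{def-jamio} our $\rho$-Choi-Jamio{\l}kowski operator is $\mathcal J_\rho(\mathcal T_{*t}) = (\unit \otimes \mathcal T_{*t})(\omega_\rho)$, which is exactly the expression that Fagnola and Rebolledo use to define $\overrightarrow{D}_t$. So the first half of the statement requires no work beyond comparing formulas.

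For the backward identity, my plan is to combine the flip identity from Lemma \ref{lemma-flip}(2) with the forward identity. That lemma gives
\[
\mathcal J_\rho(\mathcal T^\Theta_{*t}) \;=\; \flip\, \mathcal J_\rho(\mathcal T_{*t})\, \flip \;=\; \flip\,(\unit\otimes \mathcal T_{*t})(\omega_\rho)\,\flip ,
\]
so the goal reduces to proving the purely algebraic identity
\[
\flip\,(\unit\otimes \Phi_*)(\omega_\rho)\,\flip \;=\; (\Phi_*\otimes\unit)(\omega_\rho)
\]
for any CP map $\Phi_*$, applied with $\Phi_* = \mathcal T_{*t}$. Once that is in hand, the right-hand side equals $\overleftarrow{D}_t$ by the quoted Fagnola-Rebolledo theorem.

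To establish this reduced identity, I will invoke Theorem \ref{invariance-entropy} to choose the orthonormal basis $\{e_i\}_i$ that diagonalizes $\rho$, so that $\rho^{\frac{1}{2}}|e_i\rangle\langle e_j|\rho^{\frac{1}{2}} = \rho_i^{\frac{1}{2}}\rho_j^{\frac{1}{2}}|e_i\rangle\langle e_j|$ and hence
\[
\omega_\rho = \sum_{i,j}\rho_i^{\frac{1}{2}}\rho_j^{\frac{1}{2}}\,|e_i\rangle\langle e_j|\otimes |e_i\rangle\langle e_j|.
\]
The key observation is that each summand has the symmetric form $A\otimes A$, so $\flip\,\omega_\rho\,\flip = \omega_\rho$. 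Applying $\unit\otimes \Phi_*$ and then conjugating term-by-term with $\flip(X\otimes Y)\flip = Y\otimes X$ swaps the two tensor slots, producing $\sum_{i,j}\rho_i^{\frac{1}{2}}\rho_j^{\frac{1}{2}}\,\Phi_*(|e_i\rangle\langle e_j|)\otimes |e_i\rangle\langle e_j| = (\Phi_*\otimes\unit)(\omega_\rho)$, as required. In finite dimension all sums are finite and no convergence issues arise.

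The only conceptual subtlety is the legitimacy of passing to the diagonalizing basis of $\rho$: the theorem's conclusion compares two densities, and Fagnola-Rebolledo's $\overleftarrow{D}_t$ is itself tied to that basis, while our $\mathcal J_\rho(\mathcal T^\Theta_{*t})$ is basis-dependent through $\omega_\rho$ and $\theta$. So the natural reading of the theorem is that both sides are computed in the same (diagonalizing) basis, and in that basis the algebraic identity above makes the proof transparent. No real obstacle is anticipated; the argument is a short symbolic manipulation once the symmetry $\flip\,\omega_\rho\,\flip = \omega_\rho$ in the diagonalizing basis is noticed.
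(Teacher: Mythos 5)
Your proof is correct, and at the structural level it follows the same route as the paper: the forward identity is immediate from Definition \ref{def-jamio}, and the backward identity is reduced via Lemma \ref{lemma-flip}(2) to the single algebraic identity $(\Phi_*\otimes\unit)(\omega_\rho)=\flip\,(\unit\otimes\Phi_*)(\omega_\rho)\,\flip$, which is precisely the paper's equation (\ref{flip2}). Where you genuinely diverge is in how that identity is proved. The paper tests both sides against simple tensors $a\otimes b$ under the trace and runs a computation whose crux is the operator identity $\sum_l|\theta\rho^{\frac{1}{2}}e_l\rangle\langle\theta\rho^{\frac{1}{2}}b e_l|=\theta\rho^{\frac{1}{2}}b^*\rho^{\frac{1}{2}}\theta$ (valid only in finite dimension), combined with property (\ref{states-form}). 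You instead pass to the diagonalizing basis of $\rho$, where $\omega_\rho=\sum_{i,j}\rho_i^{\frac{1}{2}}\rho_j^{\frac{1}{2}}\,|e_i\rangle\langle e_j|\otimes|e_i\rangle\langle e_j|$ is a finite sum of terms of the symmetric form $A\otimes A$, so that conjugation by $\flip$ visibly exchanges the two tensor slots after $\unit\otimes\Phi_*$ is applied. This is shorter, makes the role of finite dimensionality (finiteness of the sums) transparent, and avoids the paper's auxiliary operator identity; the paper's trace computation, by contrast, stays in the basis-free language of the two-point states and exhibits exactly which step fails in infinite dimension. One small caveat: Theorem \ref{invariance-entropy} is about invariance of the \emph{relative entropy} under change of basis and does not by itself license replacing the basis in an operator-level identity such as this one; the correct justification is the one you give in your closing paragraph, namely that the Fagnola--Rebolledo densities are themselves defined relative to the diagonalizing basis of $\rho$, so both sides of the asserted equality are to be computed in that basis (the paper makes the same choice implicitly when it uses $\theta\rho^{\frac{1}{2}}=\rho^{\frac{1}{2}}\theta$).
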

\begin{proof} Due to (\ref{flip1}), it suffices to prove that if $\initsp$ is finite dimensional, then 
\begin{align} 
(\Phi_* \otimes \unit) (\omega_\rho)=\flip (\unit \otimes \Phi_* )(\omega_\rho) \flip. \label{flip2}
\end{align}

Using an analogous version of (\ref{identity-J}) for $(\Phi_* \otimes \unit)$, we get for simple tensors 
\begin{align*}
\tr{\big( (\Phi_{*}\otimes \unit )(\omega_\rho) a \otimes b \big)} &= 
\sum_{k,l,i,j} \Big\langle e_{k}\otimes e_{l}, \Phi_{*}(|e_{i}\rangle \langle e_{j}|)a e_{k} \otimes \langle e_{j}, \rho^{\frac{1}{2}}b e_{l}\rangle\rho^{\frac{1}{2}} e_{i} \Big\rangle \\ & = 
\sum_{k,l,i,j} \langle e_{j}, \rho^{\frac{1}{2}}b e_{l}\rangle\big\langle e_{k}, \Phi_{*}(|e_{i}\rangle \langle e_{j}|)a e_{k}\big\rangle \langle \rho^{\frac{1}{2}}, e_{i}\rangle \\
&=\sum_{k,l} \Big\langle e_{k}, \Phi_{*}\Big(|\theta\sum_{i}\langle e_{i}, \rho^{\frac{1}{2}} e_{l}\rangle e_{i} \big\rangle\big\langle \theta\sum_{j}\langle e_{j}, \rho^{\frac{1}{2}}b e_{l}\rangle e_{j}| a e_{k}\Big\rangle \\ 
&= \sum_{k,l} \big\langle e_{k}, \Phi_{*}(|\theta\rho^{\frac{1}{2}}e_{l}\rangle \langle \theta \rho^{\frac{1}{2}}b e_{l}|a e_{k}\big\rangle \\ 
&=  \sum_{l} \tr\big(\Phi_{*}(\theta\rho^{\frac{1}{2}}e_{l}\rangle \langle\theta\rho^{\frac{1}{2}}b e_{l}|) a \big) \\ 
&= \tr\big(\rho^{\frac{1}{2}} \theta b^{*}\theta \rho^{\frac{1}{2}}\Phi(a)\big) =  \overrightarrow{\omega}_\rho (\Phi_*)(b\otimes a)\\
&=\tr{\big( \flip(\unit \otimes \Phi_{* })(\omega_\rho)\flip a \otimes b \big)},
\end{align*} where we have used the identity $\sum_{l}|\theta\rho^{\frac{1}{2}}e_{l}\rangle \langle\theta\rho^{\frac{1}{2}}b e_{l}|= \theta\rho^{\frac{1}{2}} b^{*} \rho^{\frac{1}{2}}\theta$, that holds true if and only if $\initsp$ is finite dimensional, as well as the known property (\ref{states-form}) of the states $\overrightarrow{\omega}_\rho,\overleftarrow{\omega}_\rho$. By density and linearity (\ref{flip2}) follows.
\end{proof}

Notice that our approach yields a proof of the fact that the backward state's density is $\mathcal J_\rho (\mathcal T^\Theta_{*t})$ for any initial separable Hilbert space $\initsp$.

\section{Example}\label{example}
Although in our previous work \cite{b-q}, we computed the QEPR for circulant and block circulant QMS, it is worth to compute it again using the formula (\ref{epr-formula}). We consider only block circulant QMS and assume that all vaues of the probability distribution $\alpha:{\mathbb Z}_{p}\times {\mathbb Z}_{q}\mapsto [0,1]$ are non-zero. 

Using the notations and properties in Subsection \ref{circ-qms} we have. 

\begin{lemma} Any circulant QMS is parity preserving. \end{lemma}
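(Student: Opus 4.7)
The plan is to verify, for each fixed $t \geq 0$, that the map $\mathcal{T}_t$ commutes with the reversing operation $\Theta$, which by the definition of a parity preserving QMS is precisely what must be shown. Since circulant QMS come equipped with the explicit representation of Theorem \ref{prop-circ-qms}(i),
$$\mathcal{T}_t(x) = \frac{1}{pq}\sum_{m,n}\Phi_{m,n}(t)\,(J_p^m\otimes J_q^n)\,x\,(J_p^m\otimes J_q^n)^{*},$$
it is natural to work directly from this formula rather than with the generator.

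First, I would record the fact that the anti-unitary operator $\theta$ is the conjugation with respect to the canonical basis $\{e_i\otimes e_j\}_{i,j}$ of $\mathbb{C}^p\otimes\mathbb{C}^q$, and that the primary permutation matrices $J_p=\sum_i|e_i\rangle\langle e_{i+1}|$ and $J_q$ have $0$--$1$ (hence real) matrix entries in this basis. A direct application of Proposition \ref{theta-proy}(i), extended by linearity and the tensor product structure, then yields
$$\theta\,(J_p^m\otimes J_q^n)\,\theta = J_p^m\otimes J_q^n, \qquad \theta\,(J_p^m\otimes J_q^n)^{*}\,\theta = (J_p^m\otimes J_q^n)^{*},$$
for every $m, n$. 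Thus each Kraus operator appearing in $\mathcal{T}_t$ is individually invariant under $\theta$-conjugation.

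Second, I would compute $\Theta(\mathcal{T}_t(x)) = \theta\,\mathcal{T}_t(x)^{*}\,\theta$ by inserting the explicit expression: taking the adjoint inside the sum produces a factor $\overline{\Phi_{m,n}(t)}$, and then moving $\theta$ past this scalar uses the antilinearity of $\theta$ to reconjugate it back to $\Phi_{m,n}(t)$. Combining this with the $\theta$-invariance of the Kraus operators obtained above and the identity $\theta x^{*}\theta = \Theta(x)$, the whole expression collapses to
$$\frac{1}{pq}\sum_{m,n}\Phi_{m,n}(t)\,(J_p^m\otimes J_q^n)\,\Theta(x)\,(J_p^m\otimes J_q^n)^{*} = \mathcal{T}_t(\Theta(x)).$$

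The only delicate point is the bookkeeping of the two conjugations: one from the adjoint in the definition of $\Theta$ and one from the antilinearity of $\theta$ as it is pushed past the possibly complex coefficients $\Phi_{m,n}(t)$. These two conjugations have to cancel exactly, and checking that they do is the main (mild) obstacle. Once this is verified, $\Theta\circ\mathcal{T}_t = \mathcal{T}_t\circ\Theta$ holds for every $t\geq 0$, so the circulant QMS is parity preserving by definition.
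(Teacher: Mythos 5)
Your proof is correct and follows essentially the same route as the paper: both use the explicit Kraus representation of $\mathcal T_t$ from Theorem \ref{prop-circ-qms}(i) together with Proposition \ref{theta-proy} to commute $\theta_p\otimes\theta_q$ past the real-entried powers $J_p^m\otimes J_q^n$. Your explicit remark that the conjugation coming from the adjoint cancels against the one coming from the antilinearity of $\theta$ when it passes the scalars $\Phi_{m,n}(t)$ is a point the paper glosses over, but it is the same argument.
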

\begin{proof}The reversing operation in this context is given by $\Theta(x)= (\theta_p \otimes \theta_q) x^* (\theta_p \otimes \theta_q)$, where $\theta_s$ denotes the conjugation w.r.t. the canonical basis of $\mathbb C^s$, $s=p,q$ and $x\in \mathbb C^p \otimes \mathbb C^q$. By \textit{(i)} of Theorem \ref{prop-circ-qms} and Proposition \ref{theta-proy},
\begin{align*}&(\theta_p\otimes \theta_q) \mathcal T_{*t}(x)^* (\theta_p \otimes \theta_q) =(\theta_p \otimes \theta_q) \sum_{m,n} \Phi_{m,n}(t)(J_p \otimes J_q) x^* (J_p \otimes J_q)^*(\theta_p \otimes \theta_q)\\
&= \sum_{m,n}\Phi_{m,n}(t) (J_p \otimes J_q)  (\theta_p \otimes \theta_q) x^* (\theta_p \otimes \theta_q)(J_p \otimes J_q)^*. \end{align*}  \end{proof}

Since the semigroup is parity preserving, according to Corollary \ref{cor-parity-pres} in the previous section, we can use the KMS adjoint to compute the QEPR.

\begin{theorem}The Quantum Entropy Production Rate of a circulant QMS is
\begin{align*}e_p({\mathcal T_*}, \rho)=\frac{1}{2}  \sum_{m,n}\Big( \alpha(m,n)-\alpha(p-m,q-n)\Big) \, log\frac{\alpha(m,n)}{\alpha(p-m,q-n)}.\end{align*}
\end{theorem}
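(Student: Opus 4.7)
The plan is to apply formula (\ref{epr-formula}) using the fact that, for the class of circulant QMS considered, both $\rho$-Choi-Jamio\l kowski states are simultaneously diagonal in the common orthonormal basis $\{u_{mn}\}$ of Theorem \ref{prop-circ-qms}(ii). Since the preceding lemma shows the circulant QMS is parity preserving, Corollary \ref{cor-parity-pres} (whose proof establishes $\mathcal T^\Theta=\mathcal T'$) lets one replace the $\Theta$-KMS adjoint by the ordinary KMS adjoint. Because $\rho=\tfrac{1}{pq}\unit$ is tracial, the KMS duality reduces to trace duality, and a short calculation using $(J_p^m\otimes J_q^n)^*=J_p^{p-m}\otimes J_q^{q-n}$ followed by a change of summation indices shows that $\mathcal T'_{*}$ is again a circulant QMS in the sense of (\ref{cycle-rep-phi}), with the probability weight $\alpha$ replaced by $\tilde\alpha(m,n):=\alpha(p-m,q-n)$.

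A direct orthogonality check from the definition of $u_{mn}$ shows that $\{u_{mn}\}$ is orthonormal in $\initsp\otimes\initsp$. Applying Theorem \ref{prop-circ-qms}(ii) to both $\mathcal T_*$ and $\mathcal T^\Theta_*$ therefore yields
\begin{align*}
\mathcal J_\rho(\mathcal T_{*t})=\tfrac{1}{pq}\sum_{m,n}\Phi_{m,n}(t)|u_{mn}\rangle\langle u_{mn}|,\qquad \mathcal J_\rho(\mathcal T^\Theta_{*t})=\tfrac{1}{pq}\sum_{m,n}\tilde\Phi_{m,n}(t)|u_{mn}\rangle\langle u_{mn}|,
\end{align*}
with $\tilde\Phi_{m,n}(t)$ the analogue of $\Phi_{m,n}(t)$ built from $\tilde\alpha$. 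Since the two states are diagonal in the same basis, so are their logarithms, and the trace in (\ref{epr-formula}) collapses to a finite sum of eigenvalue products:
\begin{align*}
e_p(\mathcal T_*,\rho)=\frac{1}{2pq}\sum_{m,n}\bigl(\Phi'_{m,n}(0)-\tilde\Phi'_{m,n}(0)\bigr)\lim_{t\to 0^+}\log\frac{\Phi_{m,n}(t)}{\tilde\Phi_{m,n}(t)}.
\end{align*}

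The remaining step is to read off the small-$t$ behaviour from Theorem \ref{prop-circ-matrix}: starting from $\Phi_{m,n}(t)=\sum_{k,l}\omega_p^{mk}\omega_q^{nl}e^{t\lambda_{kl}}$ and $\lambda_{kl}=-1+\sum_{i,j}\alpha(i,j)\bar\omega_p^{ik}\bar\omega_q^{jl}$, orthogonality of characters gives $\Phi_{m,n}(0)=pq\,\delta_{m,0}\delta_{n,0}$ and, for $(m,n)\neq(0,0)$, $\Phi'_{m,n}(0)=pq\,\alpha(m,n)$; analogously $\tilde\Phi'_{m,n}(0)=pq\,\alpha(p-m,q-n)$. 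Hence for $(m,n)\neq(0,0)$ the ratio $\Phi_{m,n}(t)/\tilde\Phi_{m,n}(t)$ tends to $\alpha(m,n)/\alpha(p-m,q-n)$ as $t\downarrow 0$, while the $(0,0)$ contribution vanishes because $\Phi'_{0,0}(0)=\tilde\Phi'_{0,0}(0)=-pq$. Substituting yields exactly the claimed formula.

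The only delicate point is the interchange of the $t\to 0^+$ limit with the (finite) sum and the logarithm. The hypothesis that every $\alpha(m,n)$ is strictly positive ensures that $\Phi_{m,n}(t)$ and $\tilde\Phi_{m,n}(t)$ vanish linearly with strictly positive slopes for $(m,n)\neq(0,0)$, so the ratios have finite, strictly positive limits; dominated convergence on the finite index set is then immediate. The remaining work is purely index bookkeeping, taking care with the sign conventions in the reindexing $(m,n)\mapsto(p-m,q-n)$ imposed by passing to the KMS adjoint.
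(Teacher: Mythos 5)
Your proof is correct and follows essentially the same route as the paper: reduce to the KMS adjoint via parity preservation and Corollary \ref{cor-parity-pres}, diagonalize both $\rho$-Choi-Jamio\l kowski states in the common basis $\{u_{mn}\}$ of Theorem \ref{prop-circ-qms}, and evaluate the two limits in formula (\ref{epr-formula}) from the small-$t$ behaviour of $\Phi_{m,n}(t)$. The only (harmless) variation is that you identify the adjoint semigroup as the circulant one with weight $\alpha(p-m,q-n)$ directly from trace duality and index reversal, whereas the paper reaches the same diagonal coefficients $\Phi_{p-m,q-n}(t)$ through the flip-operator identity of Lemma \ref{lemma-flip}.
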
 

\begin{proof} We need to show the existence of the following limit 
\begin{align*}e_p(\mathcal T_{*},\rho)
&= \frac{1}{t}\tr\Big(\big(\mathcal J_\rho(\mathcal L_{*})-\mathcal J_\rho(\mathcal L^\prime_{*})\big)\lim_{t\to 0^+} \big(\log\mathcal J_\rho(\mathcal T_{*t})-\log\mathcal J_\rho(\mathcal T^\prime_{*t})\big)\Big).\end{align*}

For the first factor inside the trace, recall that $\mathcal J_\rho(\mathcal T_{*t})$ and $\mathcal J_\rho(\mathcal T^\prime_{*t})$ are diagonal in the basis $\{u_{mn}\}_{m,n}$ in Theorem \ref{prop-circ-qms}. In this basis, it is immediate that $\mathcal J_\rho (\unit)=|u_{00}\rangle \langle u_{00}|$. If $\mathcal J_\rho (\mathcal L_*):=\displaystyle \lim_{t\to 0^+} \frac{\mathcal J_\rho (\mathcal T_{*t}) - \mathcal J_\rho (\unit)}{t}$ and by $\delta_{mn}$ we denote the Kronecker's delta $\delta_{mn}=1$ if $m=n$ and zero otherwise. We obtain that 
\begin{eqnarray}\label{J-star-L}
\begin{aligned} 
\mathcal J_\rho (\mathcal L_*)&= \lim_{t\to 0^+} \frac{1}{pq} \sum_{m,n} \frac{  \Phi_{m,n} (t) - \delta_{mn}} {t} |u_{mn}\rangle \langle u_{mn}| \\
&=\sum_{m,n\neq 0} \alpha(m,n) |u_{mn} \rangle \langle u_{mn}| - |u_{00} \rangle \langle u_{00}|, 
\end{aligned}
\end{eqnarray} recalling that the limits inside the finite sum have been already computed in \cite{b-q}.  

Since circulant semigroups are parity preserving, Lemma \ref{lemma-flip} implies that 
\[{\mathcal J}_{\rho}({\mathcal T}'_{*t})={\mathcal J}_{\rho}({\mathcal T}^{\Theta}_{*t})=(F\otimes F){\mathcal J}_{\rho}({\mathcal T}_{*t})(F\otimes F),\] therefore

\begin{eqnarray}
\begin{aligned}
{\mathcal J}_{\rho}({\mathcal T}'_{*t}) &= \frac{1}{pq}\sum_{m,n} \Phi_{m,n}(t) (F\otimes F) |u_{mn}\rangle\langle u_{mn}| (F\otimes F) \\ &= \frac{1}{pq} \sum_{m,n} \Phi_{n,m}(t) |u_{mn}\rangle\langle u_{mn}| =  \frac{1}{pq} \sum_{m,n} \Phi_{p-m,q-n}(t) |u_{mn}\rangle\langle u_{mn}|, 
\end{aligned}
\end{eqnarray} where we have used that $(F\otimes F)u_{mn} = u_{nm}$ and $\Phi_{nm}(t)=\Phi_{p-m, q-n}(t)$ are the matrix elements of the adjoint matrix $e^{tQ^{*}}$ of $e^{t Q}$. Now, the limit defining ${\mathcal J}_{\rho}({\mathcal L}')$ is computed as in (\ref{J-star-L}), giving 

\begin{eqnarray}
\begin{aligned}{\mathcal J}_{\rho}({\mathcal L}'_{*})&= \lim_{t\to 0^+} \frac{1}{pq} \sum_{m,n} \frac{  \Phi_{p-m,q-n} (t) - \delta_{mn}} {t} |u_{mn}\rangle \langle u_{mn}| \\
&=\sum_{m,n\neq 0} \alpha(p-m,q-n) |u_{mn} \rangle \langle u_{mn}| - |u_{00} \rangle \langle u_{00}|.
\end{aligned}
\end{eqnarray}
For the second factor, taking into account that $\Phi_{0,0}(t)=\Phi_{p,q}(t)$,
\begin{align*}&\lim_{t\to 0^+} \Big(\log \mathcal J_\rho (\mathcal T_{*t}) -\log \mathcal J_\rho (\mathcal T^\prime_{*t})\Big) \\ &=\lim_{t\to 0^+} \sum_{m,n} \big(\log \Phi_{m,n}(t) - \log \Phi_{p-m,p-n}(t)\big) |u_{mn} \rangle \langle u_{mn}|\\
&=\lim_{t \to 0^+}\sum_{m,n\neq 0} \log \frac{\frac{1}{t}\Phi_{m,n}(t)}{\frac{1}{t}\Phi_{p-m,q-n}(t)} |u_{mn}\rangle \langle u_{mn}| \\ &= \sum_{m,n\neq0}\big( \log \alpha(m,n) - \log\alpha(p-m,q-n) \big) |u_{mn} \rangle \langle u_{mn}|.\end{align*} 

Therefore we have that 

\begin{eqnarray}
\begin{aligned}
& \big(\mathcal J_\rho(\mathcal L_{*})-\mathcal J_\rho(\mathcal L^\prime_{*})\big)\lim_{t\to 0^+} \big(\log\mathcal J_\rho(\mathcal T_{*t})-\log\mathcal J_\rho(\mathcal T^\prime_{*t})\big) \\ &= \sum_{m,n} \big(\alpha(m,n)-\alpha(p-m, q-n)\big)\Big( \log \alpha(m,n) - \log\alpha(p-m,q-n) \Big) |u_{mn} \rangle \langle u_{mn}|.
\end{aligned}
\end{eqnarray} This concludes the proof.

\end{proof}

The reason that makes the last theorem really interesting is the fact that it shows that we can straigthforward call the limits inside (\ref{epr-formula}), $\mathcal J_\rho (\mathcal L_*)$ and $\mathcal J_\rho (\mathcal L^\prime_*)$, for circulant QMS. 



\begin{thebibliography}{xxx}



\bibitem{a-f-q} 
L. Accardi, F. Fagnola and R. Quezada, Weighted detailed balance for non-equilibrium stationary states, in \textit{Proceedings of the International Conference in Memoriam of Shuichi Tasaki, Bussei Kenkyu} Vol. 97 (3) (YITP, Kyoto Univ. 2011) 318--356.

\bibitem{acc-mo}
L. Accardi and A. Mohari, Time reflected Markov processes, \textit{Inf. Dim. Anal. Quant. Prob. Relat. Top.} \textbf{2}, (1999), 397--425. 

\bibitem{agarwal}
G.S. Agarwal, Open quantum Markovian systems and the microreversibility, \textit{Z. Physik}, \textbf{258} (1973) 409--422.

\bibitem{alicki} 
R. Alicki, On the detailed balance condition for non-Hamiltonian systems. \textit{Rep. Math. Phys.} \textbf{10} (1976) 249--258.

\bibitem{b-q}
J. R. Bola$\tilde{\textrm{n}}$os-Serv\'in and R. Quezada, A cycle representation and entropy production for circulant quantum Markov semigroups, \textit{Inf. Dim. Anal. Quant. Prob. Relat. Top.} \textbf{16}, (2013) 1350016 (23 pages) 

\bibitem{fabio1} 
F. Cipriani, Dirichlet forms and Markovian semigroups on standard forms of von Neumann algebras.\textit{J. Funct. Anal.} \textbf{147} (1997) 259--300.
\bibitem{fabio2}
F. Cipriani, Dirichlet forms on noncommutative spaces, in \textit{Quantum Potential Theory}, Lecture Notes in Math., 1954, Berlin-Heidelberg-New York: Springer, 2008, pp. 161--276

\bibitem{gol-lind}
S. Goldstein and J.M. Lindsay: KMS symmetric semigroups. \textit{Math. Z.} \textbf{219} (1995), 591--608.

\bibitem{franco3}
F. Fagnola  and V. Umanit\`a, Generators of Detailed Balance Quantum Markov Semigroups, \textit{Inf. Dim. Anal.
Quant. Prob. Relat. Top.} \textbf{10}, (2007) 335--363.

\bibitem{franco2}
F. Fagnola and  V. Umanit\`a,  Generators of KMS Symmetric Markov Semigroups on ${\mathcal B}(\initsp)$ Symmetry
and Quantum Detailed Balance, \textit{Commun. Math. Phys.} \textbf{298}, (2010) 523--547.


\bibitem{fag-reb}
F. Fagnola and R. Rebolledo, From classical to quantum entropy production, in \textit{ Proceedings of the 29th Conference on Quantum Probability and Related Topics, QP-PQ Quantum Probability and White Noise Analysis} Vol. 25, (2010) 245-261. 

\bibitem{fag-reb2}
F. Fagnola and R. Rebolledo, Entropy production for quantum Markov semigroups, arXiv:1212.1366v1 [math-phy] 6 Dec. 2012.  

\bibitem{fg}
A. Frigerio and V. Gorini, Markov dilations and quantum detailed balance, \textit{Commun. Math. Phys.} \textbf{93} (1984),   517--532.

\bibitem{holevo}
A. Holevo, The Choi-Jamio\l kowski forms of Gaussian quantum channels, \textit{J. Math. Phys.} \textbf{52}, (2011), 042202.

\bibitem{Jamiolkowski}
A. Jamio{\l}kowski, Linear transformations which preserve trace and positive definitydness of operators, \textit{Reports on Mathematical Physics} \textbf{3} (1972) 275-278. 

\bibitem{kfgv}
A. Kossakowski, A. Frigerio, V. Gorini and M. Verri: Quantum detailed balance and KMS condition, \textit{Commun. Math. Phys.} \textbf{57} (1977), 97--110.

\bibitem{partha}
K.R. Parthasarathy K.R., {\it An Introduction to Quantum 
Stochastic Calculus}, Birkh$\ddot{\textrm{a}}$user-Verlag 1992.

\bibitem{denes}
D. Petz, Conditional expectation in quantum probability, in: \textit{L. Accardi and W. von Waldenfels (eds.),
Quantum Probability and Applications III, Proceedings, Oberwolfach 1987}, LNM 1303 Berlin-Heidelberg-
New York: Springer (1988), 251--260.

\end{thebibliography}
\end{document}